\newtheorem{definition}{Definition}
\newtheorem{assumption}{Assumption}
\newtheorem{remark}{Remark}
\newtheorem{lemma}{Lemma}
\newtheorem{corollary}{Corollary}
\newtheorem{theorem}{Theorem}
\let\theparentequation\theequation
\patchcmd{\theparentequation}{equation}{parentequation}{}{}
\renewenvironment{subequations}[1][]{
  \refstepcounter{equation}%
  \setcounter{parentequation}{\value{equation}}
  \setcounter{equation}{0}
  \def\theequation{\theparentequation\alph{equation}}%
  \let\parentlabel\label
  \ifx\\#1\\\relax\else\label{#1}\fi
  \ignorespaces
}{%
  \setcounter{equation}{\value{parentequation}}
  \ignorespacesafterend
}
\newcommand*{\nextParentEquation}[1][]{
  \refstepcounter{parentequation}
  \setcounter{equation}{0}
  \ifx\\#1\\\relax\else\parentlabel{#1}\fi
}
\begin{document}
%


\title{Fundamentals of RIS-Aided Localization in the Far-Field}
\author{Don-Roberts~Emenonye,
        Harpreet~S.~Dhillon,
        and~R.~Michael~Buehrer
\thanks{D.-R. Emenonye, H. S. Dhillon and R. M.  Buehrer are with Wireless@VT,  Bradley Department of Electrical and Computer Engineering, Virginia Tech,  Blacksburg,
VA, 24061, USA. Email: \{donroberts, hdhillon, rbuehrer\}@vt.edu. The support of the US National Science Foundation (Grants ECCS-2030215 and CNS-2107276) is gratefully acknowledged. This work was presented in IEEE ICC Workshop 2023, Rome, Italy \cite{emenonye2023_ICC_conf_workshop}.
}
}
\maketitle
\IEEEpeerreviewmaketitle
\begin{abstract}

This paper develops fundamental bounds for localization in orthogonal frequency division multiplexing (OFDM) systems aided by reconfigurable intelligent surfaces (RISs). Specifically, we start from the assumption that the position and orientation of a RIS can be viewed as prior information for RIS-aided localization in wireless systems and derive Bayesian bounds for the localization of a user equipment (UE). To do this, we first derive the Bayesian Fisher information matrix (FIM) for channel parameters to derive the Bayesian localization bounds. Then, to focus on the geometric channel parameters, we derive the equivalent Fisher information matrix (EFIM) and show that it has a definite structure. Subsequently, we show through the information loss associated with the EFIM that when the RIS reflection coefficients remain constant across all OFDM symbols, and there is no prior information about the nuisance parameters, the corresponding submatrix in the EFIM related to the RIS angle parameters is a zero matrix. As a result of the EFIM being a zero matrix, estimating the RIS-related angle channel parameters is not possible when the RIS reflection coefficients remain constant across all OFDM symbols. This observation is crucial for the estimation of the RIS-related angle parameters. It dictates that to estimate the RIS-related angle parameters, there must be more than one OFDM transmission with differing RIS reflection coefficients. Furthermore, due to this observation, we note that localization of a single antenna UE through the signals received from reflections from a single RIS to the UE is not feasible in the far-field when the RIS reflection coefficients remain constant across all OFDM symbols. We also show that the FIM for the RIS-related channel parameters can be decomposed into 
i) information provided by the receiver, ii) information provided by the transmitter, and iii) information provided by the RIS components. We then transform the Bayesian EFIM for geometric channel parameters to the Bayesian FIM for the UE position and orientation parameters and examine its specific structure under a particular class of RIS reflection coefficients.

\end{abstract}
\begin{IEEEkeywords}
Reconfigurable intelligent surfaces, 6G, Bayesian Fisher information matrix, far-field
localization, equivalent Fisher information matrix.
\end{IEEEkeywords}
\section{Introduction}

\IEEEPARstart{W}{ireless} communications systems are conceptualized, designed, and optimized under the assumption that the propagation channels are random and uncontrollable. However, the emerging idea of a reconfigurable intelligent surface (RIS), also known as an intelligent reflecting surface, has challenged this fundamental assumption. An RIS is a novel concept in wireless communications where existing artificial structures such as walls and ceilings of buildings will be equipped with many tightly packed subwavelength-sized  reflecting meta-surfaces. The overall RIS is planar, while each of the metasurfaces in the RIS is software-controlled and are designed to perform a desired transformation on the incoming signal, thereby providing some control over the propagation environment. This control over the wireless propagation environment has been proposed as a means to aid wireless communication systems. It has been shown that with perfect channel state information, the presence of RISs can provide significant gains in energy and bandwidth efficiency when the transmit beamformer and the reflection coefficients at the RIS are jointly optimized \cite{9366805,9779586,8741198,8982186,8811733}. 

Although the initial applications of RIS to wireless systems were limited to RIS-aided communication system designs, it has recently gained attention as a means to improve localization accuracy. The basic idea is to treat RISs as virtual anchors if their positions are known, which is a reasonable assumption for stationary RISs. This assumption is conceptually similar to assuming perfect knowledge of the locations of actual anchors, such as the macro base stations (BSs). Hence, there is the potential of measuring the times of arrival and other geometric channel parameters valuable for positioning with respect to different RISs.  However, in order to fully harness the power of RIS-aided localization, we must first understand its  fundamental limits. These fundamental limits are unknown for the general case in the far-field, which is the focus of this paper.  Thus, in the paper, we derive the Bayesian CRLB and examine its structural properties, which leads to critical insights into RIS-aided localization in the far-field case, including the impact of nuisance parameters, the effects of RIS reflection coefficients,   and the effects of RIS location uncertainty on localization performance. 
\subsection{Related Works}
The following three research directions are of interest to this paper: i)  localization using RISs, ii) localization with large antenna arrays, and iii) Bayesian limits of localization networks and effects of anchor uncertainty. The relevant prior work from these three directions is discussed next. 
\subsubsection{Localization using RISs} It has been shown that due to a large number of geometric channel parameters in RIS-aided wireless systems, the position error bound (PEB) and the orientation error bound (OEB) of a user equipment (UE) can be significantly reduced \cite{keykhosravi2021multi,9528041,fascista2021ris,wymeersch2020beyond}. A challenge with exploiting multiple RISs in localization is identifying which received signals are associated with specific RISs. In \cite{keykhosravi2021multi}, Hadamard matrices have been proposed as a possible solution to this problem as it allows the receiver to extract channel parameters associated with each RIS. In \cite{9528041}, the focus is on the setup where an RIS is attached to each intended receiver and a joint phase design scheme is used to separate the channel parameters associated with each user resulting in submeter UE localization. In \cite{9500281,fascista2021ris,wymeersch2020beyond}, the focus is on using an RIS to enhance the LOS link through coherent combining, thus improving both 3D localization and synchronization. In \cite{9513781}, an RIS is employed to multi-input-multi-output (MIMO) radar to provide the location of multiple targets.  RIS-aided localization without a controlling BS has also been considered \cite{9726785,keykhosravi2022ris}. In \cite{9726785}, two-step positioning is achieved  without a controlling BS by employing a single receive RF chain at each RIS. In \cite{keykhosravi2022ris}, backscatter modulation is used to enable localization in the absence of a controlling BS. A key limitation of these works is their application-specific CRLB analyzes in which the effect of each scenario is abstracted such that it can be approximated through suitable modifications of the existing CRLB results. As a result, the general structure of the Fisher information matrix (FIM) resulting from deploying multiple RISs has not been rigorously investigated. Hence, the primary objective of this paper is to provide a rigorous Bayesian analysis of the FIM.  Moreover, in those prior works \cite{keykhosravi2021multi,9528041,9500281,fascista2021ris,wymeersch2020beyond,9513781,9726785,keykhosravi2022ris}, the RIS reflection coefficients are usually changed across different Orthogonal frequency-division multiplexing
 (OFDM) symbols {in order} to have enough information to estimate the RIS reflection angles. Through the analysis of the Bayesian FIM, we show that changing RIS reflection coefficients across OFDM symbols is in fact necessary since the RIS reflection angles cannot be estimated when the RIS coefficients remain constant across all OFDM symbols.  

For completeness, we also discuss the relevant literature on  RIS-aided localization in the near-field. When the RISs are large enough such that the receivers are considered to be in their near-field, additional information provided by spherical wavefront can be used for localization as shown in\cite{rahal2021ris,9500663,9625826,9508872}. More specifically, a single RIS is attached to a receiver and used to localize a transmitter in \cite{rahal2021ris}, while multiple RISs are used to provide continuous positioning capability by improving the near-field NLOS accuracy in \cite{9500663}. In \cite{9625826}, backscatter modulation is used to empower each RIS element in a single RIS to act as virtual anchors for the time of arrival-based localization. In \cite{9508872}, the RIS-enhanced bounds for $3$D localization in the near-field are provided for the uplink of a system operating both synchronously and asynchronously.

\subsubsection{Localization with Large Antenna Arrays}
The reflection coefficients of the RISs can be designed such that the channel parameters associated with distinct paths can be separated and used as information for localization (see \cite{keykhosravi2021multi,9528041,9500281}), which is conceptually similar to localization enabled by the spatial and temporal resolution offered by large antenna systems operating with large bandwidths \cite{garcia2017direct,8240645,8515231,8356190,guerra2018single,fascista2021downlink,8755880,li2019massive,9082200}. 
Hence, as in massive MIMO-aided localization where the resulting FIM can be diagonalized due to the presence of a large number of antennas (e.g., see \cite{8356190,8515231}), the FIM for RIS-localization can also be diagonalized, {\em albeit} for a different reason (unitary correlation matrix, as will be discussed in detail shortly). 
As a consequence, the existing results on localization with large arrays have the potential of providing useful insights into RIS-aided localization. With this in mind, we now summarize the most relevant literature from this research direction. In \cite{garcia2017direct}, source localization is considered by collecting and processing time of arrival (ToA) and  angle of arrival (AoA) estimates at various distributed BSs. While the ToA estimates are used to restrict the set of possible source positions to a convex set, the AoA estimates are exploited to provide an estimate of the source position. In \cite{8240645}, $2$D position and orientation bounds are derived along with expectation-maximization-based estimation algorithms that achieve these bounds. In \cite{8515231}, these bounds are used to show that under certain conditions, NLOS components provide more information about position and orientation. These bounds are generalized to the $3$D scenario, and various scaling laws are provided for  both the uplink and downlink in \cite{8356190}. More limits are provided for localization using measurements on the uplink of a massive MIMO system in \cite{guerra2018single}. These additional limits indicate that the CRLB is unique in the limit of the number of receive antennas because each possible transmit position leads to distinct observations at the BS. The case of downlink UE positioning with a single antenna receiver is considered in \cite{fascista2021downlink,8755880}. In \cite{li2019massive}, a single anchor is used to estimate UE  trajectory, and the effect of I/Q imbalance on single anchor positioning is considered in \cite{9082200}.  Note that the precoding matrix in large antenna BSs is analogous to the RIS reflection coefficients. In the literature, it is known that when the precoding matrix remains constant across all OFDM symbols, the angle of departure at the BS can still be estimated \cite{8356190}. This is because the large antenna BSs can create multiple spatial streams through the precoding matrix, and these spatial streams can be detected when the UE has more than one receive antenna. However, if the RIS reflection coefficients remain constant across all OFDM symbols, the RIS angles of incidence and reflection can not be estimated (irrespective of the number of UE receive antennas). This is because the passive RIS performs no processing. Hence, it can not generate multiple spatial streams.
\subsubsection{Bayesian Limits of Localization Networks  and Effects of Anchor Uncertainty} In our work we are concerned with the fundamental limits of RIS-enabled localization. A major factor in this analysis is the uncertainty in the exact position/orientation of the RIS. Bayesian approaches to localization can be used to include this uncertainty. Thus, we briefly review relevant work in this area. The Bayesian philosophy to  estimation presented in \cite{kay1993fundamentals,DBLP:books/wi/Trees02} has been applied to the parameter estimation problem in localization, (see \cite{5571900,5571889,5762798,8421288,6655520,8264804,8827486,zekavat2011handbook} for a small subset). In \cite{5571900}, the received waveforms from various anchors are processed, combined with prior information about UE position, and subsequently used to provide localization bounds. This setting is extended to the case of multiple UEs that communicate with each other in \cite{5571889,5762798}, resulting in  cooperative Bayesian localization bounds. In \cite{8421288}, Bayesian limits are presented for network localization which utilizes both position-related parameters and inertial measurement units. In \cite{6655520}, the possibility of improving localization and tracking systems by exploiting prior UE information is investigated. Authors in \cite{8264804,8827486} develop the concept of soft information (SI) for localization. Instead of providing hard decisions on NLOS/LOS and the position of a UE, SI quantifies these decisions through probability distributions. 
While the Bayesian philosophy has been used to generate a posterior distribution that provides localization estimates in the presence of anchor uncertainty \cite{9044729}, a deterministic approach to parameter estimation  in  the presence of anchor uncertainty has also been considered \cite{6862045,7247270,6684554,9186070}.
\subsection{Contributions}
In this paper, we examine the fundamental limits of RIS-aided localization of a UE with a single BS, assuming that one or more RISs provide reflected signals to the UE in addition to an LOS path. We further assume that the UE and the BS are both in the far-field of the RISs, and there are multiple downlink transmissions of OFDM symbols. For this scenario, the critical contributions of this paper are summarized next:
\subsubsection{Bayesian FIM for Channel Parameters} We derive the FIM for the RIS-related channel parameters and show that the FIM can be decomposed into a sum of the FIMs provided by each OFDM symbol. The FIM provided by each OFDM symbol can be further decomposed into: i) information provided by the receiver, ii) information provided by the transmitter, and iii) information provided by the RIS components.  We show that the information provided by the RISs can be further decomposed into a correlation matrix and an information matrix representing the gains due to the RISs. First, through this decomposition, we show that the structure of the FIM can be significantly controlled through the RIS correlation matrix, and this control allows us to investigate the impact of both nuisance parameters and anchor uncertainty on localization performance. Second, through this decomposition, we observe that with identical RIS reflection coefficients across certain OFDM symbols, the information matrix representing the gains due to the RIS coefficients remain constant across those OFDM symbols. In this case of parallel RIS coefficients, the information matrix produced by the RIS coefficients during the transmission of the additional OFDM symbols does not provide any additional information. However, the different OFDM symbol transmissions increases the signal-to-noise-ratio (SNR). 

\subsubsection{Bayesian Equivalent Fisher Information Matrix (EFIM) for Geometric Channel Parameters Provided by the RIS} 
We quantify the information loss associated with the geometric channel parameters along each RIS path due to the unknown nuisance parameters (channel path gains). This quantification is achieved by deriving the Bayesian EFIM for the geometric channel parameters. With this computed information loss, we show that when the RIS coefficients are parallel in time, and there is no prior information about the nuisance parameters, the corresponding submatrix in the EFIM related to the RIS angle parameters is a zero matrix. As a result of the EFIM being a zero matrix, estimating the RIS-related angle parameters is not possible with parallel RIS coefficients. This result is in contrast to angle of departure estimation at a large antenna BS. In large antenna BSs, when the precoding matrix is parallel in time, all the information presented by the EFIM is not necessarily lost due to a lack of knowledge about the nuisance parameters. This is because the large antenna BSs can create multiple spatial streams through the precoding matrix, and these spatial streams can be detected when the UE has more than one receive antenna. This contribution explicitly entails that localization of a single receive antenna UE with reflections from a single RIS is impossible when there is no LOS, and the RIS coefficients are parallel in time.
\subsubsection{Bayesian FIM/EFIM for UE Position and Orientation}
Through a bijective transformation, the FIM of UE position and orientation is obtained from the EFIM of the geometric channel parameters. When paths are separable, the FIM of the UE position and orientation is a sum of the FIM from each of the RIS paths. While any prior information about the UE appears as an additive term in the EFIM, the prior information about the RIS appears in a less simplistic manner. Finally, through Monte-Carlo simulations, we study the effect of the number of receive antennas, the number of RISs, and the number of RIS elements on the localization performance.
\section{System Model}
We consider the downlink of an RIS-assisted single-cell MIMO system consisting of a single BS with $N_T$ antennas, a UE of interest with $N_R$ antennas, and ${M}_1$ distinct RISs. The ${m}^{\text{th}}$ RIS is assumed to contain $N_L^{[{m}]}$ reflecting elements where ${m} \in \mathcal{M}_1 = \{1, 2, \ldots, {M}_1\}$. We further assume OFDM for this transmission. 
 The BS has an arbitrary but known array geometry with its centroid located at $\mathbf{p}_{BS} \in \mathbb{R}^3$. 
The UE is defined by its position $\mathbf{p}_{} \in \mathbb{R}^3$, orientation $(\theta_0, \phi_0)$, and an arbitrary but known array geometry. We use notations $\theta$ and $\phi$, respectively, with appropriate subscripts and superscripts for all the elevation and azimuth angles. The set of RISs is also defined by their positions $\mathbf{p}^{[{m}]} \in \mathbb{R}^3$ and orientation angles $(\theta_0^{[{m}]}, \phi_0^{[{m}]})$, for ${m} \in \mathcal{M}_1$.
\begin{figure}[htb!]
\centering
{\includegraphics[scale=0.2]{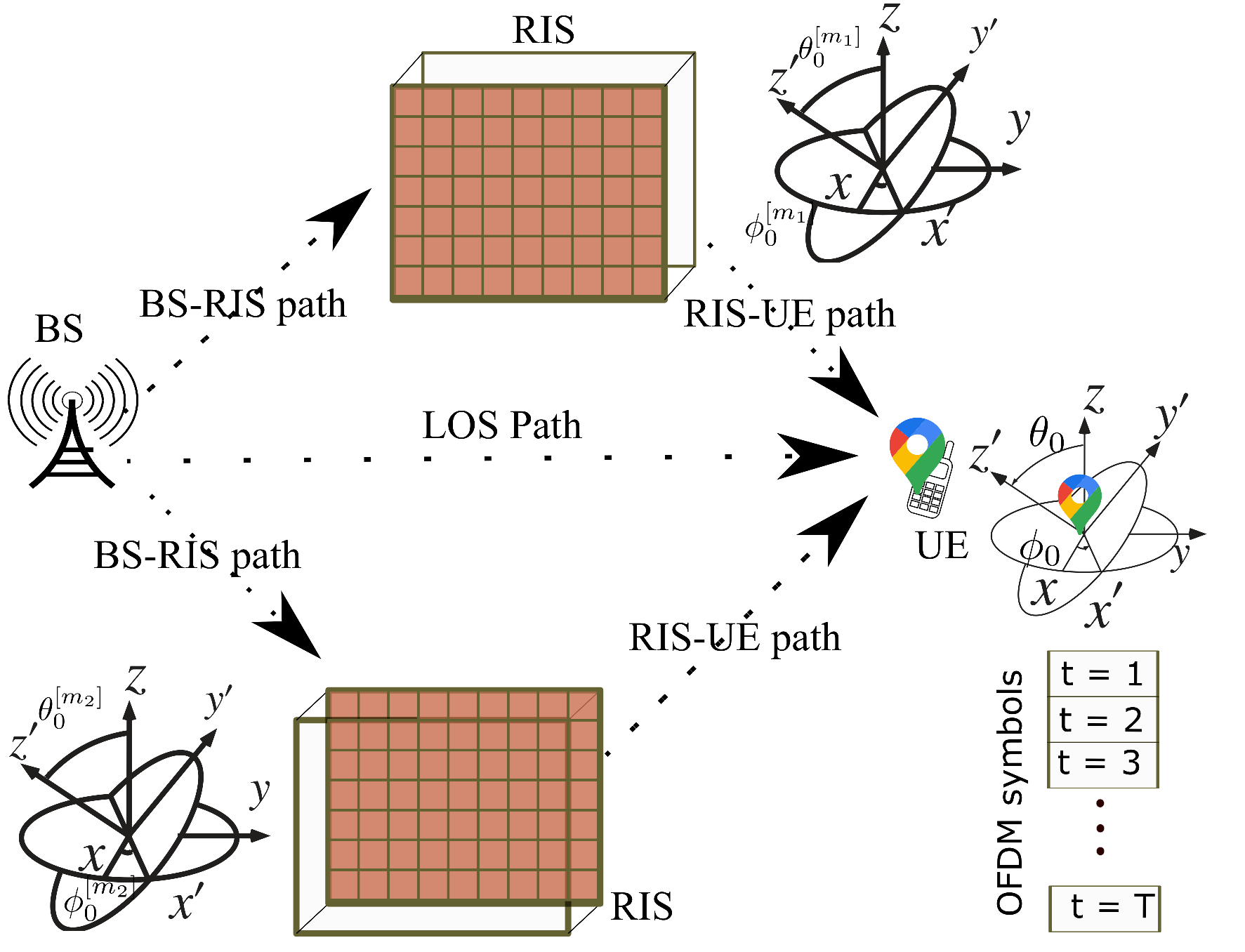}}
\caption{An illustration of the system model.} 
\label{fig:Results/system_model_final}
\end{figure}
The geometry of each RIS is  known but arbitrary.  Without loss of generality, we assume that the arrays can only be rotated in the $x$ and $z$ directions. More specifically, $\phi_{0}$ defines the counter-clockwise rotation of the UE's array around its $z$-axis, and $\theta_{0}$ depicts the clockwise rotation of the UE's array around its $x$-axis. Also, $\phi_{0}^{[m]}$ defines the counter-clockwise rotation of the $m^{\text{th}}$ RIS's array around its $z$-axis, and $\theta_{0}^{[m]}$ depicts the clockwise rotation of the  $m^{\text{th}}$ RIS's array around its $x$-axis.  There are ${M} \geq 3$ paths between the BS and the UE, where the first path $ {m} = 0$ is the LOS path,  a set of paths  $ {m} \in \mathcal{M}_1$ are the virtual LOS paths created by  ${M}_1$ distinct RISs, and there is a set of non-RIS NLOS paths $ \mathcal{M}_2 =  \{{M}_1+1, {M}_1+2, \cdots, {M}_1 + {M}_2 \}$, where ${M} = {M}_1 + {M}_2+1$. The non-RIS NLOS paths (created either by scatterers or reflectors) are usually much weaker compared to the non-RIS LOS paths as well as the virtual LOS paths created by the RISs. Therefore, the non-RIS NLOS paths will not be included in the analysis for notational simplicity.  Further, we partition the set of RISs $\mathcal{M}_1$ into a set with perfectly known position and orientation $\mathcal{M}_1^{a}$ and a set with perturbed position and orientation  $\mathcal{M}_1^{b}$. These sets are mutually exclusive, i.e. $|\mathcal{M}_1| = |\mathcal{M}_1^{a}| + |\mathcal{M}_1^{b}| $.  The number of parameters that needs to be estimated depends on these mutually exclusive sets. 
\subsection{Far-Field Channel Model}
All paths are described in part by their angle of departure (AoD), angle of arrival (AoA), and time of arrival (ToA) as specified by $ (\theta_{\mathrm{t}_{\mathrm{u}}}^{[{m}]}, \phi_{\mathrm{t}_{\mathrm{u}}}^{[{m}]})$,  $(\theta_{\mathrm{r}_{\mathrm{u}}}^{[{m}]}, \phi_{\mathrm{r}_{\mathrm{u}}}^{[{m}]})$, and $\tau^{[{m}]}$, respectively. The array  vector at the transmitter and receiver is specified by 
\begin{equation}
\label{equ:channel_model_array_response}
\begin{aligned}
\mathbf{a}_{\mathrm{t}_{\mathrm{u}}}^{[{m}]}\left(\theta_{\mathrm{t}_{\mathrm{u}}}^{[{m}]}, \phi_{\mathrm{t}_{\mathrm{u}}}^{[{m}]}\right) &\triangleq  e^{-j \Delta_{\mathrm{\mathrm{t}_{\mathrm{u}}}}^{\mathrm{T}} \mathbf{k}\left(\theta_{\mathrm{t}_{\mathrm{u}}}^{[{m}]}, \phi_{\mathrm{t}_{\mathrm{u}}}^{[{m}]}\right)} \in \mathbb{C}^{N_{T}}, 
 \\ 
\mathbf{a}_{\mathrm{r}_{\mathrm{u}}}^{[{m}]}\left(\theta_{\mathrm{r}_{\mathrm{u}}}^{[{m}]}, \phi_{\mathrm{r}_{\mathrm{u}}}^{[{m}]}\right) &\triangleq  e^{-j \Delta_{\mathrm{r}_{\mathrm{u}}}^{\mathrm{T}} \mathbf{k}\left(\theta_{\mathrm{r}_{\mathrm{u}}}^{[{m}]}, \phi_{\mathrm{r}_{\mathrm{u}}}^{[{m}]}\right)} \in \mathbb{C}^{N_{R}},
    \end{aligned}
\end{equation}
respectively, where $\mathbf{k}(\theta, \phi)=\frac{2 \pi}{\lambda}[\cos \phi \sin \theta, \sin \phi \sin \theta, \cos \theta]^{\mathrm{T}}$ is the wavenumber vector, $\lambda$ is the wavelength, $\mathbf{\Delta}_{\mathrm{r}_{\mathrm{u}}} \triangleq\left[\mathbf{u}_{\mathrm{r}_{\mathrm{u}}, 1}, \mathbf{u}_{\mathrm{r}_{\mathrm{u}}, 2}, \ldots, \mathbf{u}_{\mathrm{r}_{\mathrm{u}}, N_{R}}\right], \mathbf{u}_{\mathrm{r}_{\mathrm{u}}, n} \triangleq\left[x_{\mathrm{r}_{\mathrm{u}}, n}, y_{\mathrm{r}_{\mathrm{u}}, n}, z_{\mathrm{r}_{\mathrm{u}}, n}\right]^{\mathrm{T}}$ is a vector of Cartesian coordinates of the $n^{\text {th }}$ receiver element, and $N_{R}$ is the number of receiving antennas. Similarly, parameters $N_{T}, \mathbf{\Delta}_{\mathrm{r}_{\mathrm{u}}}$ and $\mathbf{u}_{\mathrm{\mathrm{t}_{\mathrm{u}}}, n}$ are defined for the transmit vector. 
The array response due to the AoR (angle of reflection) and AoI (angle of incidence) at the ${m}^\text{th}$ RIS can be written as 
\begin{equation}
\label{equ:channel_model_array_response_at_ris}
\begin{aligned}
\begin{aligned}
\mathbf{a}_{\mathrm{t}_{\mathrm{l}}}^{[{m}]}\left(\theta_{\mathrm{t}_{\mathrm{l}}}^{[{m}]}, \phi_{\mathrm{t}_{\mathrm{l}}}^{[{m}]}\right) &\triangleq  e^{-j \Delta_{\mathrm{l,m}}^{\mathrm{T}} \mathbf{k}\left(\theta_{\mathrm{t}_{\mathrm{l}}}^{[{m}]}, \phi_{\mathrm{t}_{\mathrm{l}}}^{[{m}]}\right)} \in \mathbb{C}^{{{N_{L}^{[{m}]}}}}, \\
\mathbf{a}_{\mathrm{r}_{\mathrm{l}}}^{[{m}]}\left(\theta_{\mathrm{r}_{\mathrm{l}}}^{[{m}]}, \phi_{\mathrm{r}_{\mathrm{l}}}^{[{m}]}\right) &\triangleq  e^{-j \Delta_{\mathrm{l,m}}^{\mathrm{T}} \mathbf{k}\left(\theta_{\mathrm{r}_{\mathrm{l}}}^{[{m}]}, \phi_{\mathrm{r}_{\mathrm{l}}}^{[{m}]}\right)} \in \mathbb{C}^{{{N_{L}^{[{m}]}}}},
\end{aligned}
    \end{aligned}
\end{equation}
with $\mathbf{\Delta}_{\mathrm{l,m}} \triangleq\left[\mathbf{u}_{\mathrm{l,m}, 1}, \mathbf{u}_{\mathrm{l,m}, 2}, \ldots, \mathbf{u}_{\mathrm{l,
m}, N_{\mathrm{L}}^{[{m}]}}\right], \text{ where } \mathbf{u}_{\mathrm{l,m}, n} \triangleq\left[x_{\mathrm{l,m}, n}, y_{\mathrm{l,m}, n}, z_{\mathrm{l,m}, n}\right]^{\mathrm{T}}$ is a vector of Cartesian coordinates of the $n^{\text {th }}$ RIS element. 
We also define $\mathbf{X}_{\mathrm{l,m}} = \operatorname{diag}([x_{\mathrm{l,m,1}}, \cdots, x_{\mathrm{l,m,N_{\mathrm{L}}^{[{m}]}}}]^{\mathrm{T}})$. The definitions of $\mathbf{Y}_{\mathrm{l,m}}$ and $\mathbf{Z}_{\mathrm{l,m}}$ are similar.  Assuming synchronization\footnote{To acquire a fundamental understanding of the localization problem, this assumption is common in the literature \cite{8240645,8515231,8356190,fascista2021downlink,8755880,9082200}. Although it appears restrictive, it can be attained using a preliminary two-way synchronization approach or a joint localization       and synchronization approach \cite{9064586,1097833}. These approaches are beyond the scope of this paper.},  the channel at the $n^{\text{th}}$ subcarrier during the $t^{\text{th}}$ OFDM symbol is written as 
$\mathbf{H}_{t}[n]=\sum_{m=0}^{M} \frac{\beta^{[{m}]}}{\sqrt{\rho^{[{m}]}}} \mathbf{H}^{[{m}]}_{t}[n] e^{\frac{-j 2 \pi n \tau^{[{m}]}}{N T_{S}}}  \in \mathbb{C}^{N_{R} \times N_{T}},$
where  $\beta^{[{m}]}$ is the complex channel gain, $\sqrt{\rho^{[{m}]}}$ is the pathloss of the ${m}^{\text{th}}$ path. $\mathbf{H}^{[{m}]}_{t}[n]$  is expressed in (\ref{equ:channel_model_cases_channel}).
\begin{figure*}
\begin{align}
\label{equ:channel_model_cases_channel}
\begin{split}
&\mathbf{H}^{[{m}]}_{t}[n] =     \begin{cases}
      \mathbf{{a}}_{\mathrm{r}_{\mathrm{u}}}^{[{m}]}\left(\theta_{\mathrm{r}_{\mathrm{u}}}^{[{m}]}, \phi_{\mathrm{r}_{\mathrm{u}}}^{[{m}]}\right) {\mathbf{{a}}^{\mathrm{H}}}_{\mathrm{t}_{\mathrm{u}}}^{[{m}]}\left(\theta_{\mathrm{t}_{\mathrm{u}}}^{[{m}]}, \phi_{\mathrm{t}_{\mathrm{u}}}^{[{m}]}\right),  \; \; \;  {m} \in \{ \{0\} \cup  \mathcal{M}_2\},\\
     \mathbf{a}_{\mathrm{r}_{\mathrm{u}}}^{[{m}]}\left(\theta_{\mathrm{r}_{\mathrm{u}}}^{[{m}]}, \phi_{\mathrm{r}_{\mathrm{u}}}^{[{m}]}\right) {\mathbf{{a}}^{\mathrm{H}}}_{\mathrm{t}_{\mathrm{l}}}^{[{m}]}\left(\theta_{\mathrm{t}_{\mathrm{l}}}^{[{m}]}, \phi_{\mathrm{t}_{\mathrm{l}}}^{[{m}]}\right)  \mathbf{\Omega}_{t}^{[{m}]}[n] \mathbf{a}_{\mathrm{r}_{\mathrm{l}}}^{[{m}]}\left(\theta_{\mathrm{r}_{\mathrm{l}}}^{[{m}]}, \phi_{\mathrm{r}_{\mathrm{l}}}^{[{m}]}\right){\mathbf{{a}}^{\mathrm{H}}}_{\mathrm{t}_{\mathrm{u}}}^{[{m}]}\left(\theta_{\mathrm{t}_{\mathrm{u}}}^{[{m}]}, \phi_{\mathrm{t}_{\mathrm{u}}}^{[{m}]}\right), \; \;   {m} \in\mathcal{M}_1.\\
    \end{cases} \\
    \end{split}
    \end{align}
\end{figure*}
\subsection{Transmit Processing}
We consider the transmission of $T$ OFDM symbols each containing  $N$ OFDM subcarriers. The BS precodes a vector of communication symbols $\mathbf{x}^{}[n]=\left[x_{1}[n], \ldots, x_{N_{\mathrm{B}}}[n]\right]^{\mathrm{T}} \in \mathbb{C}^{N_{\mathrm{B}}}$ at the subcarrier level with a directional precoding matrix $\mathbf{F} \in \mathbb{C}^{N_{T} \times N_{\mathrm{B}}} $. After precoding, the symbols are modulated with an $N-$point inverse fast Fourier transform (IFFT). A cyclic prefix of sufficient length $N_{cp}$ is added to the transformed  symbol. In the time domain, this cyclic prefix has length $T_{cp} = N_{cp}T_s$ where $T_s = 1/B$ represents the sampling period. The directional precoding matrix is defined as  $\mathbf{F} \triangleq\left[\mathbf{f}_{1}, \mathbf{f}_{2}, \ldots \mathbf{f}_{N_{\mathrm{B}}}\right]$  where $\mathbf{f}_{\ell}=\frac{1}{\sqrt{N_{\mathrm{B}}}} \mathbf{a}_{\mathrm{t}, b}\left(\theta_{\mathrm{t}, b}^{[l]}, \phi_{\mathrm{t}, b}^{[l]}\right), \quad 1 \leq \ell \leq N_{\mathrm{B}},$ is the beam pointing in the direction $\left(\theta_{\ell}, \phi_{\ell}\right)$ and has the same representation as (\ref{equ:channel_model_array_response}). In order to ensure a power constraint,  we set $\operatorname{Tr}\left(\mathbf{F}^{\mathrm{H}} \mathbf{F}\right)=$
1, and $\mathbb{E}\left\{\mathbf{x}[n] \mathbf{x}^{\mathrm{H}}[n]\right\}=\mathbf{I}_{N_{\mathrm{B}}}$, where $\operatorname{Tr}(\cdot)$ denotes the matrix trace and $\mathbf{I}_{N_{\mathrm{B}}}$ is the $N_{\mathrm{B}}$-dimensional identity matrix.
\subsection{Far-Field RIS Reflection Control}
The reflection coefficients of the ${m}^{\text{th}}$ RIS during the $t^{\text{th}}$ OFDM symbol can be decomposed into

\begin{equation}
\label{equ:far_field_rf_control}
\begin{aligned}
\mathbf{\Omega}_{t = t^{'} + T^{'}q}^{[{m}]} = \gamma_{t^{'}}^{[{m}]} \mathbf{\Gamma}_{q}^{[{m}]}, \; \;  t^{'} \in\{1, \ldots, T^{'}\}, q\in\{1, \ldots N_{Q}\},
    \end{aligned}
\end{equation}
where $T = T^{'}N_{Q}$,  $\gamma_{t^{'}}^{[{m}]}$ and $\mathbf{\Gamma}_q^{[{m}]}$ are termed fast and slow varying RIS coefficients respectively. This is because $\gamma_{t^{'}}^{[{m}]}$ varies across $T^{'}$ OFDM symbols, while $\mathbf{\Gamma}_{q}^{[{m}]}$ is constant across the $q^{\text{th}}$ block of $T^{'}$ OFDM symbols, and there are $N_{Q}$ blocks of OFDM symbols. The fast varying RIS coefficients can be used to orthogonalize the RISs' paths. When $N_{Q} > 1$ and $\mathbf{\Gamma}_q^{[{m}]} \neq \alpha \mathbf{\Gamma}_{q^{'}}^{[{m}]}$, where $\alpha$ is a scalar, the RIS coefficients during the $q^{\text{th}}$ and $(q^{'})^{\text{th}}$ OFDM symbol blocks are non-parallel.   
After the removal of the cyclic prefix and the application of an $N$-point fast Fourier transform (FFT), the received signal at the $n^{\text{th}}$ subcarrier during the $t^{\text{th}}$ OFDM symbol is 
\begin{equation}
\label{equ:receive_processing_Fundamentals}
\begin{aligned}
  \mathbf{r}_{t}[n] = \mathbf{H}_{t}[n] \mathbf{F} \mathbf{x}[n] +  \mathbf{n}_{t}[n],
\end{aligned}
\end{equation}
where $\mathbf{n}_{t}[n] \sim \mathcal{C}\mathcal{N}(0,N_0)$ is the Fourier transform of the thermal noise local to the UE's antenna array at the $n^{\text{th}}$ subcarrier during the $t^{\text{th}}$ OFDM symbol and $\mathbf{x}[n]$ are pilots transmitted.
\subsection{Far-Field Receive Processing}
As stated above already, we assume that the only non-RIS path is a single LOS path,  hence the signals received at the $t^{\text{th}}$ OFDM symbol can be written as 
\begin{equation}
\label{equ:far_field_receive_processing}
\begin{aligned}
  \mathbf{r}_{t}[n] &=   \frac{\beta^{[\mathrm{0}]}}{\sqrt{\rho^{[\mathrm{0}]}}} \mathbf{H}^{[\mathrm{0}]}_{t}[n] e^{\frac{-j 2 \pi n \tau^{[\mathrm{0}]}}{N T_{S}}} \mathbf{F} \mathbf{x}[n] \\ &+   \sum_{m = 1}^{{M}_1} \frac{\beta^{[{m}]}}{\sqrt{\rho^{[{m}]}}} \mathbf{H}^{[{m}]}_{t}[n] e^{\frac{-j 2 \pi n \tau^{[{m}]}}{N T_{S}}} \mathbf{F} \mathbf{x}[n] +  \mathbf{n}_{t}[n].
\end{aligned}
\end{equation}
The first term corresponds to the LOS path and the second term corresponds to the RIS paths. To facilitate any subsequent derivations, we also write the received signal as
\begin{equation}
\label{equ:far_field_receive_processing_2}
\begin{aligned}
  \mathbf{r}_{t}[n] =   \bm{\mu}^{}_{t}[n] + \mathbf{n}_{t}[n] , 
 \quad t=1,2, \ldots, T, \quad n=1,2, \ldots, N ,
\end{aligned}
\end{equation}
where $\bm{\mu}^{}_{t}[n]$ is the noise-free part of $\mathbf{r}_{t}[n]$. Based on the received signal in (\ref{equ:far_field_receive_processing_2}), the vectors of the unknown channel parameters related to the RIS paths are defined as 
$$
\begin{aligned}
\begin{array}{llll}
\bm{\theta}_{\mathrm{r}_{\mathrm{u}}} \triangleq\left[\theta_{\mathrm{r}_{\mathrm{u}}}^{[1]},  \ldots, \theta_{\mathrm{r}_{\mathrm{u}}}^{[{M_1}]}\right]^{\mathrm{T}}, &
\bm{\phi}_{\mathrm{r}_{\mathrm{u}}}  \triangleq\left[\phi_{\mathrm{r}_{\mathrm{u}}}^{[1]}, \ldots, \phi_{\mathrm{r}_{\mathrm{u}}}^{[{M_1}]}\right]^{\mathrm{T}}, \\
\bm{\theta}_{\mathrm{t}_{\mathrm{l}}} \triangleq\left[\theta_{\mathrm{t}_{\mathrm{l}}}^{[1]},  \ldots, \theta_{\mathrm{t}_{\mathrm{l}}}^{[{M_1}]}\right]^{\mathrm{T}}, 
&\bm{\phi}_{\mathrm{t}_{\mathrm{l}}}  \triangleq\left[\phi_{\mathrm{t}_{\mathrm{l}}}^{[1]},  \ldots, \phi_{\mathrm{t}_{\mathrm{l}}}^{[{M_1}]}\right]^{\mathrm{T}}, \\
\bm{\theta}_{\mathrm{r}_{\mathrm{l}}}  \triangleq\left[\theta_{\mathrm{r}_{\mathrm{l}}}^{[1]}, \ldots, \theta_{\mathrm{r}_{\mathrm{l}}}^{[{M_1}]}\right]^{\mathrm{r}}, &
\bm{\phi}_{\mathrm{r}_{\mathrm{l}}}  \triangleq\left[\phi_{\mathrm{r}_{\mathrm{l}}}^{[1]}, \ldots, \phi_{\mathrm{r}_{\mathrm{l}}}^{[{M_1}]}\right]^{\mathrm{r}}, \\
\bm{\theta}_{\mathrm{t}_{\mathrm{u}}}  \triangleq\left[\theta_{\mathrm{t}_{\mathrm{u}}}^{[1]},\ldots, \theta_{\mathrm{t}_{\mathrm{u}}}^{[{M_1}]}\right]^{\mathrm{T}}, 
&
\bm{\phi}_{\mathrm{t}_{\mathrm{u}}}  \triangleq\left[\phi_{\mathrm{t}_{\mathrm{u}}}^{[1]},  \ldots, \phi_{\mathrm{t}_{\mathrm{u}}}^{[{M_1}]}\right]^{\mathrm{T}}, \\
\bm{\beta}  \triangleq\left[\beta^{[1]},  \ldots, \beta^{[{M_1}]}\right]^{\mathrm{T}}, &
\bm{\tau}_{}  \triangleq\left[\tau^{[1]}_{},  \ldots, \tau^{[{M_1}]}_{}\right]^{\mathrm{T}}.
\end{array}
\end{aligned}
$$ Hence, the unknown channel parameters can be represented by the vector 
\begin{equation}
\label{equ:FIM_parameter_vector}
\begin{aligned}
 \bm{\eta} \triangleq\left[\bm{\theta}_{\mathrm{r}_{\mathrm{u}}}^{\mathrm{T}}, \bm{\phi}_{\mathrm{r}_{\mathrm{u}}} ^{\mathrm{T}}, \bm{\theta}_{\mathrm{t}_{\mathrm{l}}}^{\mathrm{T}}, \bm{\phi}_{\mathrm{t}_{\mathrm{l}}}^{\mathrm{T}},
 \bm{\theta}_{\mathrm{r}_{\mathrm{l}}}^{\mathrm{T}},
 \bm{\phi}_{\mathrm{r}_{\mathrm{l}}}^{\mathrm{T}},
 \bm{\theta}_{\mathrm{t}_{\mathrm{u}}}^{\mathrm{T}},
 \bm{\phi}_{\mathrm{t}_{\mathrm{u}}}^{\mathrm{T}},
 \bm{\tau}^{\mathrm{T}}_{},
  \bm{\beta}_{\mathrm{R}}^{\mathrm{T}}, \bm{\beta}_{\mathrm{I}}^{\mathrm{T}}\right]^{\mathrm{T}},
\end{aligned}
\end{equation}
where $\bm{\beta}_{\mathrm{R}} \triangleq \Re\{\bm{\beta}\}$, and $\bm{\beta}_{\mathrm{I}} \triangleq \Im\{\bm{\beta}\}$ are the real and imaginary parts of $\bm{\beta}$, respectively.
Also, the unknown channel parameters related to the LOS path are written as
$
\begin{array}{llll}
\bm{\psi}_{} \triangleq\left[\theta_{\mathrm{r}_{\mathrm{u}}}^{[0]}, \phi_{\mathrm{r}_{\mathrm{u}}}^{[0]}, \theta_{\mathrm{t}_{\mathrm{u}}}^{[0]}, \phi_{\mathrm{t}_{\mathrm{u}}}^{[0]}, \tau^{[\mathrm{0}]}_{},   {\beta}_{\mathrm{R}}^{\mathrm{[0]}}, {\beta}_{\mathrm{I}}^{\mathrm{[0]}}  \right]^{\mathrm{T}}.
\end{array}
$ Finally, the unknown channel parameters related to the LOS plus RIS paths can be written as $\bm{\zeta}_{} \triangleq\left[ \bm{\psi}_{}^{\mathrm{T}},  \bm{\eta}_{}^{\mathrm{T}} \right]^{\mathrm{T}}.$

\begin{definition}
\label{definition_FIM_1_fundamentals}

Based on a set of observations $\mathbf{r}$, the Bayesian Fisher information of a parameter vector, $\bm{\eta}_{}$,  is written as
\begin{equation}
\label{equ:definition_FIM_1}
\begin{aligned}
\mathbf{J}_{ \bm{\eta}_{}} &\triangleq 
\mathbb{E}_{\mathbf{r}, \bm{\eta}_{}}\left[-\frac{\partial^{2} \ln \chi(\mathbf{r}_{}[n];  \bm{\eta}_{} )}{\partial \bm{\eta}_{} \partial \bm{\eta}_{}^{\mathrm{T}}}\right] \\
&= -\mathbb{E}_{\bm{\eta}}\left[\mathbb{E}_{\mathbf{r} \mid \mathbf{\eta}}\left[\frac{\partial^{2} \ln \chi(\mathbf{r}_{}[n]|  \bm{\eta}_{} )}{\partial \bm{\eta}_{} \partial \bm{\eta}_{}^{\mathrm{T}}}\right]\right] -\mathbb{E}_{ \bm{\eta}_{}}\left[\frac{\partial^{2} \ln \chi(  \bm{\eta}_{} )}{\partial \bm{\eta}_{} \partial \bm{\eta}_{}^{\mathrm{T}}}\right] \\ &= \mathbf{J}_{ \bm{\eta}_{}}^{\mathrm{D} } + \mathbf{J}_{ \bm{\eta}_{}}^{\mathrm{P} }
\end{aligned}
\end{equation}
where $\mathbb{E}_{\nu}$ is expectation taken over the random variable $\nu$, $\chi$ is the probability density function (PDF), $ \mathbf{J}_{ \bm{\eta}_{}}^{\mathrm{D} }$, and $ \mathbf{J}_{ \bm{\eta}_{}}^{\mathrm{P} }$ are the FIMs related to the likelihood and the prior, respectively.

\end{definition}

\begin{definition}
\label{definition_EFIM_2}
If the FIM of a parameter $\bm{\eta} = [\bm{\eta}_1^{\mathrm{T} }  \; \; \bm{\eta}_2^{\mathrm{T} } ]^{\mathrm{T} }$ is specified by 

\begin{equation}
\label{equ:definition_EFIM_2}
\begin{aligned}
\mathbf{J}_{\bm{\eta}}=\left[\begin{array}{cc}
\mathbf{J}_{\bm{\eta}_1\bm{\eta}_1} & \mathbf{J}_{\bm{\eta}_1\bm{\eta}_2} \\
\mathbf{J}_{\bm{\eta}_1\bm{\eta}_2}^{\mathrm{T}} & \mathbf{J}_{\bm{\eta}_2\bm{\eta}_2}
\end{array}\right]
    \end{aligned}
\end{equation}

where $\bm{\eta} \in \mathbb{R}^{N}, \bm{\eta}_{1} \in \mathbb{R}^{n}, \mathbf{J}_{\bm{\eta}_1\bm{\eta}_1}\in \mathbb{R}^{n \times n}, \mathbf{J}_{\bm{\eta}_1\bm{\eta}_2}\in \mathbb{R}^{n \times(N-n)}$, and $\mathbf{J}_{\bm{\eta}_2\bm{\eta}_2}\in$ $\mathbb{R}^{(N-n) \times(N-n)}$ with $n<N$, 
then the EFIM \cite{5571900} of  parameter ${\bm{\eta}_{1}}$ is given by 
\begin{equation}
\label{equ:definition_EFIM_1}
\begin{aligned}
\mathbf{J}_{\bm{\eta}_{1}}^{\mathrm{e}}&=\mathbf{J}_{{\bm{\eta}_{1}} {\bm{\eta}_{1}}}- \mathbf{J}_{{\bm{\eta}_{1}} {\bm{\eta}_{1}}}^{nu} =\mathbf{J}_{{\bm{\eta}_{1}} {\bm{\eta}_{1}}}-
\mathbf{J}_{{\bm{\eta}_{1}} {\bm{\eta}_{2}}} \mathbf{J}_{{\bm{\eta}_{2}}{\bm{\eta}_{2}} }^{-1} \mathbf{J}_{{\bm{\eta}_{1}} {\bm{\eta}_{2}}}^{\mathrm{T}}
    \end{aligned}
\end{equation}
Note that the term $\mathbf{J}_{{\bm{\eta}_{1}} {\bm{\eta}_{1}}}^{nu} = \mathbf{J}_{{\bm{\eta}_{1}} {\bm{\eta}_{2}}} \mathbf{J}_{{\bm{\eta}_{2}}{\bm{\eta}_{2}} }^{-1} \mathbf{J}_{{\bm{\eta}_{1}} {\bm{\eta}_{2}}}^{\mathrm{T}}$ describes the loss of information about ${\bm{\eta}_{1}}$  due to uncertainty in the nuisance parameters ${\bm{\eta}_{2}}$.
\end{definition}
\section{Fisher Information for RIS Paths }
\label{section:Fisher_Information_for_RIS_Paths}
 We define the geometric channel parameters $ \bm{\eta}_{1} \triangleq\left[\bm{\theta}_{\mathrm{r}_{\mathrm{u}}}^{\mathrm{T}}, \bm{\phi}_{\mathrm{r}_{\mathrm{u}}} ^{\mathrm{T}}, \bm{\theta}_{\mathrm{t}_{\mathrm{l}}}^{\mathrm{T}}, \bm{\phi}_{\mathrm{t}_{\mathrm{l}}}^{\mathrm{T}},
 \bm{\theta}_{\mathrm{r}_{\mathrm{l}}}^{\mathrm{T}},
 \bm{\phi}_{\mathrm{r}_{\mathrm{l}}}^{\mathrm{T}},
 \bm{\theta}_{\mathrm{t}_{\mathrm{u}}}^{\mathrm{T}},
 \bm{\phi}_{\mathrm{t}_{\mathrm{u}}}^{\mathrm{T}},
 \bm{\tau}^{\mathrm{T}}_{}\right]^{\mathrm{T}}$
and  the nuisance parameter as $ \bm{\eta}_{2} \triangleq\left[\bm{\beta}_{\mathrm{R}}^{\mathrm{T}},
\bm{\beta}_{\mathrm{I}}^{\mathrm{T}}
\right]^{\mathrm{T}}
$. 
To derive the FIM of $\bm{\eta}_{}$, we define the PDF as
$\chi(\mathbf{r}_{t}[n];  \bm{\eta}_{} ) = \chi(\mathbf{r}_{t}[n] |  \bm{\eta}_{} ) \chi(\bm{\eta}_{} ),$
where $\chi(\bm{\eta}_{} ) = \chi(\bm{\eta}_{1}) \chi( \bm{\eta}_{2})$.
The FIM of the channel parameters due to the observation $\mathbf{r}$ is an $ 11 {M}_1 \times 11 {M}_1$ matrix which can be viewed as a collection of ${M}_1 \times {M}_1$ submatrices 
\begin{equation}
\label{equ:FIM_parameter_matrix}
\begin{aligned}
\mathbf{J}_{\eta}^{\mathrm{D}} \triangleq\left[\begin{array}{cccc}
\mathbf{J}_{\bm{\theta}_{\mathrm{r}_{\mathrm{u}}} \bm{\theta}_{\mathrm{r}_{\mathrm{u}}}} & \mathbf{J}_{\bm{\theta}_{\mathrm{r}_{\mathrm{u}}} \bm{\phi}_{\mathrm{r}_{\mathrm{u}}}} & \cdots & \mathbf{J}_{\bm{\theta}_{\mathrm{r}_{\mathrm{u}}} \bm{\beta}_{\mathrm{I}}} \\
\mathbf{J}_{\bm{\theta}_{\mathrm{r}_{\mathrm{u}}} \bm{\phi}_{\mathrm{r}_{\mathrm{u}}}}^{\mathrm{T}} & \ddots & \cdots & \vdots \\
\vdots & \cdots & \ddots & \vdots \\
\mathbf{J}_{\bm{\theta}_{\mathrm{r}_{\mathrm{u}}} \bm{\beta}_{\mathrm{I}}}^{\mathrm{T}} & \cdots & \cdots & \mathbf{J}_{\bm{\beta}_{\mathrm{I}} \bm{\beta}_{\mathrm{I}}}
\end{array}\right]
\end{aligned}
\end{equation}
in which $\mathbf{J}_{\bm{\eta}_{{\mathbf{v}_1}} \bm{\eta}_{{\mathbf{v}_2}}} \triangleq \frac{2}{\sigma^2} \sum_{n = 1}^{N} \sum_{t=1}^{{T}} \Re\left\{\frac{\partial \bm{{\mu}^{}_{t}[n]}^{\mathrm{H}}}{\partial {\bm{\eta}_{}}_{\mathbf{v}_1 }} \frac{\partial \bm{{\mu}^{}_{t}[n]}^{{}}}{\partial {\bm{\eta}_{}}_{\mathbf{v}_2 }}\right\}$, where $\bm{\eta}_{\mathbf{v}_1} $, $\bm{\eta}_{\mathbf{v}_2} $ are both dummy variables that stand for the parameters of interest, and $1/\sigma^2$ is the SNR which incorporates the pathloss and composite noise power. To continue and allow for a compact representation of the FIM, we define the following terms relating to the AoA at the UE
\begin{subequations}[equ:FIM_parameter_terms_los_a]
\begin{align}
{\mathbf{K}}_{\mathrm{r}_{\mathrm{u}}}^{[{m}]} & \triangleq \operatorname{diag}\left(\frac{\partial}{\partial \theta_{\mathrm{r}_{\mathrm{u}}}^{[{m}]}} \Delta_{\mathrm{r}_{\mathrm{u}}}^{\mathrm{T}} \mathbf{k}\left(\theta_{\mathrm{r}_{\mathrm{u}}}^{[{m}]}, \phi_{\mathrm{r}_{\mathrm{u}}}^{[{m}]}\right)\right), \\
{\mathbf{P}}_{\mathrm{r}_{\mathrm{u}}}^{[{m}]}  & \triangleq \operatorname{diag}\left(\frac{\partial}{\partial \phi_{\mathrm{r}_{\mathrm{u}}}^{[{m}]}} \Delta_{\mathrm{r}_{\mathrm{u}}}^{\mathrm{T}} \mathbf{k}\left(\theta_{\mathrm{r}_{\mathrm{u}}}^{[{m}]}, \phi_{\mathrm{r}_{\mathrm{u}}}^{[{m}]}\right)\right), \label{equ:FIM_parameter_terms_los_1} \\
\mathbf{K}_{\mathrm{r}_{\mathrm{u}}} & \triangleq\left[{\mathbf{K}}_{\mathrm{r}_{\mathrm{u}}}^{[1]} \mathbf{a}_{\mathrm{r}_{\mathrm{u}}}^{[1]},{\mathbf{K}}_{\mathrm{r}_{\mathrm{u}}}^{[2]} \mathbf{a}_{\mathrm{r}_{\mathrm{u}}}^{[2]}, \ldots, {\mathbf{K}}_{\mathrm{r}_{\mathrm{u}}}^{[{M_1}]} \mathbf{a}_{\mathrm{r}_{\mathrm{u}}}^{[{M_1}]}\right], \\
\mathbf{P}_{\mathrm{r}_{\mathrm{u}}}  &\triangleq\left[{\mathbf{P}}_{\mathrm{r}_{\mathrm{u}}}^{[1]} \mathbf{a}_{\mathrm{r}_{\mathrm{u}}}^{[1]},{\mathbf{P}}_{\mathrm{r}_{\mathrm{u}}}^{[2]} \mathbf{a}_{\mathrm{r}_{\mathrm{u}}}^{[2]}, \ldots, {\mathbf{P}}_{\mathrm{r}_{\mathrm{u}}}^{[{M_1}]} \mathbf{a}_{\mathrm{r}_{\mathrm{u}}}^{[{M_1}]}\right], \label{equ:FIM_parameter_terms_los_2} \\
\mathbf{A}_{\mathrm{r}_{\mathrm{u}}} & \triangleq\left[\mathbf{a}_{\mathrm{r}_{\mathrm{u}}}^{[1]}, \mathbf{a}_{\mathrm{r}_{\mathrm{u}}}^{[2]}, \ldots, \mathbf{a}_{\mathrm{r}_{\mathrm{u}}}^{[{M_1}]}\right] \label{equ:FIM_parameter_terms_los_3}. 
\end{align}
\end{subequations}
The corresponding terms for the AoD at the BS can be obtained by replacing $r$ with $t$ in (\ref{equ:FIM_parameter_terms_los_1}) and (\ref{equ:FIM_parameter_terms_los_2}). In addition to replacing $\Delta_{\mathrm{r}_{\mathrm{u}}}$ with $\Delta_{\mathrm{l,m}}$, the term related to the elevation AoI defined as ${\mathbf{K}}_{\mathrm{r}_{\mathrm{l}}}^{[{m}]}$ can be obtained by replacing $\theta_{\mathrm{r}_{\mathrm{u}}}^{[{m}]} $ with $\theta_{\mathrm{r}_{\mathrm{l}}}^{[{m}]} $.  A similar term for the azimuth AoI defined as ${\mathbf{P}}_{\mathrm{r}_{\mathrm{l}}}^{[{m}]}$  can be obtained by swapping $\phi_{\mathrm{r}_{\mathrm{u}}}^{[{m}]}$ with $\phi_{\mathrm{r}_{\mathrm{l}}}^{[{m}]}$. The corresponding terms related to the AoR can be obtained from the terms related to the AoI by replacing $r$ with $t$. Other terms related to the AoI and AoR at the RISs during the $q^{\text{th}}$ OFDM symbol are 
\begin{equation}
\label{equ:FIM_parameter_terms_ris_2}
\begin{aligned}
\Tilde{\mathbf{a}}_{{q},{\mathrm{r}_{\mathrm{l}}}}^{[{m}]}  &\triangleq  {\mathbf{\Gamma}}_{q}^{[{m}]} {\mathbf{a}}_{\mathrm{r}_{\mathrm{l}}}^{[{m}]},  \;
\Tilde{\mathbf{a}}_{q,k_{\mathrm{r}_{\mathrm{l}}}}^{[{m}]}  \triangleq  {\mathbf{K}}_{\mathrm{r}_{\mathrm{l}}}^{[{m}]} \; \Tilde{\mathbf{a}}_{{q},{\mathrm{r}_{\mathrm{l}}}}^{[{m}]}, \; {\mathbf{a}}_{k_{\mathrm{t}_{\mathrm{l}}}}^{[{m}]}  \triangleq  {\mathbf{K}}_{\mathrm{t}_{\mathrm{l}}}^{[{m}]} {\mathbf{a}}_{\mathrm{t}_{\mathrm{l}}}^{[{m}]},  \\
\Tilde{\mathbf{a}}_{q,p_{\mathrm{r}_{\mathrm{l}}}}^{[{m}]} & \triangleq  {\mathbf{P}}_{\mathrm{r}_{\mathrm{l}}}^{[{m}]} \Tilde{\mathbf{a}}_{{q},{\mathrm{r}_{\mathrm{l}}}}^{[{m}]}, \;  {\mathbf{a}}_{p_{\mathrm{t}_{\mathrm{l}}}}^{[{m}]}  \triangleq  {\mathbf{P}}_{\mathrm{t}_{\mathrm{l}}}^{[{m}]} {\mathbf{a}}_{\mathrm{t}_{\mathrm{l}}}^{[{m}]}, 
\end{aligned}
\end{equation}
\begin{equation}
\label{equ:FIM_parameter_terms_ris_2b}
\begin{aligned}
\mathbf{k}_{q,\mathrm{l}}  & \triangleq \mathbf{p}_{q,\mathrm{l}}  \triangleq \left[\mathbf{{a}}^{\mathrm{H}[1]}_{\mathrm{t}_{\mathrm{l}}} \Tilde{\mathbf{a}}_{q,\mathrm{r}_{\mathrm{l}}}^{[1]},\mathbf{{a}}^{\mathrm{H}[2]}_{\mathrm{t}_{\mathrm{l}}} \Tilde{\mathbf{a}}_{q,\mathrm{r}_{\mathrm{l}}}^{[2]}, \ldots, \mathbf{{a}}^{\mathrm{H}[M_1]}_{\mathrm{t}_{\mathrm{l}}} \Tilde{\mathbf{a}}_{q,\mathrm{r}_{\mathrm{l}}}^{[{M_1}]}\right]^{\mathrm{H}}, \\
\mathbf{k}_{q,\mathrm{t}_{\mathrm{l}}}  & \triangleq\left[\mathbf{{a}}^{\mathrm{H}[1]}_{k_{\mathrm{t}_{\mathrm{l}}}} \Tilde{\mathbf{a}}_{q,\mathrm{r}_{\mathrm{l}}}^{[1]},\mathbf{{a}}^{\mathrm{H}[2]}_{k_{\mathrm{t}_{\mathrm{l}}}} \Tilde{\mathbf{a}}_{q,\mathrm{r}_{\mathrm{l}}}^{[2]}, \ldots, \mathbf{{a}}^{\mathrm{H}[M_1]}_{k_{\mathrm{t}_{\mathrm{l}}}} \Tilde{\mathbf{a}}_{q,\mathrm{r}_{\mathrm{l}}}^{[{M_1}]}\right]^{\mathrm{H}}, 
\\ \mathbf{p}_{q,\mathrm{t}_{\mathrm{l}}}&\triangleq\left[\mathbf{{a}}^{\mathrm{H}[1]}_{p_{\mathrm{t}_{\mathrm{l}}}} \Tilde{\mathbf{a}}_{q,\mathrm{r}_{\mathrm{l}}}^{[1]},\mathbf{{a}}^{\mathrm{H}[2]}_{p_{\mathrm{t}_{\mathrm{l}}}}  \Tilde{\mathbf{a}}_{q,\mathrm{r}_{\mathrm{l}}}^{[2]}, \ldots, \mathbf{{a}}^{\mathrm{H}[M_1]}_{p_{\mathrm{t}_{\mathrm{l}}}}  \Tilde{\mathbf{a}}_{q,\mathrm{r}_{\mathrm{l}}}^{[{M_1}]}\right]^{\mathrm{H}}, 
\end{aligned}
\end{equation}
\begin{equation}
\label{equ:FIM_parameter_terms_ris_2c}
\begin{aligned}
\mathbf{k}_{q,\mathrm{\mathrm{r}_{\mathrm{l}}}}  & \triangleq\left[\mathbf{{a}}^{\mathrm{H}[1]}_{\mathrm{t}_{\mathrm{l}}} \Tilde{\mathbf{a}}_{q,k_{\mathrm{r}_{\mathrm{l}}}}^{[1]},\mathbf{{a}}^{\mathrm{H}[2]}_{\mathrm{t}_{\mathrm{l}}} \Tilde{\mathbf{a}}_{q,k_{\mathrm{r}_{\mathrm{l}}}}^{[2]}, \ldots, \mathbf{{a}}^{\mathrm{H}[M_1]}_{\mathrm{t}_{\mathrm{l}}} \Tilde{\mathbf{a}}_{q,k_{\mathrm{r}_{\mathrm{l}}}}^{[{M_1}]}\right]^{\mathrm{H}}, \\
\mathbf{p}_{q,\mathrm{\mathrm{r}_{\mathrm{l}}}}   &\triangleq\left[\mathbf{{a}}^{\mathrm{H}[1]}_{\mathrm{t}_{\mathrm{l}}} \Tilde{\mathbf{a}}_{q,p_{\mathrm{r}_{\mathrm{l}}}}^{[1]},\mathbf{{a}}^{\mathrm{H}[2]}_{\mathrm{t}_{\mathrm{l}}} \Tilde{\mathbf{a}}_{q,p_{\mathrm{r}_{\mathrm{l}}}}^{[2]}, \ldots, \mathbf{{a}}^{\mathrm{H}[M_1]}_{\mathrm{t}_{\mathrm{l}}} \Tilde{\mathbf{a}}_{q,p_{\mathrm{r}_{\mathrm{l}}}}^{[{M_1}]}\right]^{\mathrm{H}}. 
\end{aligned}
\end{equation} 
The scalar part of the RIS reflecting coefficients used for multipath separation is arranged as
$\bm{\gamma}^{[{m}]} = [{\gamma}^{[{m}]}_{1}, {\gamma}^{[{m}]}_{2}, \ldots, {\gamma}^{[{m}]}_{T^{'}}]^{\mathrm{T}},$
and can be arranged in a  matrix as
$\mathbf{D}_{\gamma} = [\bm{\gamma}^{[1]}, \bm{\gamma}^{[2]}, \ldots, \bm{\gamma}^{[{M_1}]}].$
This matrix, henceforth referred to as a {\em sequence matrix} provides control in both spatial and temporal domains through the fast-varying part of reflecting coefficients of the RISs. The complex channel gains are arranged in a diagonal matrix
$
\mathbf{B}  \triangleq \operatorname{diag}(\bm{\beta})
$
and the signal factor representing the effect of the transmitted beams  is specified by 
$\left[\mathbf{R}_{k}\right]_{u v} \triangleq \sum_{n = 1}^{N}\left(2 \pi n /\left(N T_{s}\right)\right)^{k} \mathbf{x}[n] \mathbf{x}^{\mathrm{H}}[n] e^{-j 2 \pi n \frac{\tau^{[v]}-\tau^{[u]}}{N T_{s}}}, $
where $k \in \{0,1,2\}$. 
\subsection{Entries of the FIM}

\begin{lemma}
\label{lemma:entries_in_FIM}
The entries in the FIM in (\ref{equ:FIM_parameter_matrix}), which represents the amount of information about the parameter vector, $\bm{\eta}$, present in the received signals, $\mathbf{r}_{t}[n], 
 \quad t=1,2, \ldots, T, \quad n=1,2, \ldots, N,$ have a definite  structure. This structure is presented in (\ref{equ:FIM_general_form}).
 \begin{figure*}
\begin{align}
\label{equ:FIM_general_form}
\mathbf{J}_{\mathbf{v}_1 \mathbf{v}_{2}} &= \frac{2}{\sigma^2}\sum_{q = 1}^{N_{Q}}\Re\{(\text {Rx factor}) \odot( q^{\text{th}} \text { RIS gain}) \odot(\text {RIS correlation}) \odot(\text {Tx factor}) \odot(\text {signal factor})\} = \frac{2}{\sigma^2}\sum_{q = 1}^{N_{Q}}\mathbf{J}_{q,\mathbf{v}_1 \mathbf{v}_{2}}.
\end{align}
\end{figure*}
where $\mathbf{v}_{1}, \mathbf{v}_{2} \in \bm{\eta}$ and $\odot$ represents element-wise matrix multiplication. The FIM, $\mathbf{J}_{q,\mathbf{v}_1 \mathbf{v}_{2}}$, is the contribution due to the $q^{\text{th}}$ OFDM symbol block (see (\ref{equ:far_field_rf_control})). From this equation, we notice that the FIM, $\mathbf{J}_{\mathbf{v}_1 \mathbf{v}_{2}}$, is a summation of the contributions from the OFDM symbol blocks. Focusing on  the submatrices of the FIM in (\ref{equ:FIM_parameter_matrix}) that are related to AoA, we have 
\begin{equation}
\label{equ:FIM_exact_submat_theta_r_u_theta_r_u}
\begin{aligned}
\mathbf{J}_{\bm{\theta}_{\mathrm{r}_{\mathrm{u}}} \bm{\theta}_{\mathrm{r}_{\mathrm{u}}}}
 &=   \frac{2}{\sigma^2}\sum_{q = 1}^{N_{Q}}\Re\left\{
 \left(
 \mathbf{B}^{\mathrm{H}} \mathbf{K}_{\mathrm{r}_{\mathrm{u}}}^{\mathrm{H}} \mathbf{K}_{\mathrm{r}_{\mathrm{u}}} \mathbf{B}\right) \odot\left(\mathbf{k}_{q,\mathrm{l}} \mathbf{k}_{q,\mathrm{l}}^{\mathrm{H}}\right)
 \odot\left(\mathbf{D}_{\mathrm{\gamma}}^{\mathrm{H}} \mathbf{D}_{\mathrm{\gamma}}\right)
 \right. \\
 &\left. \odot \left(\mathbf{A}_{\mathrm{t}_{\mathrm{u}}}^{\mathrm{H}}  \mathbf{F}   \mathbf{F}^{\mathrm{H}} \mathbf{A}_{\mathrm{t}_{\mathrm{u}}}^{}\right)^{\mathrm{T}} \odot \mathbf{R}_{0}\right\}.
\end{aligned}
\end{equation}
The information contribution from the $q^{\text{th}}$ OFDM symbol block can be decomposed into: i) information provided by the receiver specified by some combination of the terms $\{ \mathbf{P}_{\mathrm{r}_{\mathrm{u}}}, \mathbf{K}_{\mathrm{r}_{\mathrm{u}}}, \mathbf{B}, \mathbf{A}_{\mathrm{r}_{\mathrm{u}}} \}$, ii) information provided by the transmitter specified by some combination of the terms $\{ \mathbf{P}_{\mathrm{t}_{\mathrm{u}}}, \mathbf{K}_{\mathrm{t}_{\mathrm{u}}}, \mathbf{F}, \mathbf{A}_{\mathrm{t}_{\mathrm{u}}} \}$, iii) information provided by the RIS gain specified by some combination of the terms $\{ \mathbf{k}_{q,\mathrm{l}}, \mathbf{k}_{q,\mathrm{t}_{\mathrm{l}}}, \mathbf{k}_{q,\mathrm{r}_{\mathrm{l}}}, \mathbf{p}_{q,\mathrm{t}_{\mathrm{l}}}, \mathbf{p}_{q,\mathrm{r}_{\mathrm{l}}}  \}$, iv) the correlation across the RIS specified by the product $\mathbf{D}_{\gamma}^{\mathrm{H}} \mathbf{D}_{\gamma}$, and v) the transmit signal factor. 
\end{lemma}
\begin{proof}
The proof follows after taking the first derivative, and using the definitions in the previous sections.
\end{proof}
In general, all submatrices of the FIM in (\ref{equ:FIM_parameter_matrix}) can be written similarly. These equations are presented in Appendix \ref{appendix:FIM_entries}.
\begin{corollary}
\label{corollary:parallel_FIM}
    When the slow varying RIS coefficients are parallel across the $q^{\text{th}}$ and $(q^{'})^{\text{th}}$ OFDM symbol blocks, i.e. when  $\mathbf{\Gamma}_q^{[{m}]} = \alpha \mathbf{\Gamma}_{q^{'}}^{[{m}]}$, where $\alpha$ is a scalar, the Fisher information contributions across both blocks are identical, i.e. $\mathbf{J}_{q,\mathbf{v}_1 \mathbf{v}_{2}} = \mathbf{J}_{q^{'},\mathbf{v}_1 \mathbf{v}_{2}}$. In this case of parallel RIS coefficients, the RIS coefficients of the $(q^{'})^{\text{th}}$ OFDM symbol blocks do not provide any additional information through the RIS coefficients. However, the additional transmission of the $(q^{'})^{\text{th}}$ OFDM symbol blocks increases the  SNR.
\end{corollary}
\begin{proof}
The proof follows by inspecting Lemma \ref{lemma:entries_in_FIM}, and  by observing that the FIMs are always positive definite.
\end{proof}
Corollary \ref{corollary:parallel_FIM} describes the effect on the Fisher information of having parallel RIS coefficients across multiple OFDM symbol blocks. From this corollary, we notice that the additional OFDM symbol blocks only increase the SNR in the case of parallel RIS coefficients.  
\subsection{Bayesian FIM}
To incorporate any prior knowledge about the unknown but random channel parameters, the Bayesian FIM is also analyzed. The channel parameters are random variables because both the RISs and the UE positions and orientations are viewed as random variables. The channel parameters, $\bm{\eta}_{}$, are assumed to be independent of each other such that the  prior Fisher information matrix, $\mathbf{J}_{ \bm{\eta}_{}}^{\mathrm{P}}$, is an  $11{M}_1 \times 11 {M}_1$  diagonal matrix. The Bayesian FIM of the channel parameters $ \bm{\eta}_{}$  is also an $ 11{M}_1 \times 11 {M}_1$ matrix which contains several ${M}_1 \times {M}_1$ submatrices  such that its entries are written as  
\begin{equation}
    \label{equ:bayesian_FIM_1}
\begin{aligned}
    \Tilde{\mathbf{J}}_{\mathbf{v}_1 \mathbf{v}_2 } &= {\mathbf{J}}_{\mathbf{v}_1 \mathbf{v}_2 } + {\mathbf{J}}_{\mathbf{v}_1 \mathbf{v}_2 }^{\mathrm{P}}, \\
 &= \frac{2}{\sigma^2}\sum_{q = 1}^{N_{Q}}\mathbf{J}_{q,\mathbf{v}_1 \mathbf{v}_{2}} + {\mathbf{J}}_{\mathbf{v}_1 \mathbf{v}_2 }^{\mathrm{P}},
 \end{aligned}
 \end{equation}
where $\mathbf{v}_{1} , \mathbf{v}_{2} \in \bm{\eta}$, $\mathbf{v}_{1} = \mathbf{v}_{2}$, and
\begin{equation}
    \label{equ:bayesian_FIM_2}
\begin{aligned}
\Tilde{\mathbf{J}}_{\mathbf{v}_1 \mathbf{v}_2 } = {\mathbf{J}}_{\mathbf{v}_1 \mathbf{v}_2 } = \frac{2}{\sigma^2}\sum_{q = 1}^{N_{Q}}\mathbf{J}_{q,\mathbf{v}_1 \mathbf{v}_{2}},
 \end{aligned}
 \end{equation}
where $\mathbf{v}_{1} , \mathbf{v}_{2} \in \bm{\eta}$ and $\mathbf{v}_{1} \neq \mathbf{v}_{2}$.

\subsection{Entries in the EFIM with a Unitary RIS Correlation Matrix}
In this subsection, we analyze the structure of the EFIM when the correlation matrix is a unitary matrix. More specifically, we analyze the structure of the FIM when $\mathbf{D}_{\mathrm{\gamma}}^{\mathrm{H}} \mathbf{D}_{\mathrm{\gamma}} = \mathbf{I}_{{M}_1}$. A unitary correlation matrix can help establish orthogonality among the paths received from different RISs, which allows the structure of the Bayesian FIM and EFIM to be analyzed.

\begin{remark}
\label{remark:nusiance_params_diagonal}
The submatrix  $\mathbf{J}_{\bm{\eta}_{2}\bm{\eta}_{2}}^{\mathrm{}}$ and its corresponding entries  $\{\mathbf{J}_{ \bm{\beta}_{\mathrm{R}}  \bm{\beta}_{\mathrm{R}}}, \mathbf{J}_{ \bm{\beta}_{\mathrm{I}}  \bm{\beta}_{\mathrm{I}}}\}$ are diagonal matrices. If the RIS sequence matrix $ \mathbf{D}_{\gamma}$ produces a unitary correlation matrix $\mathbf{D}_{\gamma}^{\mathrm{H}} \mathbf{D}_{\gamma} =\mathbf{I}_{{M}_{1}} $, various  RIS paths can be orthogonalized and  
$\mathbf{J}_{ \bm{\beta}_{\mathrm{R}}  \bm{\beta}_{\mathrm{R}}}
 =  \mathbf{J}_{ \bm{\beta}_{\mathrm{I}}  \bm{\beta}_{\mathrm{I}}}
 =   \frac{2}{\sigma^2}\sum_{q = 1}^{N_{Q}}\left\{\left( \mathbf{A}_{\mathrm{r}_{\mathrm{u}}}^{\mathrm{H}} \mathbf{A}_{\mathrm{r}_{\mathrm{u}}}\right) \odot\left(\mathbf{k}_{q,\mathrm{l}} \mathbf{k}_{q,\mathrm{l}}^{\mathrm{H}}\right)
 \odot\left(\mathbf{D}_{\mathrm{\gamma}}^{\mathrm{H}} \mathbf{D}_{\mathrm{\gamma}}\right) \right. \\
 \left.
\odot\left(\mathbf{A}_{\mathrm{t}_{\mathrm{u}}}^{\mathrm{H}}  \mathbf{F}   \mathbf{F}^{\mathrm{H}} \mathbf{A}_{\mathrm{t}_{\mathrm{u}}}^{}\right)^{\mathrm{T}} 
  \odot \mathbf{R}_{0}\right\}.$
\end{remark}

To  investigate the reduction in information due to uncertainty about the nuisance parameters, we  analyze the term concerning information loss in the Bayesian EFIM $\mathbf{J}_{\bm{\eta}_{1} \bm{\eta}_{1}}^{nu}$. 

\begin{lemma}
\label{lemma:entries_in_nuisance_EFIM}
The terms representing the information loss about the geometric parameters, $\bm{\eta}_1$, due to the nuisance parameters, $\bm{\eta}_2$, have the structure  presented in (\ref{equ:nuisance_FIM_general_form}).
\begin{figure*}
\begin{align}
\label{equ:nuisance_FIM_general_form}
\begin{split}
\mathbf{J}_{\mathbf{v}_{1} \mathbf{v}_{2}}^{nu} &= \left[ \frac{2}{\sigma^2}\sum_{q = 1}^{N_{Q}} { \mathbf{J}}_{q, \bm{\beta}_{\mathrm{R}}  \bm{\beta}_{\mathrm{R}}} + { \mathbf{J}}_{ \bm{\beta}_{\mathrm{R}}  \bm{\beta}_{\mathrm{R}}}^{\mathrm{P}}\right]^{-1}\frac{4}{\sigma^4}\Re\left\{ \sum_{q_1 = 1}^{N_{Q}}\sum_{q_2 = 1}^{N_{Q}}( \mathbf{J}_{q_1,  \mathbf{v}_{1}  \bm{\beta}_{\mathrm{I}}}+ j\mathbf{J}_{q_1,  \mathbf{v}_{1}  \bm{\beta}_{\mathrm{R}}}) (     \mathbf{J}_{q_2,  \mathbf{v}_{2} \bm{\beta}_{\mathrm{I}}}^{\mathrm{}} +j\mathbf{J}_{q_2, \mathbf{v}_{2}  \bm{\beta}_{\mathrm{R}}}^{\mathrm{}}  )^{\mathrm{H}} \right\},\\
&= \left[ \frac{2}{\sigma^2} { \mathbf{J}}_{ \bm{\beta}_{\mathrm{R}}  \bm{\beta}_{\mathrm{R}}} + { \mathbf{J}}_{ \bm{\beta}_{\mathrm{R}}  \bm{\beta}_{\mathrm{R}}}^{\mathrm{P}}\right]^{-1}\frac{4}{\sigma^4}\Re\left\{ \sum_{q_1 = 1}^{N_{Q}}\sum_{q_2 = 1}^{N_{Q}}( \mathbf{J}_{q_1,  \mathbf{v}_{1}  \bm{\beta}_{\mathrm{I}}}+ j\mathbf{J}_{q_1,  \mathbf{v}_{1}  \bm{\beta}_{\mathrm{R}}}) (     \mathbf{J}_{q_2,  \mathbf{v}_{2} \bm{\beta}_{\mathrm{I}}}^{\mathrm{}} +j\mathbf{J}_{q_2, \mathbf{v}_{2}  \bm{\beta}_{\mathrm{R}}}^{\mathrm{}}  )^{\mathrm{H}} \right\}.
\end{split}
\end{align}
\end{figure*}
where $\mathbf{v}_{1}, \mathbf{v}_{2} \in \bm{\eta}_1$.
\end{lemma}
\begin{proof}
    The term concerning information loss, $\mathbf{J}_{\bm{\eta}_{1} \bm{\eta}_{1}}^{nu}$, in the Bayesian EFIM, $\mathbf{J}_{\bm{\eta}_{1}}^{\mathrm{e}}$,  is obtained by applying Definition \ref{definition_EFIM_2} to the Bayesian FIM given by $\mathbf{J}_{\bm{\eta}}^{}$. 
\end{proof}
Lemma \ref{lemma:entries_in_nuisance_EFIM} presents the structure of the information loss about the geometric channel parameters, $\bm{\eta}_1$, due to uncertainty in the nuisance parameters, $\bm{\eta}_2$. In subsequent sections, we use this structure to present estimation limitations under specific configurations of RIS coefficients.
\begin{corollary}
\label{corollary:entries_in_nuisance_EFIM}
An entry in EFIM of the geometric channel  parameters ${\bm{\eta}_{1}}$ is obtained by substituting (\ref{equ:bayesian_FIM_1}), (\ref{equ:bayesian_FIM_2}), and  Lemma \ref{lemma:entries_in_nuisance_EFIM} in Definition \ref{definition_EFIM_2} and it is represented as
\begin{equation}
\label{corollary:definition_EFIM_2}
\begin{aligned}
\mathbf{J}_{\mathbf{v}_1 \mathbf{v}_{2}}^{\mathrm{e}}=\frac{2}{\sigma^2}\sum_{q = 1}^{N_{Q}}\mathbf{J}_{q,\mathbf{v}_1 \mathbf{v}_{2}} + {\mathbf{J}}_{\mathbf{v}_1 \mathbf{v}_2 }^{\mathrm{P}} - \mathbf{J}_{\mathbf{v}_1 \mathbf{v}_{2}}^{nu}
    \end{aligned}
\end{equation}
where $\mathbf{v}_{1} , \mathbf{v}_{2} \in  \bm{\eta}_1$, $\mathbf{v}_{1} = \mathbf{v}_{2}$, and
$\mathbf{J}_{\mathbf{v}_1 \mathbf{v}_{2}}^{\mathrm{e}} = \frac{2}{\sigma^2}\sum_{q = 1}^{N_{Q}}\mathbf{J}_{q,\mathbf{v}_1 \mathbf{v}_{2}}  - \mathbf{J}_{\mathbf{v}_1 \mathbf{v}_{2}}^{nu},$
where $\mathbf{v}_{1} , \mathbf{v}_{2} \in  \bm{\eta}_1$ and $\mathbf{v}_{1} \neq \mathbf{v}_{2}$.
\end{corollary}
The structure of the expressions regarding the information loss terms and the EFIM of the geometric channel parameters given in Lemma \ref{lemma:entries_in_nuisance_EFIM} and Corollary \ref{corollary:entries_in_nuisance_EFIM}, respectively, are difficult to analyze further. However, we can analyze the structure of these expressions under different conditions and state when the estimation problem is infeasible.
\begin{corollary}
    \label{corollary:parallel_nu}
    When the RIS coefficients are parallel, i.e. when for any scalar, $\alpha$, the following condition $\mathbf{\Gamma}_q^{[{m}]} = \alpha \mathbf{\Gamma}_{q^{'}}^{[{m}]}$ holds $ \;  \forall q, q^{'} \in \{1,2,\cdots, N_{Q} \}$,  most of the information loss terms in the EFIM have a definite structure. More specifically, this structure is given by
$
\mathbf{J}_{\mathbf{v}_{1} \mathbf{v}_{2}}^{nu} = \left[ \frac{2}{\sigma^2}N_{Q} { \mathbf{J}}_{q, \bm{\beta}_{\mathrm{R}}  \bm{\beta}_{\mathrm{R}}} + { \mathbf{J}}_{ \bm{\beta}_{\mathrm{R}}  \bm{\beta}_{\mathrm{R}}}^{\mathrm{P}}\right]^{-1}\frac{4}{\sigma^4} N_{Q}{ \mathbf{J}}_{q, \bm{\beta}_{\mathrm{R}}  \bm{\beta}_{\mathrm{R}}}{ \mathbf{J}}_{q, \mathbf{v}_{1} \mathbf{v}_{2}}
$
where $\mathbf{v}_{1}, \mathbf{v}_{2} \in  \bm{\eta}_1$ , $\mathbf{v}_{1} \neq \mathbf{v}_{2}$. Also If $\mathbf{v}_{1} = \bm{\theta}_{\mathrm{r}_{\mathrm{u}}} $ then $\mathbf{v}_{2} \notin \{\bm{\theta}_{\mathrm{r}_{\mathrm{u}}},\bm{\phi}_{\mathrm{r}_{\mathrm{u}}}\} $.  Again for the transmit angles, if  $\mathbf{v}_{1} = \bm{\theta}_{\mathrm{t}_\mathrm{u}} $ then $\mathbf{v}_{2} \notin \{\bm{\theta}_{\mathrm{t}_{\mathrm{u}}},\bm{\phi}_{\mathrm{t}_{\mathrm{u}}}\} $. Information loss terms without this structure  are presented in Appendix \ref{appendix:FIM_nuis_remark_structure}.
\end{corollary}
\begin{proof}
See Appendix \ref{appendix_nusiance_structure}.    
\end{proof}
Corollary \ref{corollary:parallel_nu} presents the structure of the information loss about the geometric channel parameters, $\bm{\eta}_1$, due to uncertainty in the nuisance parameters, $\bm{\eta}_2$ when the RIS coefficients are parallel. With this structure, Corollary \ref{corollary:parallel_EFIM} indicates that all the information about the RIS AoR is lost when there is no prior information about the complex path. Because prior information about the complex path gains is hard to obtain, the RIS AoR can not be estimated with parallel RIS reflection coefficients.
\begin{corollary}
    \label{corollary:parallel_EFIM}
    With parallel RIS coefficients, the submatrices in the EFIM which are related to the AoI and AoR are zero matrices when there is no prior information about the complex path gains and no prior information about the corresponding AoI and AoR.  More specifically, 
$$\mathbf{J}_{  \bm{\theta}_{\mathrm{t}_{\mathrm{l}}}  \bm{\theta}_{\mathrm{t}_{\mathrm{l}}}}^{\mathrm{e}} = \mathbf{J}_{  \bm{\phi}_{\mathrm{t}_{\mathrm{l}}}  \bm{\phi}_{\mathrm{t}_{\mathrm{l}}}}^{\mathrm{e}} = \mathbf{J}_{  \bm{\theta}_{\mathrm{\mathrm{r}_{\mathrm{l}}}}  \bm{\theta}_{\mathrm{\mathrm{r}_{\mathrm{l}}}}}^{\mathrm{e}} = \mathbf{J}_{  \bm{\phi}_{\mathrm{\mathrm{r}_{\mathrm{l}}}}  \bm{\phi}_{\mathrm{\mathrm{r}_{\mathrm{l}}}}}^{\mathrm{e}} = \mathbf{0},
$$
when ${\mathbf{J}}_{  \bm{\theta}_{\mathrm{t}_{\mathrm{l}}}  \bm{\theta}_{\mathrm{t}_{\mathrm{l}}}}^{\mathrm{P}} = {\mathbf{J}}_{  \bm{\phi}_{\mathrm{t}_{\mathrm{l}}}  \bm{\phi}_{\mathrm{t}_{\mathrm{l}}}}^{\mathrm{P}} = {\mathbf{J}}_{  \bm{\theta}_{\mathrm{r}_{\mathrm{l}}}  \bm{\theta}_{\mathrm{r}_{\mathrm{l}}}}^{\mathrm{P}} = {\mathbf{J}}_{  \bm{\phi}_{\mathrm{r}_{\mathrm{l}}}  \bm{\phi}_{\mathrm{r}_{\mathrm{l}}}}^{\mathrm{P}} = { \mathbf{J}}_{ \bm{\beta}_{\mathrm{R}}  \bm{\beta}_{\mathrm{R}}}^{\mathrm{P}} = \mathbf{0}.$
\end{corollary}
\begin{proof}
This is a direct consequence of Corollary \ref{corollary:parallel_nu}.
\end{proof}
With Corollary \ref{corollary:parallel_EFIM}, when the RIS coefficients are parallel, the RIS AoR can not be estimated. This result is clearly stated in the Theorem below.
\begin{theorem}
\label{theorem_AoI_AoR}
Practically, even with infinite number of receive antennas, the azimuth and elevation AoR and the complex path gains of the $m^{\text{th}}$ RIS path can only be jointly estimated when the $m^{\text{th}}$ RIS employs  non-parallel RIS coefficients. This entails explicitly that localization of a single receive antenna UE with reflections from a single RIS is impossible when there is no LOS, and the RIS coefficients are constant across all OFDM symbols.
\end{theorem}
\begin{proof}
The proof follows after noticing that all the angle information of the $m^{\text{th}}$ RIS is lost due to uncertainty in the complex path gain associated with the $m^{\text{th}}$ RIS path (see Corollary     \ref{corollary:parallel_EFIM}).
\end{proof}
\begin{remark}
\label{remark_AoI_AoR_MIMO}
The fundamental difference between a passive RIS and a BS is that no processing is done at a passive RIS. Because there is no processing at a passive RIS, it can only reflect signals into a single stream. Hence, non-parallel streams of information cannot be created during a single OFDM symbol with a passive RIS. This is in stark contrast to a BS that can (with various precoding techniques) produce multiple non-parallel streams \cite{sohrabi2017hybrid}.

Due to this contrast in operation, when there is more than one receive antenna at the UE, the UE can use the non-parallel streams during a single OFDM symbol to estimate the AoD at the BS \cite{8755880,fascista2021downlink}. However, irrespective of the number of receive antennas at the UE, it can not estimate the  AoR of the RIS using only reflections during a single OFDM symbol. Therefore, the  AoR can only be calculated at the UE when the RIS employs  non-parallel coefficients across multiple OFDM symbols. 

It is also important to note that the AoD of the BS can also not be estimated during a single OFDM symbol if the streams created by the precoding strategies are parallel in the spatial domain. 
\end{remark}
In the previous sections, we have discussed the relationship between geometric channel parameters, $\bm{\eta}_1$ and the nuisance parameters, $\bm{\eta}_2$ under both parallel and non-parallel RIS coefficients. In the following sections, we provide mathematical definitions of the AoI azimuth and elevation angles and the AoR azimuth and elevation angles. The next section defines these RIS-related angles with respect to the BS's position, BS's orientation, RISs' position, RISs' orientation, UE's position, and UE's orientation.

\subsection{RIS Related Angle Definitions and Relationships}

To analyze both the angle relationships and derive the FIM for positioning, we define the rotation matrix  $\mathbf{Q}\left(\theta_{0}, \phi_{0}\right)$  given by $\mathbf{Q}\left(\theta_{0}, \phi_{0}\right) =\mathbf{Q}_{z}\left(\phi_{0}\right) \mathbf{Q}_{-x^{}}\left(\theta_{0}\right)$, where $\mathbf{Q}_{z}\left(\phi_{0}\right)$ and $\mathbf{Q}_{-x^{}}\left(\theta_{0}\right)$ define the counter-clockwise rotation around the $z$-axis and the clockwise rotation around the $x$-axis respectively.  We define $\mathbf{g}^{[{m}]}_{{}}=( \mathbf{p}^{[{m}]} -  \mathbf{p}_{\mathrm{BS}})$,
and specify the AoD at the BS as
$\theta_{\mathrm{t}_{\mathrm{u}}}^{[{m}]} =\cos ^{-1}\left({g}^{[{m}]}_{z_{}} /\|\mathbf{g}^{[{m}]}_{{}}\|\right), \; \; 
\phi_{\mathrm{t}_{\mathrm{u}}}^{[{m}]} =\tan ^{-1}\left({g}^{[{m}]}_{y_{}} /{g}^{[{m}]}_{x_{}}\right)$. Next, we translate  the ${m}^{\text{th}}$ RIS to the origin, and the new coordinates of the BS can be written as $\mathbf{c}^{[{m}]}_{{}}=(\mathbf{p}_{\mathrm{BS}}- \mathbf{p}^{[{m}]}), \; \;
\Tilde{\mathbf{c}}^{[{m}]}_{}=\mathbf{Q}^{-1}\left(\theta_{0}^{[{m}]}, \phi_{0}^{[{m}]}\right) \mathbf{c}^{[{m}]}$. With respect to these new coordinates, we can write $\theta_{\mathrm{r}_{\mathrm{l}}}^{[{m}]} =\cos ^{-1}\left({c}^{[{m}]}_{\Tilde{z}} /\|\Tilde{\mathbf{c}}^{[{m}]}_{}\|\right), \; \;
\phi_{\mathrm{r}_{\mathrm{l}}}^{[{m}]} =\tan ^{-1}\left({c}^{[{m}]}_{\Tilde{y}} /{c}^{[{m}]}_{\Tilde{x}}\right)$. Subsequently, the translated coordinates   allow the following definition 
$\mathbf{v}^{[{m}]}_{{}}=(\mathbf{p} - \mathbf{p}^{[{m}]}), \; \;
\Tilde{\mathbf{v}}^{[{m}]}_{}=\mathbf{Q}^{-1}\left(\theta_{0}^{[{m}]}, \phi_{0}^{[{m}]}\right)  \mathbf{v}^{[{m}]}_{{}}$ and we can write $\theta_{\mathrm{t}_{\mathrm{l}}}^{[{m}]} =\cos ^{-1}\left({v}^{[{m}]}_{\Tilde{z}} /\|\Tilde{\mathbf{v}}^{[{m}]}_{}\|\right), \; \;
\phi_{\mathrm{t}_{\mathrm{l}}}^{[{m}]} =\tan ^{-1}\left({v}^{[{m}]}_{\Tilde{y}} /{v}^{[{m}]}_{\Tilde{x}}\right)$.
Similarly, we obtain a new set of coordinates by translating the UE to the origin, and we write the following definitions 
$\mathbf{e}^{[{m}]}_{{}}= -(\mathbf{p} - \mathbf{p}^{[{m}]}), \; \;
\Tilde{\mathbf{e}}^{[{m}]}_{}=\mathbf{Q}^{-1}\left(\theta_{0}^{}, \phi_{0}^{}\right)  \mathbf{e}^{[{m}]}_{{}}$. Hence, we can write $\theta_{\mathrm{r}_{\mathrm{u}}}^{[{m}]} =\cos ^{-1}\left({e}^{[{m}]}_{\Tilde{z}} /\|\Tilde{\mathbf{e}}^{[{m}]}_{}\|\right), \; \;
\phi_{\mathrm{r}_{\mathrm{u}}}^{[{m}]} =\tan ^{-1}\left({e}^{[{m}]}_{\Tilde{y}} /{e}^{[{m}]}_{\Tilde{x}}\right)$.


\subsection{Relationship between AoI and AoR with a Unitary RIS Correlation Matrix} 
Based on the angle relationships and the coordinate translations in the previous section, we show the relationship between the information provided by the AoI and information provided by the AoR for an RIS deployed as a passive uniform rectangular array (URA). To show this relationship, we state the following lemma.   
\begin{lemma}\label{lemma:AoI_AoR}
The matrix specified by
\begin{equation}
\label{equ:AoI_AoR}
\begin{aligned}
\mathbf{V} &= 
\left[\begin{array}{cc}
\bm{\nu}_{1} & \bm{\nu}_{2}  \\\end{array}\right]  \\ &= 
\left[\begin{array}{cc}
\cos(\theta_{\mathrm{t}_{\mathrm{l}}}^{[{m}]}) \sin(\phi_{\mathrm{t}_{\mathrm{l}}}^{[{m}]})   &   \sin(\theta_{\mathrm{t}_{\mathrm{l}}}^{[{m}]}) \cos(\phi_{\mathrm{t}_{\mathrm{l}}}^{[{m}]})  \\
\cos(\theta_{\mathrm{t}_{\mathrm{l}}}^{[{m}]}) \cos(\phi_{\mathrm{t}_{\mathrm{l}}}^{[{m}]}) & -\sin(\theta_{\mathrm{t}_{\mathrm{l}}}^{[{m}]}) \sin(\phi_{\mathrm{t}_{\mathrm{l}}}^{[{m}]})    \\
\end{array}\right] \\
\end{aligned}
\end{equation}
is  a full rank matrix. Hence, the 2D vectors $\bm{\nu}_{3} = \left[\begin{array}{cc} \cos(\theta_{\mathrm{r}_{\mathrm{l}}}^{[{m}]}) \sin(\phi_{\mathrm{r}_{\mathrm{l}}}^{[{m}]}) \; \; \cos(\theta_{\mathrm{r}_{\mathrm{l}}}^{[{m}]}) \sin(\phi_{\mathrm{r}_{\mathrm{l}}}^{[{m}]}) \end{array}\right]^{\mathrm{T}}$ and $\bm{\nu}_{4} = \left[\begin{array}{cc} \sin(\theta_{\mathrm{r}_{\mathrm{l}}}^{[{m}]}) \cos(\phi_{\mathrm{r}_{\mathrm{l}}}^{[{m}]}) \; \; -\sin(\theta_{\mathrm{r}_{\mathrm{l}}}^{[{m}]}) \sin(\phi_{\mathrm{r}_{\mathrm{l}}}^{[{m}]}) \end{array}\right]^{\mathrm{T}}$ can  be obtained as a linear combination of $\bm{\nu}_{1}$ and $\bm{\nu}_{2}$.
\end{lemma}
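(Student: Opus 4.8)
The plan is to prove full rank of $\mathbf{V}$ by directly computing its determinant, and then to obtain the statement about $\bm{\nu}_{3}$ and $\bm{\nu}_{4}$ as an immediate corollary of the fact that the two columns of a full-rank $2\times 2$ matrix form a basis of $\mathbb{R}^{2}$.

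First I would expand $\det\mathbf{V}$ and simplify: multiplying out the two cross terms and then using $\sin^{2}\phi_{\mathrm{t},\mathrm{l}}^{[{m}]}+\cos^{2}\phi_{\mathrm{t},\mathrm{l}}^{[{m}]}=1$ collapses the azimuth dependence, leaving
\[
\det\mathbf{V} \;=\; -\cos\theta_{\mathrm{t},\mathrm{l}}^{[{m}]}\sin\theta_{\mathrm{t},\mathrm{l}}^{[{m}]}\;=\;-\tfrac{1}{2}\sin\!\big(2\theta_{\mathrm{t},\mathrm{l}}^{[{m}]}\big).
\]
Hence $\mathbf{V}$ is singular exactly when $\theta_{\mathrm{t},\mathrm{l}}^{[{m}]}\in\{0,\pi/2\}$, i.e.\ when the UE lies on the boresight of the RIS or in the plane containing the RIS. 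For a passive URA these are degenerate, non-generic configurations that are ruled out by the standing far-field geometry (the UE is strictly off the surface and not collinear with its normal), so for the scenarios considered $\det\mathbf{V}\neq 0$ and $\mathbf{V}$ has full rank.

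Since $\mathbf{V}$ is full rank, $\bm{\nu}_{1}$ and $\bm{\nu}_{2}$ are linearly independent and span $\mathbb{R}^{2}$. Therefore any vector of $\mathbb{R}^{2}$ --- and in particular $\bm{\nu}_{3}$ and $\bm{\nu}_{4}$, which are formed from the angle-of-incidence pair $(\theta_{\mathrm{r},\mathrm{l}}^{[{m}]},\phi_{\mathrm{r},\mathrm{l}}^{[{m}]})$ --- can be written uniquely as $\mathbf{V}\mathbf{c}$ with coefficient vector $\mathbf{c}=\mathbf{V}^{-1}\bm{\nu}_{3}$ (respectively $\mathbf{V}^{-1}\bm{\nu}_{4}$), which is precisely the asserted linear combination of $\bm{\nu}_{1}$ and $\bm{\nu}_{2}$.

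The only step requiring care is ruling out $\theta_{\mathrm{t},\mathrm{l}}^{[{m}]}\in\{0,\pi/2\}$; I expect to dispose of it by invoking the far-field and passive-URA modeling assumptions already in force. Everything else is a one-line determinant evaluation and an elementary basis argument. Looking ahead to the use of this lemma, I would also remark that, because the gradients of the RIS phase profile with respect to both the incidence and reflection angles are, for a planar array, proportional to $\bm{\nu}_{1},\bm{\nu}_{2},\bm{\nu}_{3},\bm{\nu}_{4}$, this spanning property is exactly what renders the angle-of-incidence Fisher information redundant once the angle of reflection is retained.
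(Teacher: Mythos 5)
Your proof is correct and takes essentially the same route as the paper's: the paper likewise establishes full rank via the determinant of the $2\times 2$ matrix (written in the translated Cartesian coordinates $\tilde{\mathbf{v}}^{[m]}$ instead of the angles) and dismisses the degenerate geometry, and it disposes of the second claim with the same elementary spanning argument in $\mathbb{R}^2$. If anything, your angular form $\det\mathbf{V}=-\sin\theta_{\mathrm{t},\mathrm{l}}^{[m]}\cos\theta_{\mathrm{t},\mathrm{l}}^{[m]}$ is slightly more explicit, since it surfaces both degenerate configurations $\theta_{\mathrm{t},\mathrm{l}}^{[m]}\in\{0,\pi/2\}$, whereas the paper's Cartesian statement singles out only the boresight condition $({v}^{[m]}_{\tilde{x}})^2+({v}^{[m]}_{\tilde{y}})^2=0$ and leaves the in-plane case ($v^{[m]}_{\tilde{z}}=0$) implicit.
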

\begin{proof}
By simple geometry and  based on the angle definitions, we can write (\ref{equ:AoI_AoR}) as 
\begin{equation}
\label{equ:AoI_AoR_equivalent}
\begin{aligned}
\mathbf{V} 
= 
&\left[\begin{array}{cc}
 \frac{{v}^{[{m}]}_{\Tilde{y}} {v}^{[{m}]}_{\Tilde{z}}}{\left(\|\Tilde{\mathbf{v}}^{[{m}]}_{}\| \sqrt{({v}^{[{m}]}_{\Tilde{x}})^2 + ({v}^{[{m}]}_{\Tilde{y}})^2 } \right)} &
\frac{{v}^{[{m}]}_{\Tilde{x}}}{\|\Tilde{\mathbf{v}}^{[{m}]}_{}\|}    \\
\frac{{v}^{[{m}]}_{\Tilde{x}} {v}^{[{m}]}_{\Tilde{z}}}{\left(\|\Tilde{\mathbf{v}}^{[{m}]}_{}\| \sqrt{({v}^{[{m}]}_{\Tilde{x}})^2 + ({v}^{[{m}]}_{\Tilde{y}})^2 } \right)} &\frac{-{v}^{[{m}]}_{\Tilde{y}}}{\|\Tilde{\mathbf{v}}^{[{m}]}_{}\|}    \\ 
\end{array}\right]. 
\end{aligned}
\end{equation}
Based on the property that a rank deficient matrix has a zero determinant, we  obtain $-({v}^{[{m}]}_{\Tilde{y}})^2
 = ({v}^{[{m}]}_{\Tilde{x}})^2$ as the only condition for rank deficiency. Because this rank deficiency condition is not possible, the lemma is proved.
The second part of the Lemma is obvious as $\bm{\nu}_{3}$ and $\bm{\nu}_{4}$ are 2D vectors which can be obtained from  a linear combination of $\bm{\nu}_{1}$ and  $\bm{\nu}_{2}$.
\end{proof}
The following corollaries establish relationships between the information provided by the FIMs of various channel parameters. The first corollary is a vital step in showing dependence among some of the FIMs of the geometric channel parameters.
It establishes a relationship between the following: i) the derivative with respect to the elevation AoR of the exponent in the array response vector due to the reflected signal at the ${m}^{\text{th}}$ RIS specified by $ \mathbf{K}_{\mathrm{t}_{\mathrm{l}}}^{[{m}]}$, ii) the derivative with respect to the azimuth AoR of the exponent in the array response vector due to reflected signal at the same RIS specified by $ \mathbf{P}_{\mathrm{t}_{\mathrm{l}}}^{[{m}]}$, and iii) the derivative with respect to the elevation AoI of the exponent in the array response vector due to incident signal at the same RIS specified by $ \mathbf{K}_{\mathrm{\mathrm{r}_{\mathrm{l}}}}^{[{m}]}$. More specifically, the first corollary indicates that there exists a linear combination of (i) and (ii) that produces (iii). Additionally, in the first corollary, a similar statement is made about the azimuth AoI.
\begin{corollary}
\label{corollary:corollary_ura_p_k_p}
For a RIS deployed as a passive URA with a normal in the z-direction, there exist scalars $\alpha_1$, $\alpha_2$, $\alpha_3$, and $\alpha_4$ such that
\begin{equation}
\label{equ:corollary_p_t_l_k_r_l}
\begin{aligned}
\alpha_1 \mathbf{K}_{\mathrm{t}_{\mathrm{l}}}^{[{m}]} + \alpha_2 \mathbf{P}_{\mathrm{t}_{\mathrm{l}}}^{[{m}]} =  \mathbf{K}_{\mathrm{\mathrm{r}_{\mathrm{l}}}} ^{[{m}]}, \; \;
\alpha_3 \mathbf{K}_{\mathrm{t}_{\mathrm{l}}}^{[{m}]} + \alpha_4 \mathbf{P}_{\mathrm{t}_{\mathrm{l}}}^{[{m}]} =  \mathbf{P}_{\mathrm{\mathrm{r}_{\mathrm{l}}}} ^{[{m}]}.
\end{aligned}
\end{equation}
\end{corollary}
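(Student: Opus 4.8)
The plan is to exploit the planar geometry of a $z$-normal URA to collapse the four diagonal matrices appearing in (\ref{equ:corollary_p_t_l_k_r_l}) onto a two-dimensional space, and then to read off the scalars $\alpha_1,\ldots,\alpha_4$ from a $2\times 2$ linear system whose coefficient matrix coincides, up to a row swap, with the matrix $\mathbf{V}$ of Lemma~\ref{lemma:AoI_AoR}.

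First I would invoke the hypothesis that the RIS normal points along $z$: in the RIS local frame this forces $z_{\mathrm{l},{m},n}=0$ for every element, i.e.\ $\mathbf{Z}_{\mathrm{l,m}}=\mathbf{0}$, so the third (``$\cos\theta$'') component of the wavenumber vector never contributes. Consequently the diagonal vector of $\Delta_{\mathrm{l,m}}^{\mathrm{T}}\mathbf{k}(\theta,\phi)$ is $\tfrac{2\pi}{\lambda}\sin\theta\,(\cos\phi\,\mathbf{X}_{\mathrm{l,m}}+\sin\phi\,\mathbf{Y}_{\mathrm{l,m}})$. Differentiating in $\theta$ and in $\phi$ and substituting the relevant angle, I obtain (up to the common factor $\tfrac{2\pi}{\lambda}$) that $\mathbf{K}_{\mathrm{t},\mathrm{l}}^{[{m}]}=\cos\theta_{\mathrm{t},\mathrm{l}}^{[{m}]}\big(\cos\phi_{\mathrm{t},\mathrm{l}}^{[{m}]}\mathbf{X}_{\mathrm{l,m}}+\sin\phi_{\mathrm{t},\mathrm{l}}^{[{m}]}\mathbf{Y}_{\mathrm{l,m}}\big)$ and $\mathbf{P}_{\mathrm{t},\mathrm{l}}^{[{m}]}=\sin\theta_{\mathrm{t},\mathrm{l}}^{[{m}]}\big(-\sin\phi_{\mathrm{t},\mathrm{l}}^{[{m}]}\mathbf{X}_{\mathrm{l,m}}+\cos\phi_{\mathrm{t},\mathrm{l}}^{[{m}]}\mathbf{Y}_{\mathrm{l,m}}\big)$, with the analogous expressions for $\mathbf{K}_{\mathrm{r},\mathrm{l}}^{[{m}]}$ and $\mathbf{P}_{\mathrm{r},\mathrm{l}}^{[{m}]}$ obtained by replacing the subscript $\mathrm{t}$ with $\mathrm{r}$. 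All four matrices therefore lie in the linear span of $\{\mathbf{X}_{\mathrm{l,m}},\mathbf{Y}_{\mathrm{l,m}}\}$.

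Next I would impose $\alpha_1\mathbf{K}_{\mathrm{t},\mathrm{l}}^{[{m}]}+\alpha_2\mathbf{P}_{\mathrm{t},\mathrm{l}}^{[{m}]}=\mathbf{K}_{\mathrm{r},\mathrm{l}}^{[{m}]}$ and equate the coefficients of $\mathbf{X}_{\mathrm{l,m}}$ and of $\mathbf{Y}_{\mathrm{l,m}}$. This yields a $2\times 2$ linear system for $(\alpha_1,\alpha_2)$ whose coefficient matrix has columns $\cos\theta_{\mathrm{t},\mathrm{l}}^{[{m}]}[\cos\phi_{\mathrm{t},\mathrm{l}}^{[{m}]},\sin\phi_{\mathrm{t},\mathrm{l}}^{[{m}]}]^{\mathrm{T}}$ and $\sin\theta_{\mathrm{t},\mathrm{l}}^{[{m}]}[-\sin\phi_{\mathrm{t},\mathrm{l}}^{[{m}]},\cos\phi_{\mathrm{t},\mathrm{l}}^{[{m}]}]^{\mathrm{T}}$ and right-hand side $\cos\theta_{\mathrm{r},\mathrm{l}}^{[{m}]}[\cos\phi_{\mathrm{r},\mathrm{l}}^{[{m}]},\sin\phi_{\mathrm{r},\mathrm{l}}^{[{m}]}]^{\mathrm{T}}$; this coefficient matrix is exactly the matrix $\mathbf{V}$ of (\ref{equ:AoI_AoR}) with its two rows interchanged, hence shares $\mathbf{V}$'s rank. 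By Lemma~\ref{lemma:AoI_AoR} it is full rank, so the system has a (unique) solution, which furnishes $\alpha_1,\alpha_2$. Repeating with the right-hand side replaced by $\sin\theta_{\mathrm{r},\mathrm{l}}^{[{m}]}[-\sin\phi_{\mathrm{r},\mathrm{l}}^{[{m}]},\cos\phi_{\mathrm{r},\mathrm{l}}^{[{m}]}]^{\mathrm{T}}$ (the coefficient pattern of $\mathbf{P}_{\mathrm{r},\mathrm{l}}^{[{m}]}$), and keeping the same coefficient matrix, furnishes $\alpha_3,\alpha_4$. Since matching the $\mathbf{X}_{\mathrm{l,m}}$- and $\mathbf{Y}_{\mathrm{l,m}}$-coefficients is sufficient for the equality of two diagonal matrices of this form, both identities in (\ref{equ:corollary_p_t_l_k_r_l}) follow.

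The hard part will be the first step, i.e.\ recognizing that the $z$-normal hypothesis is precisely what annihilates the $\cos\theta\,\mathbf{Z}_{\mathrm{l,m}}$ contribution: without it the four matrices would generically span a three-dimensional subspace and no such pair of scalars could exist. Once the planar structure is in place, the corollary is essentially a restatement of the nondegeneracy already established in Lemma~\ref{lemma:AoI_AoR}; the only remaining care needed is the bookkeeping of the sign conventions in $\mathbf{k}(\theta,\phi)$ when differentiating with respect to $\phi$, and tracking the harmless row swap relating the coefficient matrix to $\mathbf{V}$.
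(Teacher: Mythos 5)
Your proposal is correct and follows essentially the same route as the paper: the $z$-normal URA makes $\mathbf{Z}_{\mathrm{l,m}}$ irrelevant, so each of $\mathbf{K}_{\mathrm{t},\mathrm{l}}^{[m]},\mathbf{P}_{\mathrm{t},\mathrm{l}}^{[m]},\mathbf{K}_{\mathrm{r},\mathrm{l}}^{[m]},\mathbf{P}_{\mathrm{r},\mathrm{l}}^{[m]}$ decomposes over $\{\mathbf{X}_{\mathrm{l,m}},\mathbf{Y}_{\mathrm{l,m}}\}$ with coefficient vectors matching $\bm{\nu}_1,\bm{\nu}_2,\bm{\nu}_3,\bm{\nu}_4$, and the full rank of $\mathbf{V}$ from Lemma~\ref{lemma:AoI_AoR} gives the scalars. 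Your version merely makes explicit the $2\times 2$ system (and the harmless row swap) that the paper's proof leaves implicit.
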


\begin{proof}
\label{equ:proof_corollary_p_t_l_k_r_l}
The diagonal matrices in (\ref{equ:corollary_p_t_l_k_r_l}) have a size of $(N_L^{[{m}]})$ and each element in these matrices can be decomposed into  components in the $\mathrm{x}$ and $\mathrm{y}$ directions. From (\ref{equ:FIM_parameter_terms_ris_2}), the angle component of $ \mathbf{K}_{\mathrm{t}_{\mathrm{l}}}^{[{m}]}$ in the $\mathrm{x}$ and $\mathrm{y}$ direction can be shown to correspond with the elements of  $\bm{\nu}_{1}$. More specifically
$
\mathbf{K}_{\mathrm{t}_{\mathrm{l}}}^{[{m}]} = \pi \mathbf{Y}_{\mathrm{l,m}} \cos(\theta_{\mathrm{t}_{\mathrm{l}}}^{[{m}]}) \sin(\phi_{\mathrm{t}_{\mathrm{l}}}^{[{m}]}) + \pi \mathbf{X}_{\mathrm{l,m}} \cos(\theta_{\mathrm{t}_{\mathrm{l}}}^{[{m}]}) \cos(\phi_{\mathrm{t}_{\mathrm{l}}}^{[{m}]}).
$
Similarly, the angle components of  $ \mathbf{P}_{\mathrm{t}_{\mathrm{l}}}^{[{m}]}$ and $\mathbf{K}_{\mathrm{\mathrm{r}_{\mathrm{l}}}} ^{[{m}]}$ can be shown to equal $\bm{\nu}_{2}$ and $\bm{\nu}_{3}$, respectively. Hence,  $\mathbf{K}_{\mathrm{\mathrm{r}_{\mathrm{l}}}} ^{[{m}]}$ is a linear combination of $\mathbf{K}_{\mathrm{t}_{\mathrm{l}}}^{[{m}]} $ and $\mathbf{P}_{\mathrm{t}_{\mathrm{l}}}^{[{m}]}$. The second equation in this corollary can be proved similarly.
\end{proof}
Corollary \ref{corollary:fim_ura_unitary} shows that the information about the AoI can be obtained as linear combination of the information about the AoR. 

\begin{corollary}
\label{corollary:fim_ura_unitary}
For a unitary RIS correlation matrix and with Corollary \ref{corollary:corollary_ura_p_k_p},  there exist scalars $\alpha_1$, $\alpha_2$, $\alpha_3$, and $\alpha_4$ such that
\begin{equation}
\label{equ:corollary_fim_ura_unitary}
\begin{aligned}
\alpha_1 \mathbf{J}_{\mathbf{v}_1 \bm{\theta}_{\mathrm{t}_{\mathrm{l}}}} + \alpha_2 \mathbf{J}_{\mathbf{v}_1 \bm{\phi}_{\mathrm{t}_{\mathrm{l}}}} =  \mathbf{J}_{\mathbf{v}_1 \bm{\theta}_{\mathrm{r}_{\mathrm{l}}}}, \\ 
\alpha_3 \mathbf{J}_{\mathbf{v}_1 \bm{\theta}_{\mathrm{t}_{\mathrm{l}}}} + \alpha_4 \mathbf{J}_{\mathbf{v}_1 \bm{\phi}_{\mathrm{t}_{\mathrm{l}}}} =  \mathbf{J}_{\mathbf{v}_1 \bm{\phi}_{\mathrm{r}_{\mathrm{l}}}},
\end{aligned}
\end{equation}
where  $\mathbf{v}_1 \in \bm{\eta}$.
\end{corollary}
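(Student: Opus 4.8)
The plan is to exploit the common Hadamard-product structure shared by all of the FIM submatrices. Fix $\mathbf{v}_1\in\bm{\eta}$. Among the four submatrices $\mathbf{J}_{\mathbf{v}_1\bm{\theta}_{\mathrm{t},\mathrm{l}}}$, $\mathbf{J}_{\mathbf{v}_1\bm{\phi}_{\mathrm{t},\mathrm{l}}}$, $\mathbf{J}_{\mathbf{v}_1\bm{\theta}_{\mathrm{r},\mathrm{l}}}$, $\mathbf{J}_{\mathbf{v}_1\bm{\phi}_{\mathrm{r},\mathrm{l}}}$, differentiation with respect to the second index (an AoR or AoI at a RIS) perturbs only the RIS array responses, and hence only one factor in the Hadamard decomposition of a generic submatrix $\mathbf{J}_{\mathbf{v}_1\mathbf{v}_2}$: the RIS-gain vector contributed by the second index, which is $\mathbf{k}_{\mathrm{t},\mathrm{l}}$, $\mathbf{p}_{\mathrm{t},\mathrm{l}}$, $\mathbf{k}_{\mathrm{r},\mathrm{l}}$, $\mathbf{p}_{\mathrm{r},\mathrm{l}}$, respectively. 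The receiver factor, the transmitter factor, the signal factor, the correlation factor $\mathbf{D}_{\gamma}^{\mathrm{H}}\mathbf{D}_{\gamma}$, and the $\mathbf{v}_1$-side RIS-gain factor are identical across all four. Thus the corollary reduces to the corresponding linear relation among these four RIS-gain vectors.

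First I would establish that relation. From \eqref{equ:FIM_parameter_terms_ris_2}--\eqref{equ:FIM_parameter_terms_ris_2b}, the entry of $\mathbf{k}_{\mathrm{t},\mathrm{l}}$, $\mathbf{p}_{\mathrm{t},\mathrm{l}}$, $\mathbf{k}_{\mathrm{r},\mathrm{l}}$, or $\mathbf{p}_{\mathrm{r},\mathrm{l}}$ associated with RIS $m$ is the bilinear form $(\mathbf{a}_{\mathrm{t},\mathrm{l}}^{[m]})^{\mathrm{H}}\mathbf{M}^{[m]}\mathbf{\Gamma}^{[m]}\mathbf{a}_{\mathrm{r},\mathrm{l}}^{[m]}$, with $\mathbf{M}^{[m]}$ being the real diagonal matrix $\mathbf{K}_{\mathrm{t},\mathrm{l}}^{[m]}$, $\mathbf{P}_{\mathrm{t},\mathrm{l}}^{[m]}$, $\mathbf{K}_{\mathrm{r},\mathrm{l}}^{[m]}$, or $\mathbf{P}_{\mathrm{r},\mathrm{l}}^{[m]}$ (the conjugate transposes appearing in the definitions of $\mathbf{a}_{k_{\mathrm{t},\mathrm{l}}}^{[m]}$, etc., merely move $\mathbf{M}^{[m]}$ next to $(\mathbf{a}_{\mathrm{t},\mathrm{l}}^{[m]})^{\mathrm{H}}$, since these diagonal matrices are Hermitian). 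Substituting the identities $\mathbf{K}_{\mathrm{r},\mathrm{l}}^{[m]}=\alpha_1\mathbf{K}_{\mathrm{t},\mathrm{l}}^{[m]}+\alpha_2\mathbf{P}_{\mathrm{t},\mathrm{l}}^{[m]}$ and $\mathbf{P}_{\mathrm{r},\mathrm{l}}^{[m]}=\alpha_3\mathbf{K}_{\mathrm{t},\mathrm{l}}^{[m]}+\alpha_4\mathbf{P}_{\mathrm{t},\mathrm{l}}^{[m]}$ of Corollary~\ref{corollary:corollary_ura_p_k_p} and invoking bilinearity yields, entrywise, $[\mathbf{k}_{\mathrm{r},\mathrm{l}}]_m=\alpha_1[\mathbf{k}_{\mathrm{t},\mathrm{l}}]_m+\alpha_2[\mathbf{p}_{\mathrm{t},\mathrm{l}}]_m$ and $[\mathbf{p}_{\mathrm{r},\mathrm{l}}]_m=\alpha_3[\mathbf{k}_{\mathrm{t},\mathrm{l}}]_m+\alpha_4[\mathbf{p}_{\mathrm{t},\mathrm{l}}]_m$; since the $\alpha_i$ are real (they solve the $2\times2$ system built from the full-rank matrix $\mathbf{V}$ of Lemma~\ref{lemma:AoI_AoR}), the outer conjugate transpose in these vector definitions preserves the relation.

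Next I would lift the relation to the submatrices. Because $\Re\{\cdot\}$, the Hadamard product, and the outer product $\mathbf{k}_{*}\mathbf{k}_{\mathrm{r},\mathrm{l}}^{\mathrm{H}}$ (with $\mathbf{k}_{*}$ the common $\mathbf{v}_1$-side RIS-gain vector) are all linear in $\mathbf{k}_{\mathrm{r},\mathrm{l}}$ over the reals and every remaining factor is shared, the entrywise relation propagates to $\mathbf{J}_{\mathbf{v}_1\bm{\theta}_{\mathrm{r},\mathrm{l}}}$ and $\mathbf{J}_{\mathbf{v}_1\bm{\phi}_{\mathrm{r},\mathrm{l}}}$. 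This is precisely where the hypothesis $\mathbf{D}_{\gamma}^{\mathrm{H}}\mathbf{D}_{\gamma}=\mathbf{I}_{M_1}$ enters: it renders each of the four submatrices diagonal, so the identity holds RIS by RIS and the per-RIS scalars $\alpha_i$ of Corollary~\ref{corollary:corollary_ura_p_k_p} act exactly as the diagonal scaling in \eqref{equ:corollary_fim_ura_unitary}. (Without unitarity, the same computation still yields $\mathbf{J}_{\mathbf{v}_1\bm{\theta}_{\mathrm{r},\mathrm{l}}}=\mathbf{J}_{\mathbf{v}_1\bm{\theta}_{\mathrm{t},\mathrm{l}}}\mathbf{A}_1+\mathbf{J}_{\mathbf{v}_1\bm{\phi}_{\mathrm{t},\mathrm{l}}}\mathbf{A}_2$ with $\mathbf{A}_1,\mathbf{A}_2$ diagonal matrices collecting the per-RIS scalars, using that Hadamard multiplication on the left commutes with right multiplication by a diagonal matrix.)

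The main obstacle is the structural bookkeeping in the first step: one must verify, against the explicit entries in Appendix~\ref{appendix:FIM_entries}, that for a fixed $\mathbf{v}_1$ the differentiated RIS angle never appears in the receiver, transmitter, signal, correlation, or $\mathbf{v}_1$-side factor, so that the four submatrices genuinely differ only through $\mathbf{k}_{\mathrm{t},\mathrm{l}}$, $\mathbf{p}_{\mathrm{t},\mathrm{l}}$, $\mathbf{k}_{\mathrm{r},\mathrm{l}}$, $\mathbf{p}_{\mathrm{r},\mathrm{l}}$, and one must track the RIS index $m$ so that Corollary~\ref{corollary:corollary_ura_p_k_p}, which is a per-RIS statement, is applied to the correct array responses. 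Once this is in place, the two displayed identities in \eqref{equ:corollary_fim_ura_unitary} follow from the elementary linearity argument above, the second obtained from the first by replacing $(\alpha_1,\alpha_2)$ with $(\alpha_3,\alpha_4)$.
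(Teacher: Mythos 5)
Your proposal is correct and follows essentially the same route as the paper's own proof: both isolate the second-index RIS-gain vectors $\mathbf{k}_{\mathrm{t},\mathrm{l}},\mathbf{p}_{\mathrm{t},\mathrm{l}},\mathbf{k}_{\mathrm{r},\mathrm{l}},\mathbf{p}_{\mathrm{r},\mathrm{l}}$ as the only Hadamard factor that differs among the four submatrices, apply Corollary \ref{corollary:corollary_ura_p_k_p} inside the per-RIS bilinear form $\left({\mathbf{a}}_{\mathrm{t},\mathrm{l}}^{[{m}]}\right)^{\mathrm{H}}\mathbf{M}^{[{m}]}\Tilde{\mathbf{a}}_{\mathrm{r},\mathrm{l}}^{[{m}]}$, and use the diagonality induced by $\mathbf{D}_{\gamma}^{\mathrm{H}}\mathbf{D}_{\gamma}=\mathbf{I}_{{M}_1}$ to lift the relation to the FIM submatrices, exactly as the paper does with its dummy common-factor matrix $\mathbf{V}_{\mathbf{v}_1}$. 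Your explicit observation that the scalars $\alpha_i$ are per-RIS quantities (so the identity really holds diagonal entry by diagonal entry, or with diagonal scaling matrices in general) is a slightly more careful reading of the same argument, not a different one.
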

\begin{proof}
First, note that the FIMs in the above corollary are diagonal matrices. Due to the properties relating Hadamard products with diagonal matrices \cite{horn2012matrix} and Lemma \ref{lemma:entries_in_FIM}, the left-hand side of the above Corollary can be decomposed as 
$\medmath{\frac{2}{\sigma^2}\sum_{q = 1}^{N_{Q}}\Re{\left\{\mathbf{V}_{\mathbf{v}_1} \odot \left[\alpha_1 \left(\mathbf{k}_{q,\mathbf{v}_1} \mathbf{k}_{q,\mathrm{t}_{\mathrm{l}}}^{\mathrm{H}}\right)  + \alpha_2  \left(\mathbf{k}_{q,\mathbf{v}_1} \mathbf{p}_{q,\mathrm{t}_{\mathrm{l}}}^{\mathrm{H}}\right) \right]  \odot\left(\mathbf{D}_{\mathrm{\gamma}}^{\mathrm{H}} \mathbf{D}_{\mathrm{\gamma}}\right)\right\} },} $
where $\mathbf{V}_{\mathbf{v}_1}$ is a dummy diagonal matrix representing the common terms between $\mathbf{J}_{q,\mathbf{v}_1 \bm{\theta}_{\mathrm{t}_{\mathrm{l}}}}$ and $\mathbf{J}_{q,\mathbf{v}_1 \bm{\phi}_{\mathrm{t}_{\mathrm{l}}}}$. Analyzing the diagonal elements of the matrices in the square brackets gives
\begin{equation}
\label{equ:proof_corollary_fim_ura_unitary_1}
\begin{aligned}
\medmath{\frac{2}{\sigma^2}\sum_{q = 1}^{N_{Q}}\Re{\left\{{{v}}_{q,{v}_1}^{[{m}]} \odot  \mathbf{{a}}^{\mathrm{H}[{m}]}_{q,{\mathrm{t}_{\mathrm{l}}}}\left[\alpha_1 \mathbf{K}_{\mathrm{t}_{\mathrm{l}}}^{[{m}]}  + \alpha_2  \mathbf{P}_{\mathrm{t}_{\mathrm{l}}}^{[{m}]} \right]  \Tilde{\mathbf{a}}_{q,\mathrm{r}_{\mathrm{l}}}^{[{m}]} \odot\left(\mathbf{D}_{\mathrm{\gamma}}^{\mathrm{H}} \mathbf{D}_{\mathrm{\gamma}}\right)\right\} }},
\end{aligned}
\end{equation}
where ${{v}}_{q,{v}_1}^{[{m}]}$ represents the common terms on the $m^{\text {th }}$ diagonal .
From  Corollary \ref{corollary:corollary_ura_p_k_p}, the terms in the square brackets equals $ \mathbf{K}_{\mathrm{\mathrm{r}_{\mathrm{l}}}}^{[{m}]}$, hence $\frac{2}{\sigma^2}\sum_{q = 1}^{N_{Q}}\Re{\left\{{{v}}_{q,{v}_1}^{[{m}]} \odot  \mathbf{{a}}^{\mathrm{H}[{m}]}_{q,{\mathrm{t}_{\mathrm{l}}}} \mathbf{K}_{\mathrm{\mathrm{r}_{\mathrm{l}}}}^{[{m}]}   \Tilde{\mathbf{a}}_{q,\mathrm{r}_{\mathrm{l}}}^{[{m}]} \odot\left(\mathbf{D}_{\mathrm{\gamma}}^{\mathrm{H}} \mathbf{D}_{\mathrm{\gamma}}\right)\right\} }$,
and we can write
$\frac{2}{\sigma^2}\sum_{q = 1}^{N_{Q}}\Re{\left\{\mathbf{V}_{\mathbf{v}_1} \odot  \left(\mathbf{k}_{q,\mathbf{v}_1} \mathbf{k}_{q,\mathrm{r}_{\mathrm{l}}}^{\mathrm{H}}\right)  \odot\left(\mathbf{D}_{\mathrm{\gamma}}^{\mathrm{H}} \mathbf{D}_{\mathrm{\gamma}}\right)\right\} } = \mathbf{J}_{\mathbf{v}_1 \bm{\theta}_{\mathrm{r}_{\mathrm{l}}}}.$
The second part of the corollary can be proved similarly.
\end{proof}
\begin{remark}
\label{remark:data_FIM_rank_def}
From Corollary \ref{corollary:fim_ura_unitary}, the FIM of the channel parameters $\bm{\eta}$ specified by $\mathbf{J}_{\bm{\eta}}^{\mathrm{D}}$ is  rank deficient and non-invertible. Also, if there are ${M}_1$ RISs, the resultant FIM $\mathbf{J}_{\bm{\eta}}^{\mathrm{D}}$ has a rank of atmost $11{M}_1 - 2{M}_1$. From Corollary \ref{corollary:fim_ura_unitary}, the AoIs and AoRs can not be estimated separately, irrespective of parallel or non-parallel RIS coefficients. Hence, the model is non-identifiable. The cause of the non-identifiability is that the parameter vector $\bm{\eta}$ is parameter redundant. This can be seen by noting that the parameter vector can be reparameterized in terms of a smaller set of parameters. This reparameterization can be achieved by: i) replacing the AoI and AoR (azimuth and elevation) with the RIS orientation offsets, ii) replacing the azimuth and elevation AoD with the azimuth and elevation angles in the unit vector that points from the BS to the RIS, and iii) replacing the azimuth and elevation AoA with the azimuth and elevation angles in the unit vector pointing from the RIS to the UE.   This reparameterization reduces the size of the parameter vector to $9M_1$, and the length of the parameter vector of the geometric channel parameters decreases to $7M_1$. With this reparameterization, the EFIM of the geometric channel parameters is now a full-rank matrix when non-parallel RIS coefficients are used.
\end{remark}

\subsection{Fisher Information for RIS Paths Plus LOS}
In this subsection, we analyze the FIM of the LOS plus RIS paths.
We define the geometric LOS channel parameters $ \bm{\psi}_{1} \triangleq\left[\theta_{\mathrm{r}_{\mathrm{u}}}^{[0]}, \phi_{\mathrm{r}_{\mathrm{u}}}^{[0]}, \theta_{\mathrm{t}_{\mathrm{u}}}^{[0]}, \phi_{\mathrm{t}_{\mathrm{u}}}^{[0]}, \tau^{[\mathrm{0}]}_{} \right]^{\mathrm{T}}
$ and  the LOS nuisance parameter as $ \bm{\psi}_{2} \triangleq\left[{\beta}_{\mathrm{R}}^{\mathrm{[0]}}, {\beta}_{\mathrm{I}}^{\mathrm{[0]}} 
\right]^{\mathrm{T}}
$. 
The LOS plus RIS geometric channel parameters are defined as $ \bm{\zeta}_{1} \triangleq\left[ \bm{\psi}_{1}^{\mathrm{T}},  \bm{\eta}_{1}^{\mathrm{T}} \right]^{\mathrm{T}}$ and the LOS plus RIS nuisance parameters is defined as $ \bm{\zeta}_{2} \triangleq\left[ \bm{\psi}_{2}^{\mathrm{T}},  \bm{\eta}_{2}^{\mathrm{T}} \right]^{\mathrm{T}}.$ Hence, the LOS plus RIS channel parameters can be written as
$
\begin{aligned}
\begin{array}{llll}
\bm{\zeta}_{} \triangleq\left[ \bm{\zeta}_{1}^{\mathrm{T}},  \bm{\zeta}_{2}^{\mathrm{T}} \right]^{\mathrm{T}}.
\end{array}
\end{aligned}
$
To write the FIM of $\bm{\zeta}_{}$, we define the PDF as
\begin{equation}
\label{equ:los_pdf_joint_channel}
\begin{aligned}
\chi(\mathbf{r}_{t}[n];  \bm{\zeta}_{} ) = \chi(\mathbf{r}_{t}[n] |  \bm{\zeta}_{} ) \chi(\bm{\zeta}_{} ),
\end{aligned}
\end{equation}
where $\chi(\bm{\zeta}_{} ) = \chi(\bm{\psi}_{1}) \chi( \bm{\psi}_{2})\chi(\bm{\eta}_{1}) \chi( \bm{\eta}_{2}) $. The FIM of the LOS and the RIS channel parameters due to observation $\bm{r}$ has the structure 
$
\mathbf{J}_{\zeta}^{\mathrm{D}} \triangleq\left[\begin{array}{c|ccc}
\mathbf{J}_{\psi}^{\mathrm{D}} & \mathbf{J}_{\psi \eta}^{\mathrm{D}} \\
\hline \mathbf{J}_{ \eta \psi}^{\mathrm{D}} & \mathbf{J}_{\eta}^{\mathrm{D}} \\
\end{array}\right],
$
where $\mathbf{J}_{\zeta}^{\mathrm{D}} \in \mathbb{R}^{(11 {M}_1 +7) \times (11 {M}_1 +7)}$,  $\mathbf{J}_{\psi}^{\mathrm{D}} \in \mathbb{R}^{7 \times 7}$, and $\mathbf{J}_{\psi \eta}^{\mathrm{D}} \in \mathbb{R}^{7 \times 11 {M}_1}$. The entries of the latter two matrices are written in Appendix \ref{appendix:EFIM_Los}. The Bayesian FIM $\mathbf{J}_{ \bm{\zeta}_{}}$ can be written as described in Section \ref{section:Fisher_Information_for_RIS_Paths}. Likewise, the equivalent Bayesian EFIM $\mathbf{J}_{ \bm{\zeta}^{}_{1}}^{\mathrm{e}}$ can be written using Definition \ref{definition_EFIM_2}.
\subsection{LOS Related Angle Definitions and Relationships}
This section presents the entries in the transformation matrix that is needed to transform the LOS channel parameters into location parameters. To analyze both the angle relationships and derive the FIM for positioning, we define $\mathbf{g}^{}_{{}}=( \mathbf{p}^{} -  \mathbf{p}_{\mathrm{BS}})$,
and specify the angles of departure at the BS as
$\theta_{\mathrm{t}_{\mathrm{u}}}^{[\mathrm{0}]} =\cos ^{-1}\left({g}^{}_{z_{}} /\|\mathbf{g}^{}_{{}}\|\right), \; \; 
\phi_{\mathrm{t}_{\mathrm{u}}}^{[\mathrm{0}]} =\tan ^{-1}\left({g}^{}_{y_{}} /{g}^{}_{x_{}}\right)$. Next, we translate  the UE to the origin, and the new coordinates of the BS can be written as $\mathbf{e}^{}_{{}}=(\mathbf{p}_{\mathrm{BS}}- \mathbf{p}^{}), \; \;
\Tilde{\mathbf{e}}^{}_{}=\mathbf{Q}^{-1}\left(\theta_{0}^{}, \phi_{0}^{}\right) \mathbf{e}^{}$. With respect to these new coordinates, we can write $\theta_{\mathrm{r}_{\mathrm{u}}}^{[\mathrm{0}]}=\cos ^{-1}\left({e}^{}_{\Tilde{z}} /\|\Tilde{\mathbf{e}}^{}_{}\|\right), \; \;
\phi_{\mathrm{r}_{\mathrm{u}}}^{[\mathrm{0}]} =\tan ^{-1}\left({e}^{}_{\Tilde{y}} /{e}^{}_{\Tilde{x}}\right)$. 
\section{Fisher Information of Location Parameters}
In this section, we derive the FIM and the EFIM of the location parameters. We derive these information matrices for both the case with an arbitrary RIS sequence matrix and the special case with RIS sequences that both generates a unitary correlation matrix and sum to zero (see Assumption \ref{asumption_far_field_rf_control_1}). Based on the FIMs and EFIMs of the location parameters, we also derive expressions for the PEB and OEB for the UE. The location parameters are defined as   $ \bm{\eta}_{L} \triangleq\left[
\mathbf{o}_{}^{\mathrm{T}}, \mathbf{p}_{}^{\mathrm{T}},
 \mathbf{{o}^{\mathrm{T}}}^{[1]}, \mathbf{{p}^{\mathrm{T}}}^{[1]},\mathbf{{o}^{\mathrm{T}}}^{[2]}, \mathbf{{p}^{\mathrm{T}}}^{[2]},\ldots,\mathbf{{o}^{\mathrm{T}}}^{[{M_1}]}, \mathbf{{p}^{\mathrm{T}}}^{[{M_1}]}
\right]^{\mathrm{T}}  \\
\triangleq\left[
\mathbf{o}_{}^{\mathrm{T}}, \mathbf{p}_{}^{\mathrm{T}},
 \mathbf{{q}^{\mathrm{T}}}
\right]^{\mathrm{T}}.$ The PDF $\chi(\mathbf{r}_{t}[n];  \mathbf{\eta}_{L} )$ is obtained as
\begin{equation}
\label{equ:pdf_joint_positioning}
\begin{aligned}
\chi(\mathbf{r}_{t}[n];  \bm{\eta}_{L} ) = \chi(\mathbf{r}_{t}[n] |  \bm{\eta}_{L} ) \chi(\bm{\eta}_{L} )
\end{aligned}
\end{equation}
where 
$\chi(\bm{\eta}_{L} ) = \chi(\mathbf{o}_{}^{}; \mathbf{p}_{}^{}) \prod_{m \in \mathcal{M}_{1}} \chi(\mathbf{o}_{}^{[{m}]}; \mathbf{p}_{}^{[{m}]} | \mathbf{o}_{}^{}, \mathbf{p}_{}^{}) = \chi(\mathbf{o}_{}^{}) \chi( \mathbf{p}_{}^{}) \prod_{m \in \mathcal{M}_{1}} \chi(\mathbf{o}_{}^{[{m}]}| \mathbf{o}_{}^{}, \mathbf{p}_{}^{}) \chi(\mathbf{p}_{}^{[{m}]}| \mathbf{o}_{}^{}, \mathbf{p}_{}^{}).$
Based on the PDF in (\ref{equ:pdf_joint_positioning}), the FIM can be written as $\mathbf{J}_{ \bm{\eta}_{L}}  
= \mathbf{J}_{ \bm{\eta}_{L}} ^{\mathrm{D} } + \mathbf{J}_{ \bm{\eta}_{L}} ^{\mathrm{P} }$.
The parameter vector $\bm{\eta}_{L}$ has a nonlinear relationship with the geometric channel parameters $
\bm{\eta}_{L} = \Upsilon(\bm{\zeta}_{1}).
$ In \cite{kay1993fundamentals}, it was shown that this nonlinear relationship allows the FIM to be written as
\begin{equation}
\label{equ:FIM_position}
\begin{aligned}
\mathbf{J}_{\bm{\eta}_{\mathrm{L}}}^{\mathrm{D}} \triangleq \mathbf{\Upsilon} \mathbf{J}_{\bm{\zeta}_{1}}^{\mathrm{e}} \mathbf{\Upsilon}^{\mathrm{T}} ,
\end{aligned}
\end{equation}
where $\mathbf{\Upsilon}$ is a transformation matrix obtained by finding the gradient of the relationship between the location parameters and the geometric channel parameters.

\subsection{Transformation Matrix from Geometric Channel Parameters to Location Parameters}
The transformation matrix in (\ref{equ:FIM_position})  can be defined as
$$
\mathbf{\Upsilon} \triangleq \frac{\partial \bm{\zeta}_{\mathrm{1}}^{\mathrm{T}}}{\partial \bm{\eta}_{\mathrm{L}}}=\left[\begin{array}{lllll}
\frac{\partial \bm{\psi}^{}_{1}}{\partial \mathbf{o}_{}} &
\frac{\partial \bm{\psi}^{}_{1}}{\partial \mathbf{p}_{}} & \frac{\partial \bm{\psi}^{}_{1}}{\partial \mathbf{q}} \\   
 \frac{\partial \bm{\eta}^{}_{1}}{\partial \mathbf{o}_{}} 
 &
 \frac{\partial \bm{\eta}^{}_{1}}{\partial \mathbf{p}_{}} 
 & \frac{\partial \bm{\eta}^{}_{1}}{\partial \mathbf{q}} \\
\end{array}\right]^{\mathrm{T}}
$$
with the entries of this matrix presented in Appendix \ref{appendix:entries_transformation} and Appendix \ref{appendix:los_entries_transformation}.
\subsection{Bayesian PEB and OEB: Arbitrary Correlation Matrix}
In this section, we derive the FIM for the location parameters  with an arbitrary RIS correlation matrix. This FIM has the structure
\begin{equation}
\label{equ:position_FIM_1}
\begin{aligned}
\mathbf{J}_{\bm{\eta}_{\mathrm{L}}}^{\mathrm{D}} \triangleq \mathbf{\Upsilon} \mathbf{J}_{\bm{\zeta}_{1}}^{\mathrm{e}} \mathbf{\Upsilon}^{\mathrm{T}}  =\left[\begin{array}{cc}
\mathbf{J}_{\bm{\eta}_{L_1}\bm{\eta}_{L_1}} & \mathbf{J}_{\bm{\eta}_{L_1}\bm{\eta}_{L_2}} \\
\mathbf{J}_{\bm{\eta}_{L_1}\bm{\eta}_{L_2}}^{\mathrm{T}} & \mathbf{J}_{\bm{\eta}_{L_2}\bm{\eta}_{L_2}}
\end{array}\right].
    \end{aligned}
\end{equation}
With the assumption of independent prior information about the UE and the RISs, the entries in the Bayesian FIM $\mathbf{J}_{ \bm{\eta}_{L}}$ can be written as $\Tilde{\mathbf{J}}_{ {}_{\mathbf{v}_1 \mathbf{v}_1 }} = {\mathbf{J}}_{ {}_{\mathbf{v}_1 \mathbf{v}_1 }} + {\mathbf{J}}_{ {}_{\mathbf{v}_1 \mathbf{v}_1 }}^{\mathrm{P}},  
\Tilde{\mathbf{J}}_{ {}_{\mathbf{v}_1 \mathbf{v}_2 }} = {\mathbf{J}}_{ {}_{\mathbf{v}_1 \mathbf{v}_2 }}$
where  $\mathbf{v}_{1} , \mathbf{v}_{2} \in \bm{\eta}_L$.  Using Definition \ref{definition_EFIM_2} and (\ref{equ:position_FIM_1}),  
we can write the Bayesian FIM as 
$
\mathbf{J}_{\bm{\eta}_{L}}^{\mathrm{e}}=\mathbf{J}_{_{\bm{\eta}_{L_1}\bm{\eta}_{L_1}}}- \mathbf{J}_{_{\bm{\eta}_{L_1}\bm{\eta}_{L_1}}}^{nu} =\mathbf{J}_{_{\bm{\eta}_{L_1}\bm{\eta}_{L_1}}} - 
\mathbf{J}_{_{\bm{\eta}_{L_1}\bm{\eta}_{L_2}}} \mathbf{J}_{_{\bm{\eta}_{L_2}\bm{\eta}_{L_2}}}^{-1} \mathbf{J}_{_{\bm{\eta}_{L_1}\bm{\eta}_{L_2}}}^{\mathrm{T}}.$  
The term $\mathbf{J}_{_{\bm{\eta}_{L_1}\bm{\eta}_{L_1}}}^{nu}$ accounts for information loss about the UE location due to the uncertainty in the RISs orientation and position. The SPEB and the SOEB with arbitrary RIS correlation matrix are defined as
\begin{equation}
\label{equ:position_EFIM_position_exact}
\begin{aligned}
\operatorname{SOEB} &=\left[\left(\mathbf{J}_{\bm{\eta}_{L}}^{\mathrm{e}}\right)^{-1}\right]_{1,1}+\left[\left(\mathbf{J}_{\bm{\eta}_{L}}^{\mathrm{e}}\right)^{-1}\right]_{2,2}, \\
\operatorname{SPEB} &=\left[\left(\mathbf{J}_{\bm{\eta}_{L}}^{\mathrm{e}}\right)^{-1}\right]_{3,3}+\left[\left(\mathbf{J}_{\bm{\eta}_{L}}^{\mathrm{e}}\right)^{-1}\right]_{4,4}+\left[\left(\mathbf{J}_{\bm{\eta}_{L}}^{\mathrm{e}}\right)^{-1}\right]_{5,5}.
\end{aligned}
\end{equation}

\begin{assumption}
\label{asumption_far_field_rf_control_1}
To ensure that the LOS path can be separated from the RIS paths, we restrict the selection of RIS fast-varying coefficients, $\gamma_{t^{'}}^{[{m}]}$, to coefficients that sum to zero. Hence, the sequence matrix has the following property
\begin{equation}
\label{equ:far_field_rf_control_1}
\begin{aligned}
\mathbf{D}_{\gamma}^{\mathrm{H}} \mathbf{1}_{{M}_1} = \mathbf{0}.
    \end{aligned}
\end{equation}
Please note that this assumption is not very restrictive since practical  discrete codes, such as Hadamard codes and discrete Fourier matrices, satisfy it.
\end{assumption}

\subsection{Bayesian PEB and OEB: Unitary Correlation Matrices and Assumption \ref{asumption_far_field_rf_control_1}}

Under the conditions of unitary RIS correlation matrices, the restriction in (\ref{equ:far_field_rf_control_1}), and independent RIS placements; there is no mutual information between RISs and each path conveys information independently. The restriction also implies that there is no mutual information between the RISs and the LOS path. Hence, the parameters $\bm{\eta}_{1} $ can be rearranged according to paths such $ \Tilde{\bm{\eta}}_{1} \triangleq\left[ {\mathbf{\eta}}^{[1]}_{1}, {\mathbf{\eta}}^{[2]}_{1} , \cdots, {\mathbf{\eta}}^{[{M_1}]}_{1} \right]$ where 
$ 
{\bm{\eta}}^{[{m}]}_{1} \triangleq\left[
{{\theta}_{\mathrm{r}_{\mathrm{u}}}^{[{m}]}}, 
{{\phi}_{\mathrm{r}_{\mathrm{u}}}^{[{m}]}} , {{\theta}_{\mathrm{t}_{\mathrm{l}}}^{[{m}]}} , {\mathbf{\phi}_{\mathrm{t}_{\mathrm{l}}}^{[{m}]}} ,
{ {\theta}_{\mathrm{t}_{\mathrm{u}}}^{[{m}]}} ,
{ {\phi}_{\mathrm{t}_{\mathrm{u}}}^{[{m}]}} ,
{ {\tau}^{[{m}]}} _{}
  \right]^{\mathrm{T}}$\footnote{The parameter vector is not completely reparameterized. Instead, the AoIs are removed from the parameter vector. This constrains the RIS position and orientation to locations that satisfies these AoIs.}. 
The nuisance parameters can be arranged similarly $ {\bm{\eta}_{2}^{[{m}]}} \triangleq\left[{{\beta}_{\mathrm{R}}^{[{m}]}}^{\mathrm{T}},
{{\beta}_{\mathrm{I}}^{[{m}]}}^{\mathrm{T}}
\right]^{\mathrm{T}}$.
The corresponding rearranged EFIM is 
$\Bar{\mathbf{J}}_{ \Tilde{\bm{\eta}}_{1}}^{\mathrm{e}} \triangleq \text{diag}\left[ {\Bar{\mathbf{J}}_{ {{1}}^{}}^{\mathrm{e}}}, {\Bar{\mathbf{J}}_{ {{2}}^{}}^{\mathrm{e}}}, \cdots, {\Bar{\mathbf{J}}_{ {{{M}_1}}^{}}^{\mathrm{e}}} \right],$
where $\Bar{\mathbf{J}}_{ {{m}}^{}}^{\mathrm{e}}  = \Bar{\mathbf{J}}_{ {{{\bm{\eta}}^{[{m}]}_{1}
}} {\bm{\eta}}^{[{m}]}_{1}} - \Bar{\mathbf{J}}_{ {{{\bm{\eta}}^{[{m}]}_{1}
}} {\bm{\eta}}^{[{m}]}_{2}} \Bar{\mathbf{J}}_{ {{{\bm{\eta}}^{[{m}]}_{2}
}} {\bm{\eta}}^{[{m}]}_{2}}^{\mathrm{-1}} \Bar{\mathbf{J}}_{ {{{\bm{\eta}}^{[{m}]}_{1}
}} {\bm{\eta}}^{[{m}]}_{2}}^{\mathrm{T}} \; \; {m} \in \mathcal{M}_1$, and we can write
$
\Bar{\mathbf{J}}_{ \Tilde{\bm{\zeta}}_{1}}^{\mathrm{e}} \triangleq\left[\begin{array}{c|ccc}
\Bar{\mathbf{J}}_{ {\bm{\psi}}_{1}}^{\mathrm{e}}  &  \mathbf{0}  \\
\hline \mathbf{0} & \Bar{\mathbf{J}}_{ \Tilde{\bm{\eta}}_{1}}^{\mathrm{e}}  \\
\end{array}\right]
$
where $\Bar{\mathbf{J}}_{ {\bm{\psi}}_{1}}^{\mathrm{e}}$ presented in the Appendix \ref{appendix:EFIM_Los} is the EFIM obtained by applying Definition \ref{definition_EFIM_2} to the Bayesian FIM $\mathbf{J}_{ \bm{\psi}^{}_{}}^{\mathrm{}}$ of the channel parameters for the LOS path.
Accordingly, the translation matrix $\mathbf{\Upsilon}^{}$ can be written as
\begin{equation}
\label{equ:position_translation_rearranged}
\begin{aligned}
\mathbf{\Upsilon} \triangleq\left[\begin{array}{ccccc}
\overline{\mathbf{\Upsilon}}_{\mathrm{0}}& 
\overline{\mathbf{\Upsilon}}_{\mathrm{1}}&  \overline{\mathbf{\Upsilon}}_{\mathrm{2}} & \cdots & \overline{\mathbf{\Upsilon}}_{{M_1}} \\
 \mathbf{0} &  \overline{\overline{\mathbf{\Upsilon}}}_{\mathrm{1}}  & \cdots & \cdots & \mathbf{0} \\
\vdots & \vdots & \ddots& \ddots & \vdots \\
\mathbf{0} & \mathbf{0} &  \cdots &\cdots & \overline{\overline{\mathbf{\Upsilon}}}_{{M_1}}
\end{array}\right],\end{aligned}
\end{equation}
where $\overline{\mathbf{\Upsilon}}_{\mathrm{0}}$ is a $5  \times 5$ matrix relating the LOS path to the UEs orientation and position,  $\overline{\mathbf{\Upsilon}}_{m}$ is the $5 \times 7$ matrix relating the location of the ${m}^{\text{th}}$  RIS to the UEs orientation and position,  
and $\overline{\overline{\mathbf{\Upsilon}}}_{m}$ is the $5 \times 7$ matrix related  to  $ \mathbf{{o}^{}}^{[{m}]}$ and  $ \mathbf{{p}^{}}^{[{m}]}$. The correspondingly rearranged prior matrix is defined as
${\mathbf{J}}_{ {\bm{\eta}}_{L}}^{\mathrm{P}} \triangleq \text{diag}\left[ {\mathbf{J}}_{{\mathrm{UE}}}^{\mathrm{P}} , {\mathbf{J}}_{{\mathrm{1}}}^{\mathrm{P}}, \cdots, {\mathbf{J}}_{{{M_1}}}^{\mathrm{P}}  \right].$
Hence, the  Bayesian FIM for the positioning parameters is presented in (\ref{equ:positioning_FIM_2}).
\begin{figure*}
\begin{align}
\label{equ:positioning_FIM_2}
\mathbf{J}_{ \bm{\eta}_{L}} =\mathbf{\Upsilon}\Bar{\mathbf{J}}_{ \Tilde{\bm{\zeta}}_{1}}^{\mathrm{e}} \mathbf{\Upsilon}^{\mathrm{T}} + {\mathbf{J}}_{ {\mathbf{\eta}}_{L}}^{\mathrm{P}} =\left[\begin{array}{ccccc}
\sum_{{m}=0}^{{M}_1} \overline{\mathbf{\Upsilon}}_{{m}} \Bar{\mathbf{J}}_{ {{m}}^{}}^{\mathrm{e}} \overline{\mathbf{\Upsilon}}_{{m}}^{\mathrm{T}} + {\mathbf{J}}_{{\mathrm{UE}}}^{\mathrm{P}}& \overline{\mathbf{\Upsilon}}_{\mathrm{1}} \Bar{\mathbf{J}}_{ {{1}}^{}}^{\mathrm{e}} \overline{\overline{\mathbf{\Upsilon}}}_{\mathrm{1}}^{\mathrm{T}}  & \ldots & \overline{\mathbf{\Upsilon}}_{\mathrm{{M}_1}} \Bar{\mathbf{J}}_{ {{{M}_1}}^{}}^{\mathrm{e}} \overline{\overline{\mathbf{\Upsilon}}}_{{M_1}}^{\mathrm{T}}  \\
\overline{\overline{\mathbf{\Upsilon}}}_{\mathrm{1}} \Bar{\mathbf{J}}_{ {{1}}^{}}^{\mathrm{e}} {\overline{\mathbf{\Upsilon}}}_{\mathrm{1}}^{\mathrm{T}}   & 
\overline{\overline{\mathbf{\Upsilon}}}_{\mathrm{1}} \Bar{\mathbf{J}}_{ {{1}}^{}}^{\mathrm{e}} \overline{\overline{\mathbf{\Upsilon}}}_{\mathrm{1}}^{\mathrm{T}} + {\mathbf{J}}_{{\mathrm{1}}}^{\mathrm{P}} & \ldots & 0 \\
\vdots & \vdots & \ddots & \vdots \\
\overline{\overline{\mathbf{\Upsilon}}}_{{M_1}} \Bar{\mathbf{J}}_{ {{{M}_1}}^{}}^{\mathrm{e}} {\overline{\mathbf{\Upsilon}}}_{{M_1}}^{\mathrm{T}} & 0 & \ldots & 
\overline{\overline{\mathbf{\Upsilon}}}_{{M_1}} \Bar{\mathbf{J}}_{ {{{M}_1}}^{}}^{\mathrm{e}} \overline{\overline{\mathbf{\Upsilon}}}_{{M_1}}^{\mathrm{T}} + {\mathbf{J}}_{{{M_1}}}^{\mathrm{P}} 
\end{array}\right].
\end{align}
\end{figure*}

Using Definition \ref{definition_EFIM_2}, the EFIM is presented in (\ref{equ:positioning_FIM_3}).
\begin{figure*}
\begin{align}
\label{equ:positioning_FIM_3}
\mathbf{J}_{\bm{\eta}_{L}}^{\mathrm{e}} &= \overline{\mathbf{\Upsilon}}_{\mathrm{0}} \Bar{\mathbf{J}}_{ {{0}}^{}}^{\mathrm{e}} \overline{\mathbf{\Upsilon}}_{\mathrm{0}}^{\mathrm{T}} +
\sum_{m=1}^{{M}_1} \overline{\mathbf{\Upsilon}}_{{m}} \Bar{\mathbf{J}}_{ {{{m}}}^{}}^{\mathrm{e}} \overline{\mathbf{\Upsilon}}_{{m}}^{\mathrm{T}} + {\mathbf{J}}_{{\mathrm{UE}}}^{\mathrm{P}} - \sum_{m=1}^{{M}_1} \overline{\mathbf{\Upsilon}}_{{m}} \Bar{\mathbf{J}}_{ {{{m}}}^{}}^{\mathrm{e}} \overline{\overline{\mathbf{\Upsilon}}}_{{m}}^{\mathrm{T}} \left(\overline{\overline{\mathbf{\Upsilon}}}_{{m}} \Bar{\mathbf{J}}_{ {{{m}}}^{}}^{\mathrm{e}} \overline{\overline{\mathbf{\Upsilon}}}_{{m}}^{\mathrm{T}} + {\mathbf{J}}_{{{m}}}^{\mathrm{P}}  \right)^{-1} \overline{\overline{\mathbf{\Upsilon}}}_{{m}} \Bar{\mathbf{J}}_{ {{{m}}}^{}}^{\mathrm{e}} {\overline{\mathbf{\Upsilon}}}_{{m}}^{\mathrm{T}}. 
\end{align}
\end{figure*}
Hence, the FIM is partly composed of the FIM provided by the LOS path plus the FIM provided by all RIS paths.  The corresponding SPEB and the SOEB can be obtained using Equation (\ref{equ:position_EFIM_position_exact}).

\section{Numerical Results}
In this section, we evaluate the derived localization bounds with Monte Carlo simulations under different scenarios.  Without loss of generality, we assume that the slow-varying reflection coefficients, $\mathbf{\Gamma}^{[{m}]}_{q}, \; \; \forall q \in \{1,\cdots,N_Q\}$, are randomly generated, and the sequence matrix is a unitary matrix. The RIS coefficients across any two OFDM symbols are non-parallel. Hence, $\mathbf{\Gamma}^{[{m}]}_{q} \neq \mathbf{\Gamma}^{[{m}]}_{q^{'}}, \; \; \text{when } q \neq q^{'}, \; \; \forall q, q^{'} \in \{1,2,\cdots,N_Q \}.$  Note that when LOS is available, Assumption \ref{asumption_far_field_rf_control_1} will be used to ensure that the LOS is orthogonal to the paths generated by the RIS.  We focus on the case with URAs at the BS, RIS, and the UE with their respective normal vectors in the $z$ direction. 
Except stated otherwise, the UE is operating at a frequency of $30$ GHz, the transmit antenna gain is $6$ dB with a transmit power of $5 \; \text{dBm}$, the UE antenna gain is $2$ dB, $N_0 = -174 \; \text{dBm /\ Hz} $ and there are $N = 256$ subcarriers. There are $N_{T} = 4$ transmit antennas, $N_{\mathrm{B}} = N_{T}$ transmit beams, and the UE orientation offset is $(\theta_0^{}, \phi_0^{}) = (10^{\circ}, 0^{\circ})$. The considered area is $100 \; \text{m} \times 100 \; \text{m}  $, and the considered bandwidth is  $0.1 \; \text{GHz}$ with a non-specular reflection pathloss model of \cite{ellingson2021path}
$
1/{\rho^{[{m}]}} =  \frac{\lambda^4 (\cos{\theta^{[{m}]}_{\mathrm{t}_{\mathrm{l}}}} \cos{\theta^{[{m}]}_{\mathrm{r}_{\mathrm{l}}}})^{0.57}}{512\pi^2 (d^{[{m}]}_{\mathrm{r}_{\mathrm{l}}})^2  (d^{[{m}]}_{\mathrm{t}_{\mathrm{l}}})^2}$ for the RIS-paths and  $ 1/{\rho^{[0]}} =  \frac{\lambda^2}{(4\pi d^{}_{\mathrm{r}_{\mathrm{u}}} )^2}
$ for the LOS path \cite{9508872}, where $d^{}_{\mathrm{r}_{\mathrm{u}}}$, $d^{[{m}]}_{\mathrm{r}_{\mathrm{l}}}$, and $d^{[{m}]}_{\mathrm{t}_{\mathrm{l}}}$ are the distances between the BS and the UE,  between the BS and the ${m}^\text{th}$ RIS, and  between the ${m}^\text{th}$ RIS and the UE, respectively. For the case studies in this section, we assume that there is no prior information about the UE position, ${\mathbf{J}}_{{\mathrm{UE}}}^{\mathrm{P}} = \mathbf{0}_{5}$, but the prior information concerning the perturbed RISs is given as  ${\mathbf{J}}_{{{m}}}^{\mathrm{P}} = \frac{0.5}{\sigma^2}\mathbf{I}_{_{5}}, \; \; m \in \mathcal{M}_1^{b}$. The orientation offset of the perturbed RIS is given by $(\theta_0^{[{m}]}, \phi_0^{[{m}]}) = (45^{\circ}, 35^{\circ}), \; \; m \in \mathcal{M}_1^{b}$. In the subsequent figures, the Bayesian error bounds derived from (\ref{equ:positioning_FIM_3}) are plotted. The plots without the term ``LOS" refer to the error bounds with prior information on the perturbed RISs in the absence of an LOS path.   The plots with the term ``LOS" refer to the error bounds with prior information on the perturbed RISs in the presence of an LOS path.  

 In Figs.  \ref{Results:peb_oeb_varying_numb_of_receive_ris} and \ref{Results:peb_oeb_varying_numb_of_ris_ris}, the BS is located at $(0,0, 40 \text{m})$, the RIS and the UE are at a height of $35 \text{m}$ and $5 \text{m}$ respectively. In both figures, $|\mathcal{M}_1^{b}| = 1$.
\begin{figure}[htb!]
\centering
\subfloat[]{\includegraphics[scale=0.205]{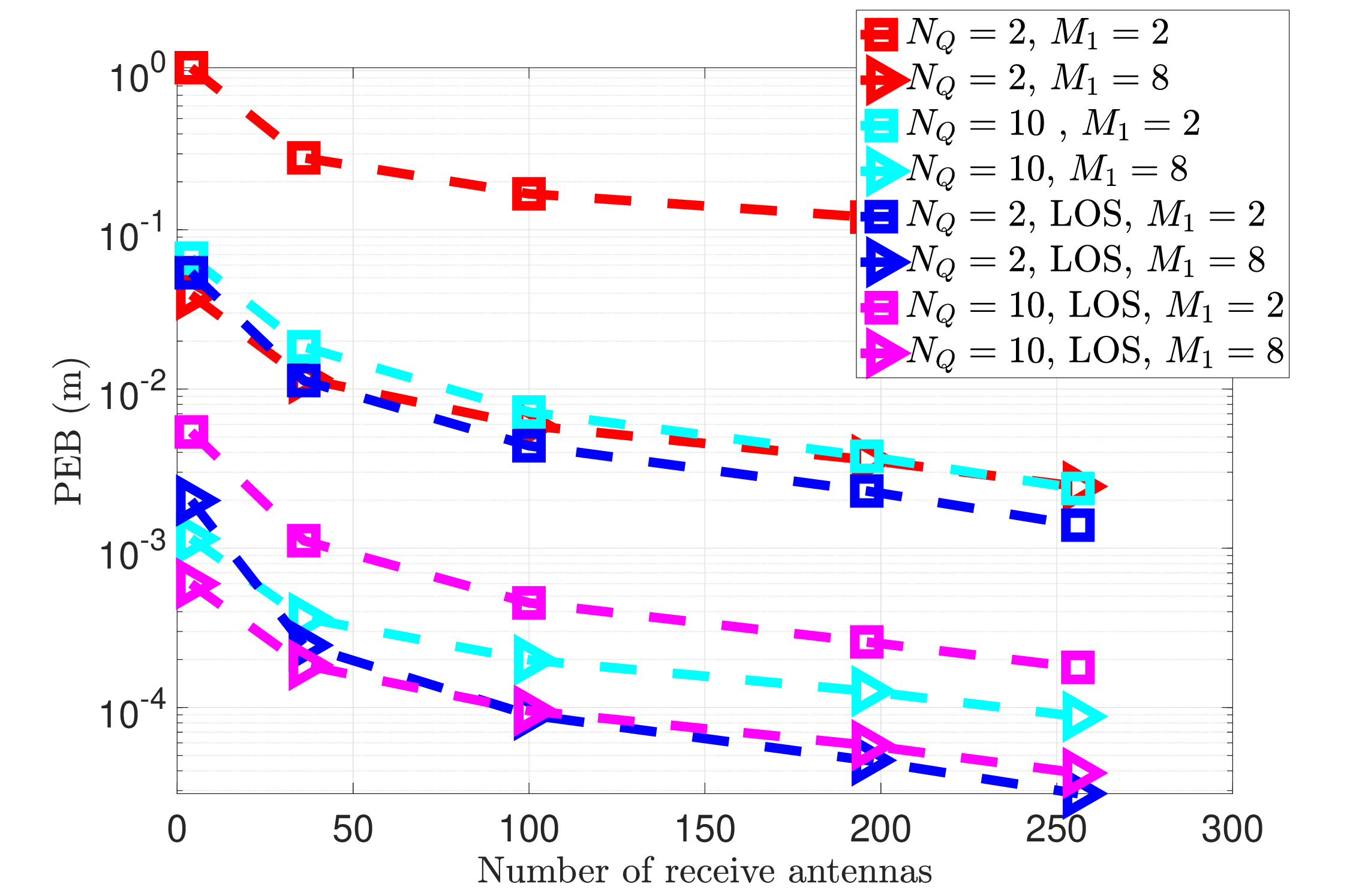}
\label{fig:Results/peb_varying_numb_of_receive_ris}}
\hfil
\subfloat[]{\includegraphics[scale=0.205]{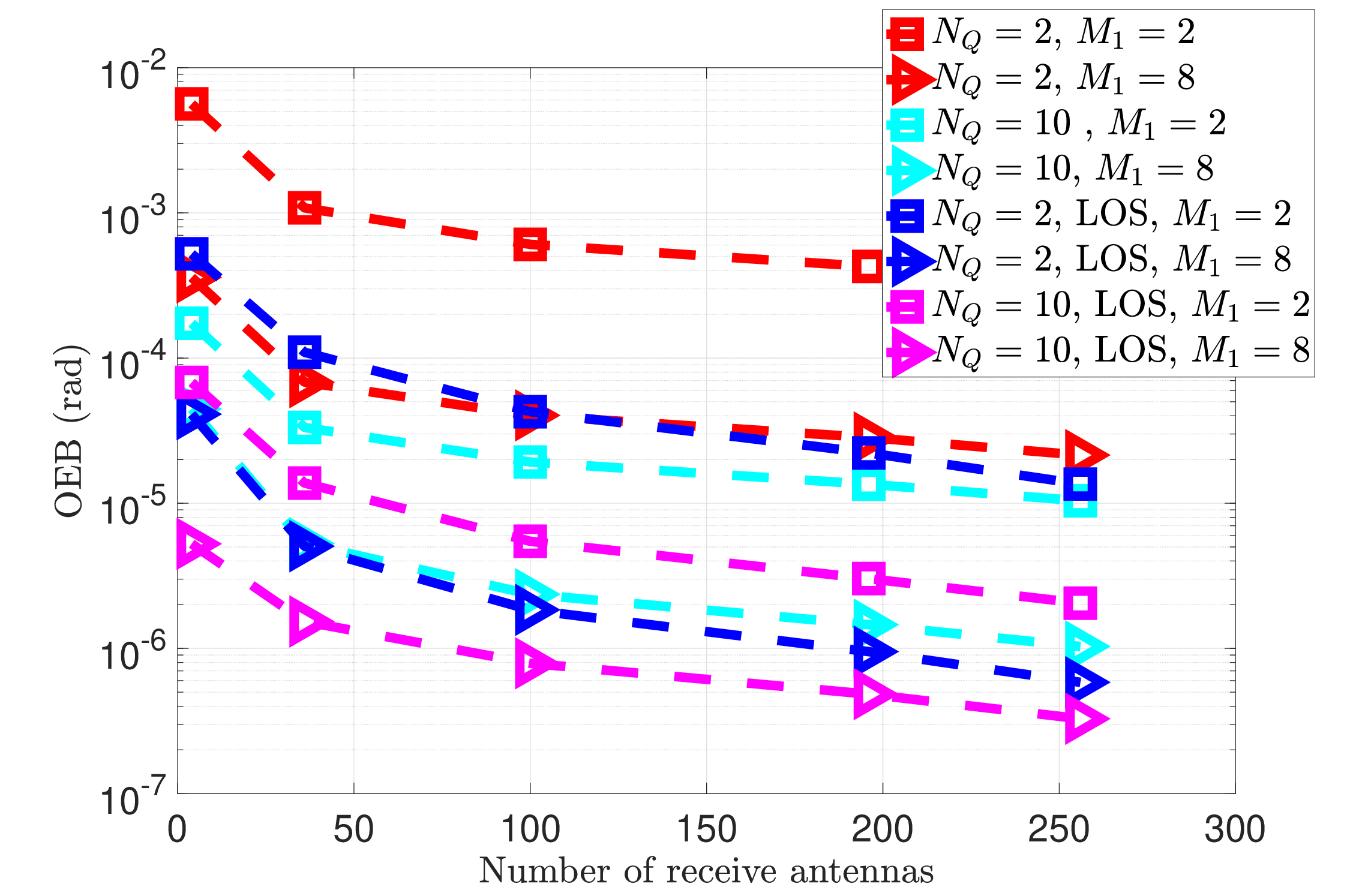}
\label{fig:Results/oeb_varying_numb_of_receive_ris}}
\caption{(a) PEB and (b) OEB with varying number receive antennas,  $|\mathcal{M}_1^{b}| = 1 \; \; \text{with} \; \; {\mathbf{J}}_{{{m}}}^{\mathrm{P}} = \frac{0.5}{\sigma^2}\mathbf{I}_{_{5}}, \; \; m \in \mathcal{M}_1^{b}.$ Each RIS has $N_L^{[m]} = 144$ elements. } 
\label{Results:peb_oeb_varying_numb_of_receive_ris}
\end{figure}
The $({x},{y})$ coordinate of both RIS and UE are randomly generated. 
\begin{figure}[htb!]
\centering
\subfloat[]{\includegraphics[scale=0.205]{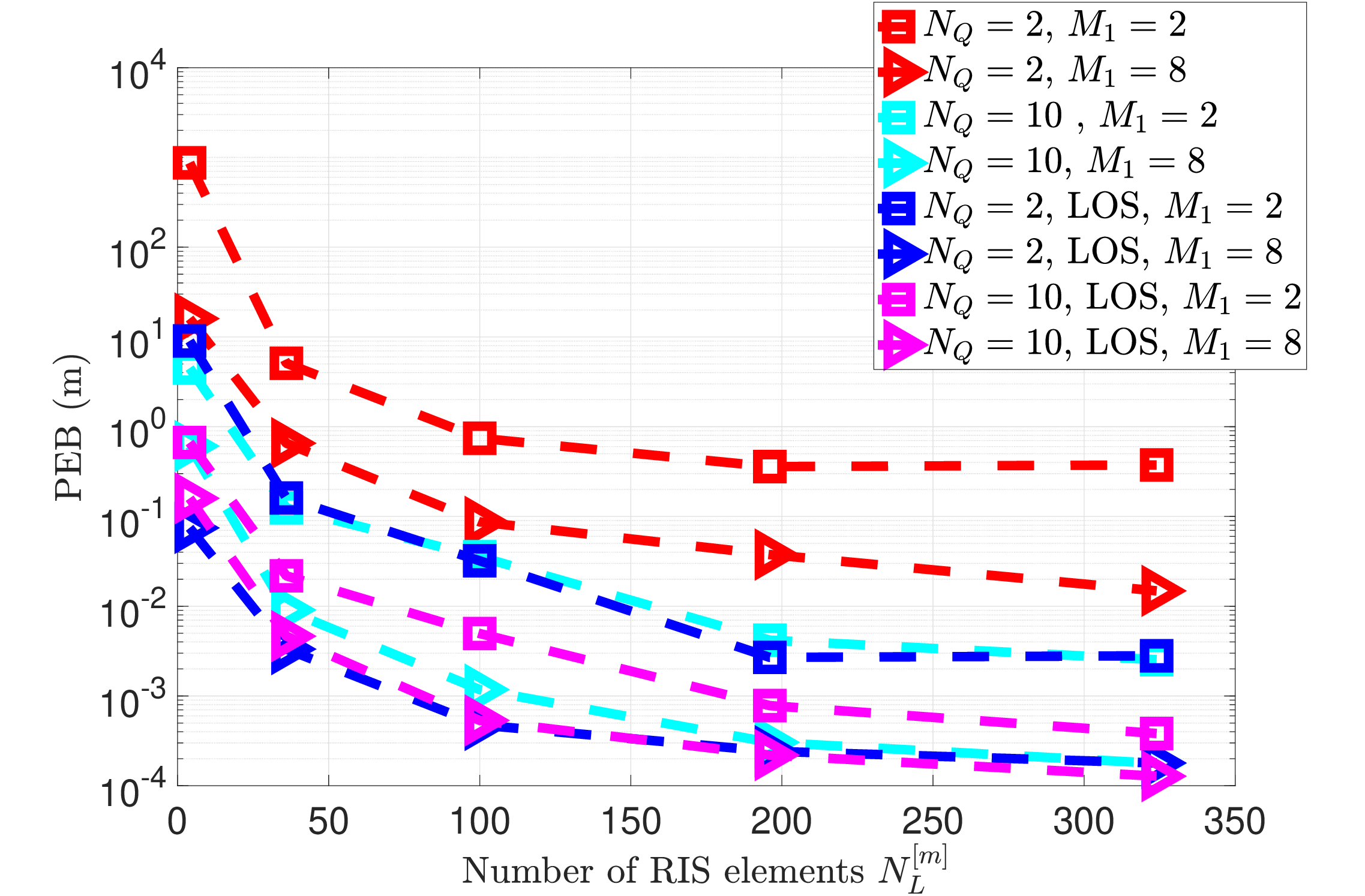}
\label{fig:Results/peb_varying_numb_of_ris_elements_ris}}
\hfil
\subfloat[]{\includegraphics[scale=0.205]{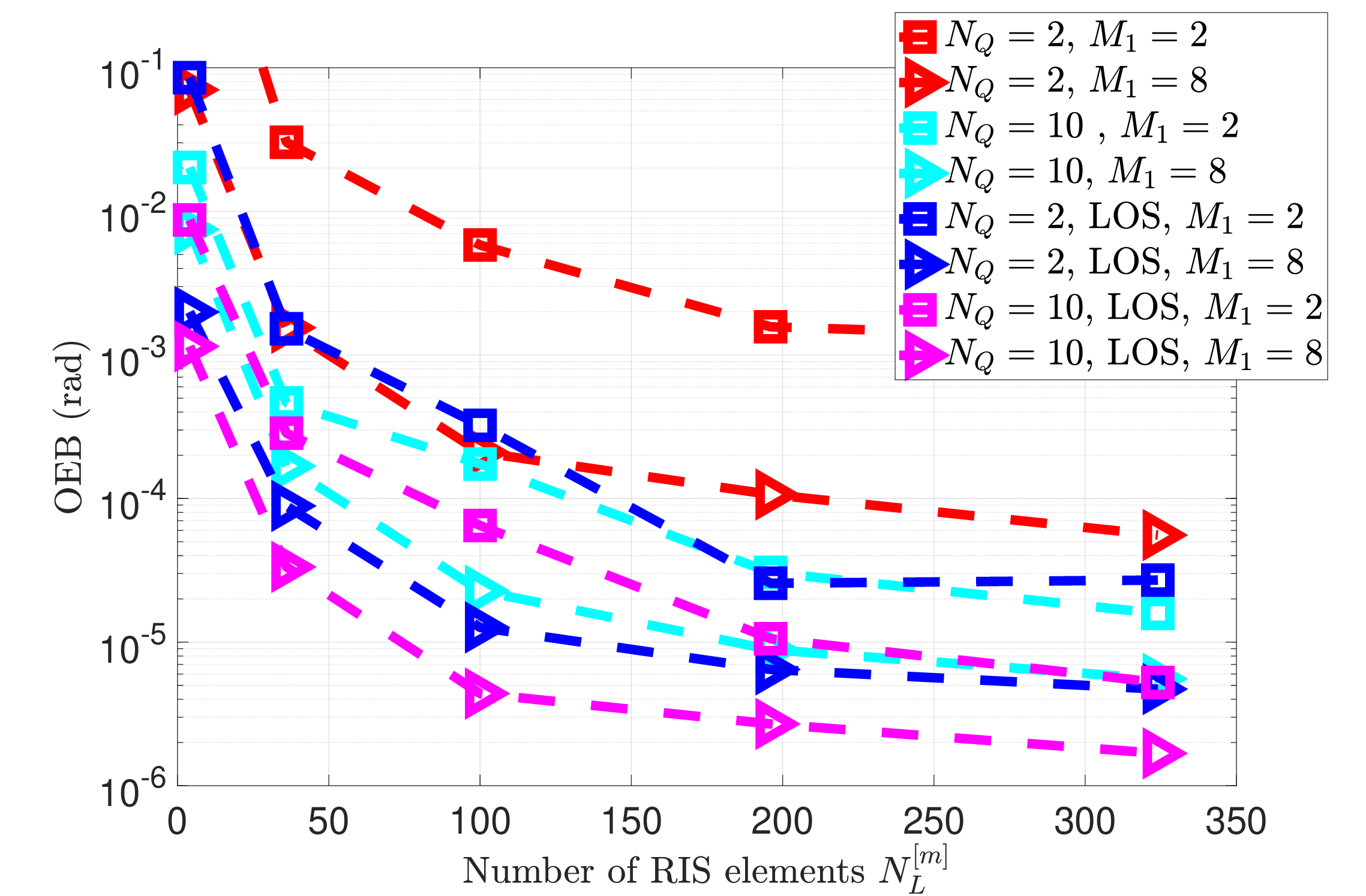}
\label{fig:Results/oeb_varying_numb_of_ris_elements_ris}}
\caption{(a) PEB and (b) OEB with varying number of RIS elements,  $|\mathcal{M}_1^{b}| = 1 \; \; \text{with} \; \; {\mathbf{J}}_{{{m}}}^{\mathrm{P}} = \frac{0.5}{\sigma^2}\mathbf{I}_{_{5}}, \; \; m \in \mathcal{M}_1^{b}.$ There are $N_R = 16$ receive antennas.}
\label{Results:peb_oeb_varying_numb_of_ris_ris}
\end{figure}
In Fig. \ref{Results:peb_oeb_varying_numb_of_receive_ris}, the localization bounds for different numbers of RISs are presented; each RIS has $N_L^{[m]} = 144$ elements. We observe that in general the PEB and OEB decrease for increasing number of OFDM symbols, $N_Q$. The PEB and OEB also decrease for increasing number of RISs. In Fig. \ref{Results:peb_oeb_varying_numb_of_receive_ris}, there is a noticeable decrease in the error bounds when there is an LOS path present. 

In Fig. \ref{Results:peb_oeb_varying_numb_of_ris_ris}, the number of receive antennas is set to $N_R = 16.$ We observe that the PEB and OEB decrease as the number of RIS elements increases. In general, we also notice that the error bounds reduce as a function of the number of OFDM symbols, $N_Q$, and the number of RISs.
It is important to emphasize that in Figs. \ref{Results:peb_oeb_varying_numb_of_receive_ris} and \ref{Results:peb_oeb_varying_numb_of_ris_ris}, we present results for the scenarios where LOS paths are present in addition to the RIS paths. These scenarios provide lower bounds for the systems without LOS. Note that the gap between the localization bounds in the RIS-only scenarios and the RIS plus LOS scenarios reduces as the number of RISs increases.  
\section{Conclusion}
In this paper, the effect of the multipath created due to multiple RISs on wireless-enabled localization has been investigated from a Bayesian perspective. This was achieved by viewing the position and orientation of the RISs as prior information to assist in downlink UE localization.  We derived the FIM for the RIS-related channel parameters and showed that the FIM can be decomposed into a sum of the FIMs provided by the OFDM symbols. The FIM provided by each OFDM symbol can be  decomposed into: i) information provided by the receiver, ii) information provided by the transmitter, and iii) information provided by the RIS components. We showed that the information provided by the RIS can be further decomposed into a correlation matrix and an information matrix representing the gains due to the RIS. Through this decomposition, we observed that with parallel RIS reflection coefficients across OFDM symbols, the information matrix produced by the RIS coefficients during the additional OFDM symbols does not provide any further information. Next, for parallel RIS coefficients, we showed through the derivations of the Bayesian EFIM that all information about the RIS-related angle channel parameters is lost when the complex path gains are unknown. We noted that this loss of information has severe implications and can hinder localization. Next, we transformed the Bayesian EFIM of the geometric channel parameters to the Bayesian FIM for localization. We obtained the Bayesian EFIM of localization by considering RISs, with position and orientation offsets. As a future work, we intend to simultaneously analyze the impact of different UE architectures and RIS location offset on localization performance. 
	\appendix 
	\subsection{ Entries of the  FIM }
	The entries of the FIM in (\ref{equ:FIM_parameter_matrix}) are presented below
	\label{appendix:FIM_entries}
	\begin{subequations}[equ:FIM_exact_submat_theta_r_u]
\begin{align}
\begin{split}
\medmath{{\mathbf{J}_{\bm{\theta}_{\mathrm{r}_{\mathrm{u}}} \bm{\phi}_{\mathrm{r}_{\mathrm{u}}}}}}
 &=   \medmath{\frac{2}{\sigma^2}\sum_{q = 1}^{N_{Q}}\Re }\left\{\medmath{\left(\mathbf{B}^{\mathrm{H}} \mathbf{K}_{\mathrm{r}_{\mathrm{u}}}^{\mathrm{H}} \mathbf{P}_{\mathrm{r}_{\mathrm{u}}} \mathbf{B}\right) \odot\left(\mathbf{k}_{q,\mathrm{l}} \mathbf{k}_{q,\mathrm{l}}^{\mathrm{H}}\right)
 \odot\left(\mathbf{D}_{\mathrm{\gamma}}^{\mathrm{H}} \mathbf{D}_{\mathrm{\gamma}}\right)} \right. \\ & \left.  \medmath{\odot \left(\mathbf{A}_{\mathrm{t}_{\mathrm{u}}}^{\mathrm{H}}  \mathbf{F}   \mathbf{F}^{\mathrm{H}} \mathbf{A}_{\mathrm{t}_{\mathrm{u}}}^{}\right)^{\mathrm{T}} \odot \mathbf{R}_{0}}\right\}
  \label{equ:FIM_exact_submat_theta_r_u_phi_r_u} \end{split}
 \\
 \begin{split}
\medmath{\mathbf{J}_{\bm{\theta}_{\mathrm{r}_{\mathrm{u}}} \bm{\theta}_{\mathrm{t}_{\mathrm{l}}}}}
 &=   \medmath{-\frac{2}{\sigma^2}\sum_{q = 1}^{N_{Q}}\Re}\left\{\medmath{\left(\mathbf{B}^{\mathrm{H}} \mathbf{K}_{\mathrm{r}_{\mathrm{u}}}^{\mathrm{H}} \mathbf{A}_{\mathrm{r}_{\mathrm{u}}} \mathbf{B}\right) \odot\left(\mathbf{k}_{q,\mathrm{l}} \mathbf{k}_{q,\mathrm{t}_{\mathrm{l}}}^{\mathrm{H}}\right)
 \odot\left(\mathbf{D}_{\mathrm{\gamma}}^{\mathrm{H}} \mathbf{D}_{\mathrm{\gamma}}\right) } \right. \\ & \left. \medmath{
  \odot\left(\mathbf{A}_{\mathrm{t}_{\mathrm{u}}}^{\mathrm{H}}  \mathbf{F}   \mathbf{F}^{\mathrm{H}} \mathbf{A}_{\mathrm{t}_{\mathrm{u}}}^{}\right)^{\mathrm{T}} \odot \mathbf{R}_{0}}\right\} \label{equ:FIM_exact_submat_theta_r_u_theta_t_l} \end{split}\\
   \begin{split}
 \medmath{\mathbf{J}_{\bm{\theta}_{\mathrm{r}_{\mathrm{u}}} \bm{\phi}_{\mathrm{t}_{\mathrm{l}}}}}
 &=   \medmath{-\frac{2}{\sigma^2}\sum_{q = 1}^{N_{Q}}\Re}\left\{\medmath{\left(\mathbf{B}^{\mathrm{H}} \mathbf{K}_{\mathrm{r}_{\mathrm{u}}}^{\mathrm{H}} \mathbf{A}_{\mathrm{r}_{\mathrm{u}}} \mathbf{B}\right) }\odot\left(\mathbf{k}_{q,\mathrm{l}} \mathbf{p}_{q,\mathrm{t}_{\mathrm{l}}}^{\mathrm{H}}\right)
 \odot\left(\mathbf{D}_{\mathrm{\gamma}}^{\mathrm{H}} \mathbf{D}_{\mathrm{\gamma}}\right)   \right. \\ & \left. \medmath{\odot\left(\mathbf{A}_{\mathrm{t}_{\mathrm{u}}}^{\mathrm{H}}  \mathbf{F}   \mathbf{F}^{\mathrm{H}} \mathbf{A}_{\mathrm{t}_{\mathrm{u}}}^{}\right)^{\mathrm{T}} \odot \mathbf{R}_{0}}\right\} \label{equ:FIM_exact_submat_theta_r_u_phi_t_l}  \end{split}
\\
  \begin{split}
\medmath{\mathbf{J}_{\bm{\theta}_{\mathrm{r}_{\mathrm{u}}} \bm{\theta}_{\mathrm{r}_{\mathrm{l}}}}}
 &=  \medmath{\frac{2}{\sigma^2}\sum_{q = 1}^{N_{Q}}\Re}\left\{\medmath{\left(\mathbf{B}^{\mathrm{H}} \mathbf{K}_{\mathrm{r}_{\mathrm{u}}}^{\mathrm{H}} \mathbf{A}_{\mathrm{r}_{\mathrm{u}}} \mathbf{B}\right) \odot\left(\mathbf{k}_{q,\mathrm{l}} \mathbf{k}_{q,\mathrm{r}_{\mathrm{l}}}^{\mathrm{H}}\right)
\odot\left(\mathbf{D}_{\mathrm{\gamma}}^{\mathrm{H}} \mathbf{D}_{\mathrm{\gamma}}\right)} \right. \\ & \left. \medmath{\odot\left(\mathbf{A}_{\mathrm{t}_{\mathrm{u}}}^{\mathrm{H}}  \mathbf{F}   \mathbf{F}^{\mathrm{H}} \mathbf{A}_{\mathrm{t}_{\mathrm{u}}}^{}\right)^{\mathrm{T}} \odot \mathbf{R}_{0}}\right\},  \label{equ:FIM_exact_submat_theta_r_u_theta_r_l}  \end{split}
\\
\medmath{\mathbf{J}_{\bm{\theta}_{\mathrm{r}_{\mathrm{u}}} \bm{\phi}_{\mathrm{r}_{\mathrm{l}}}}}
 &=   \medmath{\frac{2}{\sigma^2}\sum_{q = 1}^{N_{Q}}\Re}\left\{\medmath{\left(\mathbf{B}^{\mathrm{H}} \mathbf{K}_{\mathrm{r}_{\mathrm{u}}}^{\mathrm{H}} \mathbf{A}_{\mathrm{r}_{\mathrm{u}}} \mathbf{B}\right) \odot\left(\mathbf{k}_{q,\mathrm{l}} \mathbf{p}_{q,\mathrm{r}_{\mathrm{l}}}^{\mathrm{H}}\right)
 \odot\left(\mathbf{D}_{\mathrm{\gamma}}^{\mathrm{H}} \mathbf{D}_{\mathrm{\gamma}}\right)} \right. \\ & \left.
  \odot\medmath{\left(\mathbf{A}_{\mathrm{t}_{\mathrm{u}}}^{\mathrm{H}}  \mathbf{F}   \mathbf{F}^{\mathrm{H}} \mathbf{A}_{\mathrm{t}_{\mathrm{u}}}^{}\right)^{\mathrm{T}} \odot \mathbf{R}_{0}}\right\},  \label{equ:FIM_exact_submat_theta_r_u_phi_r_l} \\
\medmath{\mathbf{J}_{\bm{\theta}_{\mathrm{r}_{\mathrm{u}}} \bm{\theta}_{\mathrm{t}_{\mathrm{u}}}}}
 &=   \medmath{-\frac{2}{\sigma^2}\sum_{q = 1}^{N_{Q}}\Re}\left\{\medmath{\left(\mathbf{B}^{\mathrm{H}} \mathbf{K}_{\mathrm{r}_{\mathrm{u}}}^{\mathrm{H}} \mathbf{A}_{\mathrm{r}_{\mathrm{u}}} \mathbf{B}\right) \odot\left(\mathbf{k}_{q,\mathrm{l}} \mathbf{k}_{q,\mathrm{l}}^{\mathrm{H}}\right)
 \odot\left(\mathbf{D}_{\mathrm{\gamma}}^{\mathrm{H}} \mathbf{D}_{\mathrm{\gamma}}\right)} \right. \\ & \left.  \medmath{\odot\left(\mathbf{K}_{\mathrm{t}_{\mathrm{u}}}^{\mathrm{H}}  \mathbf{F}   \mathbf{F}^{\mathrm{H}} \mathbf{A}_{\mathrm{t}_{\mathrm{u}}}^{}\right)^{\mathrm{T}} \odot \mathbf{R}_{0}}\right\},  \label{equ:FIM_exact_submat_theta_r_u_theta_t_u} \\
 \medmath{\mathbf{J}_{\bm{\theta}_{\mathrm{r}_{\mathrm{u}}} \bm{\phi}_{\mathrm{t}_{\mathrm{u}}}}}
 &=   \medmath{-\frac{2}{\sigma^2}\sum_{q = 1}^{N_{Q}}\Re}\left\{\medmath{\left(\mathbf{B}^{\mathrm{H}} \mathbf{K}_{\mathrm{r}_{\mathrm{u}}}^{\mathrm{H}} \mathbf{A}_{\mathrm{r}_{\mathrm{u}}} \mathbf{B}\right) \odot\left(\mathbf{k}_{q,\mathrm{l}} \mathbf{k}_{q,\mathrm{l}}^{\mathrm{H}}\right)
 \odot\left(\mathbf{D}_{\mathrm{\gamma}}^{\mathrm{H}} \mathbf{D}_{\mathrm{\gamma}}\right)} \right. \\ & \left. \medmath{
  \odot\left(\mathbf{P}_{\mathrm{t}_{\mathrm{u}}}^{\mathrm{H}}  \mathbf{F}   \mathbf{F}^{\mathrm{H}} \mathbf{A}_{\mathrm{t}_{\mathrm{u}}}^{}\right)^{\mathrm{T}} \odot \mathbf{R}_{0}}\right\},  \label{equ:FIM_exact_submat_theta_r_u_phi_t_u} \\
  \begin{split}
\medmath{\mathbf{J}_{\bm{\theta}_{\mathrm{r}_{\mathrm{u}}} \bm{\tau}}}
 &=   \medmath{\frac{2}{\sigma^2}\sum_{q = 1}^{N_{Q}}\Re}\left\{\medmath{\left(\mathbf{B}^{\mathrm{H}} \mathbf{K}_{\mathrm{r}_{\mathrm{u}}}^{\mathrm{H}} \mathbf{A}_{\mathrm{r}_{\mathrm{u}}} \mathbf{B}\right) \odot\left(\mathbf{k}_{q,\mathrm{l}} \mathbf{k}_{q,\mathrm{l}}^{\mathrm{H}}\right)
 \odot\left(\mathbf{D}_{\mathrm{\gamma}}^{\mathrm{H}} \mathbf{D}_{\mathrm{\gamma}}\right)}
 \right. \\ & \left. 
  \medmath{\odot\left(\mathbf{A}_{\mathrm{t}_{\mathrm{u}}}^{\mathrm{H}}  \mathbf{F}   \mathbf{F}^{\mathrm{H}} \mathbf{A}_{\mathrm{t}_{\mathrm{u}}}^{}\right)^{\mathrm{T}} \odot \mathbf{R}_{1}}\right\}, \label{equ:FIM_exact_submat_theta_r_u_tau} \end{split} \\
  \begin{split}
\medmath{\mathbf{J}_{\bm{\theta}_{\mathrm{r}_{\mathrm{u}}} \bm{\beta}_{\mathrm{I}}}}  & \medmath{+j\mathbf{J}_{\bm{\theta}_{\mathrm{r}_{\mathrm{u}}} \bm{\beta}_{\mathrm{R}}}}
 =    \medmath{-\frac{2}{\sigma^2}\sum_{q = 1}^{N_{Q}}}\left\{\medmath{\left(\mathbf{B}^{\mathrm{H}} \mathbf{K}_{\mathrm{r}_{\mathrm{u}}}^{\mathrm{H}} \mathbf{A}_{\mathrm{r}_{\mathrm{u}}} \right) \odot\left(\mathbf{k}_{q,\mathrm{l}} \mathbf{k}_{q,\mathrm{l}}^{\mathrm{H}}\right)
 \odot\left(\mathbf{D}_{\mathrm{\gamma}}^{\mathrm{H}} \mathbf{D}_{\mathrm{\gamma}}\right)} \right. \\ & \left. 
 \odot \medmath{\left(\mathbf{A}_{\mathrm{t}_{\mathrm{u}}}^{\mathrm{H}}  \mathbf{F}   \mathbf{F}^{\mathrm{H}} \mathbf{A}_{\mathrm{t}_{\mathrm{u}}}^{}\right)^{\mathrm{T}} \odot \mathbf{R}_{0}}\right\}   \label{equ:FIM_exact_submat_theta_r_u_beta}
   \end{split}. 
\end{align}
\end{subequations}
The submatrices related to $\bm{\phi}_{\mathrm{r}_{\mathrm{u}}}$ can be obtained by replacing $\mathbf{K}_{\mathrm{r}_{\mathrm{u}}}$ with $\mathbf{P}_{\mathrm{r}_{\mathrm{u}}}$ in the appropriate equations. The submatrices related to $\bm{\theta}_{\mathrm{t}_{\mathrm{l}}}$ are derived next
	\begin{subequations}[equ:FIM_exact_submat_theta_t_l]
\begin{align}
\begin{split}
\medmath{\mathbf{J}_{\bm{\theta}_{\mathrm{t}_{\mathrm{l}}} \bm{\theta}_{\mathrm{t}_{\mathrm{l}}}}}
 &=   \medmath{\frac{2}{\sigma^2}\sum_{q = 1}^{N_{Q}}\Re} \left\{\medmath{\left(\mathbf{B}^{\mathrm{H}} \mathbf{A}_{\mathrm{r}_{\mathrm{u}}}^{\mathrm{H}} \mathbf{A}_{\mathrm{r}_{\mathrm{u}}} \mathbf{B}\right) \odot\left(\mathbf{k}_{q,\mathrm{t}_{\mathrm{l}}} \mathbf{k}_{q,\mathrm{t}_{\mathrm{l}}}^{\mathrm{H}}\right)
 \odot\left(\mathbf{D}_{\mathrm{\gamma}}^{\mathrm{H}} \mathbf{D}_{\mathrm{\gamma}}\right) } \right. \\ & \left. \medmath{\odot\left(\mathbf{A}_{\mathrm{t}_{\mathrm{u}}}^{\mathrm{H}}  \mathbf{F}   \mathbf{F}^{\mathrm{H}} \mathbf{A}_{\mathrm{t}_{\mathrm{u}}}^{}\right)^{\mathrm{T}} \odot \mathbf{R}_{0}}\right\} \label{equ:FIM_exact_submat_theta_t_l_theta_t_l} 
 \end{split}
\\
\begin{split}
\medmath{\mathbf{J}_{\bm{\theta}_{\mathrm{t}_{\mathrm{l}}} \bm{\phi}_{\mathrm{t}_{\mathrm{l}}}}}
 &=   \medmath{\frac{2}{\sigma^2}\sum_{q = 1}^{N_{Q}}\Re}\left\{\medmath{\left(\mathbf{B}^{\mathrm{H}} \mathbf{A}_{\mathrm{r}_{\mathrm{u}}}^{\mathrm{H}} \mathbf{A}_{\mathrm{r}_{\mathrm{u}}} \mathbf{B}\right) \odot\left(\mathbf{k}_{q,\mathrm{t}_{\mathrm{l}}} \mathbf{p}_{q,\mathrm{t}_{\mathrm{l}}}^{\mathrm{H}}\right)
 \odot\left(\mathbf{D}_{\mathrm{\gamma}}^{\mathrm{H}} \mathbf{D}_{\mathrm{\gamma}}\right)}
 \right. \\ & \left.  \medmath{\odot\left(\mathbf{A}_{\mathrm{t}_{\mathrm{u}}}^{\mathrm{H}}  \mathbf{F}   \mathbf{F}^{\mathrm{H}} \mathbf{A}_{\mathrm{t}_{\mathrm{u}}}^{}\right)^{\mathrm{T}} \odot \mathbf{R}_{0}}\right\} \label{equ:FIM_exact_submat_theta_t_l_phi_t_l} 
  \end{split}
\\
\begin{split}
\medmath{\mathbf{J}_{\bm{\theta}_{\mathrm{t}_{\mathrm{l}}} \bm{\theta}_{\mathrm{r}_{\mathrm{l}}}}}
 &=   \medmath{-\frac{2}{\sigma^2}\sum_{q = 1}^{N_{Q}}\Re}\left\{\medmath{\left(\mathbf{B}^{\mathrm{H}} \mathbf{A}_{\mathrm{r}_{\mathrm{u}}}^{\mathrm{H}} \mathbf{A}_{\mathrm{r}_{\mathrm{u}}} \mathbf{B}\right) \odot\left(\mathbf{k}_{q,\mathrm{t}_{\mathrm{l}}} \mathbf{k}_{q,\mathrm{r}_{\mathrm{l}}}^{\mathrm{H}}\right)
 \odot\left(\mathbf{D}_{\mathrm{\gamma}}^{\mathrm{H}} \mathbf{D}_{\mathrm{\gamma}}\right)}  \right. \\ & \left.  \medmath{
  \odot\left(\mathbf{A}_{\mathrm{t}_{\mathrm{u}}}^{\mathrm{H}}  \mathbf{F}   \mathbf{F}^{\mathrm{H}} \mathbf{A}_{\mathrm{t}_{\mathrm{u}}}^{}\right)^{\mathrm{T}} \odot \mathbf{R}_{0}}\right\} \label{equ:FIM_exact_submat_theta_t_l_theta_r_l}   \end{split}
\\
\begin{split}
\medmath{\mathbf{J}_{\bm{\theta}_{\mathrm{t}_{\mathrm{l}}} \bm{\phi}_{\mathrm{r}_{\mathrm{l}}}}}
 &=   \medmath{-\frac{2}{\sigma^2}\sum_{q = 1}^{N_{Q}}\Re} \left\{\medmath{\left(\mathbf{B}^{\mathrm{H}} \mathbf{A}_{\mathrm{r}_{\mathrm{u}}}^{\mathrm{H}} \mathbf{A}_{\mathrm{r}_{\mathrm{u}}} \mathbf{B}\right) \odot\left(\mathbf{k}_{q,\mathrm{t}_{\mathrm{l}}} \mathbf{p}_{q,\mathrm{r}_{\mathrm{l}}}^{\mathrm{H}}\right)
 \odot\left(\mathbf{D}_{\mathrm{\gamma}}^{\mathrm{H}} \mathbf{D}_{\mathrm{\gamma}}\right)} 
   \right. \\ & \left. 
 \medmath{\odot\left(\mathbf{A}_{\mathrm{t}_{\mathrm{u}}}^{\mathrm{H}}  \mathbf{F}   \mathbf{F}^{\mathrm{H}} \mathbf{A}_{\mathrm{t}_{\mathrm{u}}}^{}\right)^{\mathrm{T}} \odot \mathbf{R}_{0}}\right\},  \label{equ:FIM_exact_submat_theta_t_l_phi_r_l} \end{split}
\\
\begin{split}
\medmath{\mathbf{J}_{\bm{\theta}_{\mathrm{t}_{\mathrm{l}}} \bm{\theta}_{\mathrm{t}_{\mathrm{u}}}}}
 &=   \medmath{\frac{2}{\sigma^2}\sum_{q = 1}^{N_{Q}}\Re}\left\{\medmath{\left(\mathbf{B}^{\mathrm{H}} \mathbf{A}_{\mathrm{r}_{\mathrm{u}}}^{\mathrm{H}} \mathbf{A}_{\mathrm{r}_{\mathrm{u}}} \mathbf{B}\right) \odot\left(\mathbf{k}_{q,\mathrm{t}_{\mathrm{l}}} \mathbf{k}_{q,\mathrm{l}}^{\mathrm{H}}\right)
 \odot\left(\mathbf{D}_{\mathrm{\gamma}}^{\mathrm{H}} \mathbf{D}_{\mathrm{\gamma}}\right)}    \right. \\ & \left. \medmath{
 \odot\left(\mathbf{K}_{\mathrm{t}_{\mathrm{u}}}^{\mathrm{H}}  \mathbf{F}   \mathbf{F}^{\mathrm{H}} \mathbf{A}_{\mathrm{t}_{\mathrm{u}}}^{}\right)^{\mathrm{T}} \odot \mathbf{R}_{0}}\right\},  \label{equ:FIM_exact_submat_theta_t_l_theta_t_u}
 \end{split}
\\
\begin{split}
\medmath{\mathbf{J}_{\bm{\theta}_{\mathrm{t}_{\mathrm{l}}} \bm{\phi}_{\mathrm{t}_{\mathrm{u}}}}}
 &=   \medmath{\frac{2}{\sigma^2}\sum_{q = 1}^{N_{Q}}\Re}\left\{\medmath{\left(\mathbf{B}^{\mathrm{H}} \mathbf{A}_{\mathrm{r}_{\mathrm{u}}}^{\mathrm{H}} \mathbf{A}_{\mathrm{r}_{\mathrm{u}}} \mathbf{B}\right) \odot\left(\mathbf{k}_{q,\mathrm{t}_{\mathrm{l}}} \mathbf{k}_{q,\mathrm{l}}^{\mathrm{H}}\right)
 \odot\left(\mathbf{D}_{\mathrm{\gamma}}^{\mathrm{H}} \mathbf{D}_{\mathrm{\gamma}}\right)}\right. \\ & \left.\medmath{
 \odot\left(\mathbf{P}_{\mathrm{t}_{\mathrm{u}}}^{\mathrm{H}}  \mathbf{F}   \mathbf{F}^{\mathrm{H}} \mathbf{A}_{\mathrm{t}_{\mathrm{u}}}^{}\right)^{\mathrm{T}} \odot \mathbf{R}_{0}}\right\},  \label{equ:FIM_exact_submat_theta_t_l_phi_t_u} \end{split} \\
 \begin{split}
\medmath{\mathbf{J}_{\bm{\theta}_{\mathrm{t}_{\mathrm{l}}} \bm{\tau}}}
 &=   \medmath{-\frac{2}{\sigma^2}\sum_{q = 1}^{N_{Q}}\Re}\left\{\medmath{\left(\mathbf{B}^{\mathrm{H}} \mathbf{A}_{\mathrm{r}_{\mathrm{u}}}^{\mathrm{H}} \mathbf{A}_{\mathrm{r}_{\mathrm{u}}} \mathbf{B}\right) \odot\left(\mathbf{k}_{q,\mathrm{t}_{\mathrm{l}}} \mathbf{k}_{q,\mathrm{l}}^{\mathrm{H}}\right)
 \odot\left(\mathbf{D}_{\mathrm{\gamma}}^{\mathrm{H}} \mathbf{D}_{\mathrm{\gamma}}\right)}\right. \\ & \left. \medmath{
  \odot\left(\mathbf{A}_{\mathrm{t}_{\mathrm{u}}}^{\mathrm{H}}  \mathbf{F}   \mathbf{F}^{\mathrm{H}} \mathbf{A}_{\mathrm{t}_{\mathrm{u}}}^{}\right)^{\mathrm{T}} \odot \mathbf{R}_{1}}\right\}, \label{equ:FIM_exact_submat_theta_t_l_tau} \end{split}\\
  \begin{split}
\medmath{\mathbf{J}_{\bm{\theta}_{\mathrm{t}_{\mathrm{l}}} \bm{\beta}_{\mathrm{I}}}} &\medmath{+j\mathbf{J}_{\bm{\theta}_{\mathrm{t}_{\mathrm{l}}} \bm{\beta}_{\mathrm{R}}}}
 =   \medmath{\frac{2}{\sigma^2}\sum_{q = 1}^{N_{Q}}} \left\{\medmath{\left(\mathbf{B}^{\mathrm{H}} \mathbf{A}_{\mathrm{r}_{\mathrm{u}}}^{\mathrm{H}} \mathbf{A}_{\mathrm{r}_{\mathrm{u}}} \right) \odot\left(\mathbf{k}_{q,\mathrm{t}_{\mathrm{l}}} \mathbf{k}_{q,\mathrm{l}}^{\mathrm{H}}\right)}\right. \\ & \left. \medmath{
 \odot\left(\mathbf{D}_{\mathrm{\gamma}}^{\mathrm{H}} \mathbf{D}_{\mathrm{\gamma}}\right) 
  \odot\left(\mathbf{A}_{\mathrm{t}_{\mathrm{u}}}^{\mathrm{H}}  \mathbf{F}   \mathbf{F}^{\mathrm{H}} \mathbf{A}_{\mathrm{t}_{\mathrm{u}}}^{}\right)^{\mathrm{T}} \odot \mathbf{R}_{0}}\right\}. \label{equ:FIM_exact_submat_theta_t_l_beta}
  \end{split}
\end{align}
\end{subequations}
The submatrices related to $\bm{\phi}_{\mathrm{t}_{\mathrm{l}}}$ can be obtained by replacing $\mathbf{k}_{q,\mathrm{t}_{\mathrm{l}}}$ with $\mathbf{p}_{q,\mathrm{t}_{\mathrm{l}}}$ in the appropriate equations. The submatrices related to $\bm{\theta}_{\mathrm{r}_{\mathrm{l}}}$ are obtained as
\begin{subequations}[equ:FIM_exact_submat_theta_r_l]
\begin{align}
  \begin{split}
\medmath{\mathbf{J}_{\bm{\theta}_{\mathrm{r}_{\mathrm{l}}} \bm{\theta}_{\mathrm{r}_{\mathrm{l}}}}}
 &=   \medmath{\frac{2}{\sigma^2}\sum_{q = 1}^{N_{Q}}\Re} \left\{\medmath{\left(\mathbf{B}^{\mathrm{H}} \mathbf{A}_{\mathrm{r}_{\mathrm{u}}}^{\mathrm{H}} \mathbf{A}_{\mathrm{r}_{\mathrm{u}}} \mathbf{B}\right) \odot\left(\mathbf{k}_{q,\mathrm{r}_{\mathrm{l}}} \mathbf{k}_{q,\mathrm{r}_{\mathrm{l}}}^{\mathrm{H}}\right)
 \odot\left(\mathbf{D}_{\mathrm{\gamma}}^{\mathrm{H}} \mathbf{D}_{\mathrm{\gamma}}\right)} \right. \\ & \left.  \medmath{
 \odot\left(\mathbf{A}_{\mathrm{t}_{\mathrm{u}}}^{\mathrm{H}}  \mathbf{F}   \mathbf{F}^{\mathrm{H}} \mathbf{A}_{\mathrm{t}_{\mathrm{u}}}^{}\right)^{\mathrm{T}} \odot \mathbf{R}_{0}}\right\}, \label{equ:FIM_exact_submat_theta_r_l_theta_r_l}   \end{split}
\\
\begin{split}
\medmath{\mathbf{J}_{\bm{\theta}_{\mathrm{r}_{\mathrm{l}}} \bm{\phi}_{\mathrm{r}_{\mathrm{l}}}}}
 &=   \medmath{\frac{2}{\sigma^2}\sum_{q = 1}^{N_{Q}}\Re} \left\{\medmath{\left(\mathbf{B}^{\mathrm{H}} \mathbf{A}_{\mathrm{r}_{\mathrm{u}}}^{\mathrm{H}} \mathbf{A}_{\mathrm{r}_{\mathrm{u}}} \mathbf{B}\right) \odot\left(\mathbf{k}_{q,\mathrm{r}_{\mathrm{l}}} \mathbf{p}_{q,\mathrm{r}_{\mathrm{l}}}^{\mathrm{H}}\right)
 \odot\left(\mathbf{D}_{\mathrm{\gamma}}^{\mathrm{H}} \mathbf{D}_{\mathrm{\gamma}}\right)} \right. \\ & \left. 
  \medmath{\odot\left(\mathbf{A}_{\mathrm{t}_{\mathrm{u}}}^{\mathrm{H}}  \mathbf{F}   \mathbf{F}^{\mathrm{H}} \mathbf{A}_{\mathrm{t}_{\mathrm{u}}}^{}\right)^{\mathrm{T}} \odot \mathbf{R}_{0}}\right\}, \label{equ:FIM_exact_submat_theta_r_l_phi_r_l} \end{split}\\
  \begin{split}
\medmath{\mathbf{J}_{\bm{\theta}_{\mathrm{r}_{\mathrm{l}}} \bm{\theta}_{\mathrm{t}_{\mathrm{u}}}}}
 &=  \medmath{ -\frac{2}{\sigma^2}\sum_{q = 1}^{N_{Q}}\Re}\left\{\medmath{\left(\mathbf{B}^{\mathrm{H}} \mathbf{A}_{\mathrm{r}_{\mathrm{u}}}^{\mathrm{H}} \mathbf{A}_{\mathrm{r}_{\mathrm{u}}} \mathbf{B}\right) \odot\left(\mathbf{k}_{q,\mathrm{r}_{\mathrm{l}}} \mathbf{k}_{q,\mathrm{l}}^{\mathrm{H}}\right)
 \odot\left(\mathbf{D}_{\mathrm{\gamma}}^{\mathrm{H}} \mathbf{D}_{\mathrm{\gamma}}\right)}  \right. \\ & \left.  \medmath{
  \odot\left(\mathbf{K}_{\mathrm{t}_{\mathrm{u}}}^{\mathrm{H}}  \mathbf{F}   \mathbf{F}^{\mathrm{H}} \mathbf{A}_{\mathrm{t}_{\mathrm{u}}}^{}\right)^{\mathrm{T}} \odot \mathbf{R}_{0}}\right\}, \label{equ:FIM_exact_submat_theta_r_l_theta_t_u} \end{split}\\
  \begin{split}
\medmath{\mathbf{J}_{\bm{\theta}_{\mathrm{r}_{\mathrm{l}}} \bm{\phi}_{\mathrm{t}_{\mathrm{u}}}}}
 &=   \medmath{-\frac{2}{\sigma^2}\sum_{q = 1}^{N_{Q}}\Re}\left\{\medmath{\left(\mathbf{B}^{\mathrm{H}} \mathbf{A}_{\mathrm{r}_{\mathrm{u}}}^{\mathrm{H}} \mathbf{A}_{\mathrm{r}_{\mathrm{u}}} \mathbf{B}\right) \odot\left(\mathbf{k}_{q,\mathrm{r}_{\mathrm{l}}} \mathbf{k}_{q,\mathrm{l}}^{\mathrm{H}}\right)
 \odot\left(\mathbf{D}_{\mathrm{\gamma}}^{\mathrm{H}} \mathbf{D}_{\mathrm{\gamma}}\right)
  }  \right. \\ & \left.  \medmath{\odot\left(\mathbf{P}_{\mathrm{t}_{\mathrm{u}}}^{\mathrm{H}}  \mathbf{F}   \mathbf{F}^{\mathrm{H}} \mathbf{A}_{\mathrm{t}_{\mathrm{u}}}^{}\right)^{\mathrm{T}} \odot \mathbf{R}_{0}}\right\}, \label{equ:FIM_exact_submat_theta_r_l_phi_t_u} 
  \end{split}\\
  \begin{split}
\medmath{\mathbf{J}_{\bm{\theta}_{\mathrm{r}_{\mathrm{l}}} \bm{\tau}}}
 &=   \medmath{\frac{2}{\sigma^2}\sum_{q = 1}^{N_{Q}}\Re}\left\{\medmath{\left(\mathbf{B}^{\mathrm{H}} \mathbf{A}_{\mathrm{r}_{\mathrm{u}}}^{\mathrm{H}} \mathbf{A}_{\mathrm{r}_{\mathrm{u}}} \mathbf{B}\right) \odot\left(\mathbf{k}_{q,\mathrm{r}_{\mathrm{l}}} \mathbf{k}_{q,\mathrm{l}}^{\mathrm{H}}\right)
 \odot\left(\mathbf{D}_{\mathrm{\gamma}}^{\mathrm{H}} \mathbf{D}_{\mathrm{\gamma}}\right)} \right. \\ & \left. \medmath{
  \odot\left(\mathbf{A}_{\mathrm{t}_{\mathrm{u}}}^{\mathrm{H}}  \mathbf{F}   \mathbf{F}^{\mathrm{H}} \mathbf{A}_{\mathrm{t}_{\mathrm{u}}}^{}\right)^{\mathrm{T}} \odot \mathbf{R}_{1}}\right\}, \label{equ:FIM_exact_submat_theta_r_l_tau}
  \end{split}\\
  \begin{split}
\medmath{\mathbf{J}_{\bm{\theta}_{\mathrm{r}_{\mathrm{l}}} \bm{\beta}_{\mathrm{I}}}} &\medmath{+j\mathbf{J}_{\bm{\theta}_{\mathrm{r}_{\mathrm{l}}} \bm{\beta}_{\mathrm{R}}}}
 =   \medmath{-\frac{2}{\sigma^2}\sum_{q = 1}^{N_{Q}}}\left\{\medmath{\left(\mathbf{B}^{\mathrm{H}} \mathbf{A}_{\mathrm{r}_{\mathrm{u}}}^{\mathrm{H}} \mathbf{A}_{\mathrm{r}_{\mathrm{u}}} \right) \odot\left(\mathbf{k}_{q,\mathrm{r}_{\mathrm{l}}} \mathbf{k}_{q,\mathrm{l}}^{\mathrm{H}}\right)
 }   \right. \\ & \left.  
 \medmath{\odot\left(\mathbf{D}_{\mathrm{\gamma}}^{\mathrm{H}} \mathbf{D}_{\mathrm{\gamma}}\right)\odot\left(\mathbf{A}_{\mathrm{t}_{\mathrm{u}}}^{\mathrm{H}}  \mathbf{F}   \mathbf{F}^{\mathrm{H}} \mathbf{A}_{\mathrm{t}_{\mathrm{u}}}^{}\right)^{\mathrm{T}} \odot \mathbf{R}_{0}}\right\} \label{equ:FIM_exact_submat_theta_r_l_beta}.
 \end{split}
\end{align}
\end{subequations}
The submatrices related to $\bm{\phi}_{\mathrm{r}_{\mathrm{l}}}$ can be obtained by replacing $\mathbf{k}_{q,\mathrm{r}_{\mathrm{l}}}$ with $\mathbf{p}_{q,\mathrm{r}_{\mathrm{l}}}$ in the appropriate equations. 
The submatrices related to $\bm{\theta}_{\mathrm{t}_{\mathrm{u}}}$ are obtained as
\begin{subequations}[equ:FIM_exact_submat_theta_t_u]
\begin{align}
\begin{split}
\medmath{\mathbf{J}_{\bm{\theta}_{\mathrm{t}_{\mathrm{u}}} \bm{\theta}_{\mathrm{t}_{\mathrm{u}}}}}
 &=   \medmath{\frac{2}{\sigma^2}\sum_{q = 1}^{N_{Q}}\Re} \left\{\medmath{\left(\mathbf{B}^{\mathrm{H}} \mathbf{A}_{\mathrm{r}_{\mathrm{u}}}^{\mathrm{H}} \mathbf{A}_{\mathrm{r}_{\mathrm{u}}} \mathbf{B}\right) \odot\left(\mathbf{k}_{q,\mathrm{l}} \mathbf{k}_{q,\mathrm{l}}^{\mathrm{H}}\right)
 \odot\left(\mathbf{D}_{\mathrm{\gamma}}^{\mathrm{H}} \mathbf{D}_{\mathrm{\gamma}}\right)} \right. \\ & \left. \medmath{
  \odot\left(\mathbf{K}_{\mathrm{t}_{\mathrm{u}}}^{\mathrm{H}}  \mathbf{F}   \mathbf{F}^{\mathrm{H}} \mathbf{K}_{\mathrm{t}_{\mathrm{u}}}^{}\right)^{\mathrm{T}} \odot \mathbf{R}_{0}}\right\}, \label{equ:FIM_exact_submat_theta_t_u_theta_t_u} \end{split} \\
  \begin{split}
\medmath{\mathbf{J}_{\bm{\theta}_{\mathrm{t}_{\mathrm{u}}} \bm{\phi}_{\mathrm{t}_{\mathrm{u}}}}}
 &=  \medmath{ \frac{2}{\sigma^2}\sum_{q = 1}^{N_{Q}}\Re}\left\{\medmath{\left(\mathbf{B}^{\mathrm{H}} \mathbf{A}_{\mathrm{r}_{\mathrm{u}}}^{\mathrm{H}} \mathbf{A}_{\mathrm{r}_{\mathrm{u}}} \mathbf{B}\right) \odot\left(\mathbf{k}_{q,\mathrm{l}} \mathbf{k}_{q,\mathrm{l}}^{\mathrm{H}}\right)
 \odot\left(\mathbf{D}_{\mathrm{\gamma}}^{\mathrm{H}} \mathbf{D}_{\mathrm{\gamma}}\right)} \right. \\ & \left. \medmath{
  \odot\left(\mathbf{P}_{\mathrm{t}_{\mathrm{u}}}^{\mathrm{H}}  \mathbf{F}   \mathbf{F}^{\mathrm{H}} \mathbf{K}_{\mathrm{t}_{\mathrm{u}}}^{}\right)^{\mathrm{T}} \odot \mathbf{R}_{0}}\right\}, \label{equ:FIM_exact_submat_theta_t_u_phi_t_u}
  \end{split}\\
  \begin{split}
\medmath{\mathbf{J}_{\bm{\theta}_{\mathrm{t}_{\mathrm{u}}} \bm{\tau}} }
 &=  \medmath{ -\frac{2}{\sigma^2}\sum_{q = 1}^{N_{Q}}\Re} \left\{\medmath{\left(\mathbf{B}^{\mathrm{H}} \mathbf{A}_{\mathrm{r}_{\mathrm{u}}}^{\mathrm{H}} \mathbf{A}_{\mathrm{r}_{\mathrm{u}}} \mathbf{B}\right) \odot\left(\mathbf{k}_{q,\mathrm{l}} \mathbf{k}_{q,\mathrm{l}}^{\mathrm{H}}\right)
 \odot\left(\mathbf{D}_{\mathrm{\gamma}}^{\mathrm{H}} \mathbf{D}_{\mathrm{\gamma}}\right)} \right. \\ & \left. \medmath{
  \odot\left(\mathbf{A}_{\mathrm{t}_{\mathrm{u}}}^{\mathrm{H}}  \mathbf{F}   \mathbf{F}^{\mathrm{H}} \mathbf{K}_{\mathrm{t}_{\mathrm{u}}}^{}\right)^{\mathrm{T}} \odot \mathbf{R}_{1}}\right\}, \label{equ:FIM_exact_submat_theta_t_u_tau} 
  \end{split}\\
  \begin{split}
\medmath{\mathbf{J}_{\bm{\theta}_{\mathrm{t}_{\mathrm{u}}} \bm{\beta}_{\mathrm{I}}}} &\medmath{+j\mathbf{J}_{\bm{\theta}_{\mathrm{t}_{\mathrm{u}}} \bm{\beta}_{\mathrm{R}}}} =   \medmath{\frac{2}{\sigma^2}\sum_{q = 1}^{N_{Q}}}\left\{\medmath{\left(\mathbf{B}^{\mathrm{H}} \mathbf{A}_{\mathrm{r}_{\mathrm{u}}}^{\mathrm{H}} \mathbf{A}_{\mathrm{r}_{\mathrm{u}}} \right) \odot\left(\mathbf{k}_{q,\mathrm{l}} \mathbf{k}_{q,\mathrm{l}}^{\mathrm{H}}\right)
 \odot\left(\mathbf{D}_{\mathrm{\gamma}}^{\mathrm{H}} \mathbf{D}_{\mathrm{\gamma}}\right)} \right. \\ & \left. \medmath{
  \odot\left(\mathbf{A}_{\mathrm{t}_{\mathrm{u}}}^{\mathrm{H}}  \mathbf{F}   \mathbf{F}^{\mathrm{H}} \mathbf{K}_{\mathrm{t}_{\mathrm{u}}}^{}\right)^{\mathrm{T}} \odot \mathbf{R}_{0}}\right\}. \label{equ:FIM_exact_submat_theta_t_u_beta}
  \end{split}
\end{align}
\end{subequations}
The submatrices related to $\bm{\phi}_{\mathrm{t}_{\mathrm{u}}}$ can be obtained by replacing $\mathbf{K}_{\mathrm{t}_{\mathrm{u}}}^{}$ with $\mathbf{P}_{\mathrm{t}_{\mathrm{u}}}^{}$ in the appropriate equations. 
The submatrices related to the delay are obtained as
\begin{subequations}[equ:FIM_exact_submat_tau]
\begin{align}
\begin{split}
    \medmath{\mathbf{J}_{\bm{\tau} \bm{\tau}}}
 &=   \medmath{\frac{2}{\sigma^2}\sum_{q = 1}^{N_{Q}}\Re}\left\{\medmath{\left(\mathbf{B}^{\mathrm{H}} \mathbf{A}_{\mathrm{r}_{\mathrm{u}}}^{\mathrm{H}} \mathbf{A}_{\mathrm{r}_{\mathrm{u}}} \mathbf{B}\right) \odot\left(\mathbf{k}_{q,\mathrm{l}} \mathbf{k}_{q,\mathrm{l}}^{\mathrm{H}}\right)
 \odot\left(\mathbf{D}_{\mathrm{\gamma}}^{\mathrm{H}} \mathbf{D}_{\mathrm{\gamma}}\right) } \right. \\ & \left. \medmath{
 \odot\left(\mathbf{A}_{\mathrm{t}_{\mathrm{u}}}^{\mathrm{H}}  \mathbf{F}   \mathbf{F}^{\mathrm{H}} \mathbf{A}_{\mathrm{t}_{\mathrm{u}}}^{}\right)^{\mathrm{T}} \odot \mathbf{R}_{2}}\right\}, \label{equ:FIM_exact_submat_tau_tau} 
 \end{split}
\\
\begin{split}
\medmath{\mathbf{J}_{\bm{\tau} \bm{\beta}_{\mathrm{I}}}} &\medmath{+j\mathbf{J}_{\bm{\tau} \bm{\beta}_{\mathrm{R}}}} =   \medmath{-\frac{2}{\sigma^2}\sum_{q = 1}^{N_{Q}}}\left\{ \medmath{\left(\mathbf{B}^{\mathrm{H}} \mathbf{A}_{\mathrm{r}_{\mathrm{u}}}^{\mathrm{H}} \mathbf{A}_{\mathrm{r}_{\mathrm{u}}} \right) \odot\left(\mathbf{k}_{q,\mathrm{l}} \mathbf{k}_{q,\mathrm{l}}^{\mathrm{H}}\right)
 \odot\left(\mathbf{D}_{\mathrm{\gamma}}^{\mathrm{H}} \mathbf{D}_{\mathrm{\gamma}}\right)} \right. \\ & \left. \medmath{
  \odot\left(\mathbf{A}_{\mathrm{t}_{\mathrm{u}}}^{\mathrm{H}}  \mathbf{F}   \mathbf{F}^{\mathrm{H}} \mathbf{A}_{\mathrm{t}_{\mathrm{u}}}^{}\right)^{\mathrm{T}} \odot \mathbf{R}_{1}}\right\}. \label{equ:FIM_exact_submat_tau_beta}
  \end{split}
\end{align}
\end{subequations}
The submatrices related to the channel complex gain  are obtained as
\begin{subequations}[equ:FIM_exact_submat_beta_r]
\begin{align}
\begin{split}
\medmath{\mathbf{J}_{ \bm{\beta}_{\mathrm{R}}  \bm{\beta}_{\mathrm{R}}}}
 &= \medmath{\mathbf{J}_{ \bm{\beta}_{\mathrm{I}}  \bm{\beta}_{\mathrm{I}}}
 =   \frac{2}{\sigma^2}\sum_{q = 1}^{N_{Q}}\Re}\left\{\medmath{\left( \mathbf{A}_{\mathrm{r}_{\mathrm{u}}}^{\mathrm{H}} \mathbf{A}_{\mathrm{r}_{\mathrm{u}}}\right) \odot\left(\mathbf{k}_{q,\mathrm{l}} \mathbf{k}_{q,\mathrm{l}}^{\mathrm{H}}\right)
 \odot\left(\mathbf{D}_{\mathrm{\gamma}}^{\mathrm{H}} \mathbf{D}_{\mathrm{\gamma}}\right)}
   \right. \\ & \left. \medmath{\odot\left(\mathbf{A}_{\mathrm{t}_{\mathrm{u}}}^{\mathrm{H}}  \mathbf{F}   \mathbf{F}^{\mathrm{H}} \mathbf{A}_{\mathrm{t}_{\mathrm{u}}}^{}\right)^{\mathrm{T}} \odot \mathbf{R}_{0}}\right\}, \label{equ:FIM_exact_submat_beta_r_beta_r} \end{split}
 \\
    \begin{split}
\medmath{\mathbf{J}_{ \bm{\beta}_{\mathrm{R}}  \bm{\beta}_{\mathrm{I}}}}
 &=   \medmath{\frac{2}{\sigma^2}\sum_{q = 1}^{N_{Q}}\Re} \left\{\medmath{\left( j \mathbf{A}_{\mathrm{r}_{\mathrm{u}}}^{\mathrm{H}} \mathbf{A}_{\mathrm{r}_{\mathrm{u}}}\right) \odot\left(\mathbf{k}_{q,\mathrm{l}} \mathbf{k}_{q,\mathrm{l}}^{\mathrm{H}}\right)
 \odot\left(\mathbf{D}_{\mathrm{\gamma}}^{\mathrm{H}} \mathbf{D}_{\mathrm{\gamma}}\right) }  \right. \\ & \left.   \medmath{  
  \odot\left(\mathbf{A}_{\mathrm{t}_{\mathrm{u}}}^{\mathrm{H}}  \mathbf{F}   \mathbf{F}^{\mathrm{H}} \mathbf{A}_{\mathrm{t}_{\mathrm{u}}}^{}\right)^{\mathrm{T}} \odot \mathbf{R}_{0} }\right\}. \label{equ:FIM_exact_submat_beta_r_beta_I}
  \end{split}
\end{align}
\end{subequations}

	\subsection{ Entries of the  EFIM with Unitary RIS Sequences and Parallel RIS Configurations}
	\label{appendix:FIM_nuis_remark_structure}The information loss terms in the EFIM expression that have a different structure than that described in Corollary \ref{corollary:parallel_nu}  are presented in this section. The information loss terms  related to the receive angles in both elevation  and azimuth are written as
$$\medmath{\mathbf{J}_{ \bm{\theta}_{\mathrm{r}_{\mathrm{u}}}  \bm{\theta}_{\mathrm{r}_{\mathrm{u}}}}^{\mathrm{nu}}
 = \Tilde{\mathbf{J}}_{ \bm{\beta}_{\mathrm{R}}  \bm{\beta}_{\mathrm{R}}}^{-1}  \frac{4}{\sigma^4}N_{Q} \left[  \Re \left\{ \|\left(\mathbf{B}^{\mathrm{H}} \mathbf{K}_{\mathrm{r}_{\mathrm{u}}}^{\mathrm{H}} \mathbf{A}_{\mathrm{r}_{\mathrm{u}}} \right)\|_0^2    \odot  \mathbf{V}_{q,\mathrm{r}_{\mathrm{u}}} \odot  \mathbf{V}_{q,\mathrm{r}_{\mathrm{u}}} \right\} \right],} \\
$$ 
$$  \medmath{\mathbf{J}_{ \bm{\phi}_{\mathrm{r}_{\mathrm{u}}}  \bm{\phi}_{\mathrm{r}_{\mathrm{u}}}}^{\mathrm{nu}}
 =  \Tilde{\mathbf{J}}_{ \bm{\beta}_{\mathrm{R}}  \bm{\beta}_{\mathrm{R}}}^{-1}  \frac{4}{\sigma^4}N_{Q}\left[  \Re \left\{ \|\left(\mathbf{B}^{\mathrm{H}} \mathbf{P}_{\mathrm{r}_{\mathrm{u}}}^{\mathrm{H}} \mathbf{A}_{\mathrm{r}_{\mathrm{u}}} \right)   \|_0^2    \odot  \mathbf{V}_{q,\mathrm{r}_{\mathrm{u}}} \odot \mathbf{V}_{q,\mathrm{r}_{\mathrm{u}}} \right\} \right],}$$
and
\begin{equation}
\begin{aligned}
 \medmath{\mathbf{J}_{ \bm{\theta}_{\mathrm{r}_{\mathrm{u}}}  \bm{\phi}_{\mathrm{r}_{\mathrm{u}}}}^{\mathrm{nu}}}
 &=   \medmath{\Tilde{\mathbf{J}}_{ \bm{\beta}_{\mathrm{R}}  \bm{\beta}_{\mathrm{R}}}^{-1}  \frac{4}{\sigma^4}N_{Q}  \Re} \left\{ \medmath{\left(\mathbf{B}^{\mathrm{H}} \mathbf{K}_{\mathrm{r}_{\mathrm{u}}}^{\mathrm{H}} \mathbf{A}_{\mathrm{r}_{\mathrm{u}}}  \right)  \odot \left(\mathbf{B}^{\mathrm{H}} \mathbf{P}_{\mathrm{r}_{\mathrm{u}}}^{\mathrm{H}} \mathbf{A}_{\mathrm{r}_{\mathrm{u}}} \right)^{\mathrm{H}}} \right. \\ & \left.  \medmath{\odot  \mathbf{V}_{q,\mathrm{r}_{\mathrm{u}}} \odot \mathbf{V}_{q,\mathrm{r}_{\mathrm{u}}}} \right\},
 \end{aligned}
\end{equation}
where $
 \medmath{\mathbf{V}_{q,\mathrm{r}_{\mathrm{u}}} =   \left(\mathbf{k}_{q,\mathrm{l}} \mathbf{k}_{q,\mathrm{l}}^{\mathrm{H}}\right)
 \odot\left(\mathbf{D}_{\mathrm{\gamma}}^{\mathrm{H}} \mathbf{D}_{\mathrm{\gamma}}\right)
 \odot\left(\mathbf{A}_{\mathrm{t}_{\mathrm{u}}}^{\mathrm{H}}  \mathbf{F}   \mathbf{F}^{\mathrm{H}} \mathbf{A}_{\mathrm{t}_{\mathrm{u}}}^{}\right)^{\mathrm{T}} \odot \mathbf{R}_{0}.}
$
The terms related to the transmit angles are written as
\begin{equation}
\label{equ:appendix_FIM_nuis_equivalent_submat_theta_t_u}
\begin{aligned}
&\medmath{\mathbf{J}_{ \bm{\theta}_{\mathrm{\mathrm{t}_{\mathrm{u}}}}  \bm{\theta}_{\mathrm{\mathrm{t}_{\mathrm{u}}}}}^{\mathrm{nu}}} \\ &= \medmath{   \Tilde{\mathbf{J}}_{ \bm{\beta}_{\mathrm{R}}  \bm{\beta}_{\mathrm{R}}}^{-1} \frac{4}{\sigma^4}N_{Q} \left[  \Re \left\{ \| \left(\mathbf{B}^{\mathrm{H}} \mathbf{A}_{\mathrm{r}_{\mathrm{u}}}^{\mathrm{H}} \mathbf{A}_{\mathrm{r}_{\mathrm{u}}} \right) \|_0^2     \odot  \mathbf{V}_{q,\mathrm{k}_{\mathrm{t}_\mathrm{u}}}^{\mathrm{H}} \odot  \mathbf{V}_{q,\mathrm{k}_{\mathrm{t}_\mathrm{u}}} \right\} \right]}, \\
 &\medmath{ \mathbf{J}_{ \bm{\theta}_{\mathrm{\mathrm{t}_{\mathrm{u}}}}  \bm{\phi}_{\mathrm{\mathrm{t}_{\mathrm{u}}}}}^{\mathrm{nu}}}
 \\ &=  \medmath{\Tilde{\mathbf{J}}_{ \bm{\beta}_{\mathrm{R}}  \bm{\beta}_{\mathrm{R}}}^{-1} \frac{4}{\sigma^4}N_{Q}  \left[ \Re \left\{ \| \left(\mathbf{B}^{\mathrm{H}} \mathbf{A}_{\mathrm{r}_{\mathrm{u}}}^{\mathrm{H}} \mathbf{A}_{\mathrm{r}_{\mathrm{u}}} \right) \|_0^2      \odot  \mathbf{V}_{q,\mathrm{k}_{\mathrm{t}_\mathrm{u}}} \odot  \mathbf{V}_{q,\mathrm{p}_{\mathrm{t}_\mathrm{u}}}^{\mathrm{H}} \right\} \right],}  \\
  &\medmath{\mathbf{J}_{ \bm{\phi}_{\mathrm{\mathrm{t}_{\mathrm{u}}}}  \bm{\phi}_{\mathrm{\mathrm{t}_{\mathrm{u}}}}}^{\mathrm{nu}}} \\ &=  \medmath{\Tilde{\mathbf{J}}_{ \bm{\beta}_{\mathrm{R}}  \bm{\beta}_{\mathrm{R}}}^{-1}  \frac{4}{\sigma^4}N_{Q} \left[ \Re \left\{ \| \left(\mathbf{B}^{\mathrm{H}} \mathbf{A}_{\mathrm{r}_{\mathrm{u}}}^{\mathrm{H}} \mathbf{A}_{\mathrm{r}_{\mathrm{u}}} \right) \|_0^2  \odot  \mathbf{V}_{q,\mathrm{p}_{\mathrm{t}_\mathrm{u}}}^{\mathrm{H}} \odot  \mathbf{V}_{q,\mathrm{p}_{\mathrm{t}_\mathrm{u}}} \right\} \right],}
\end{aligned}
\end{equation}
where
$$
\begin{aligned}
 \medmath{\mathbf{V}_{q,\mathrm{k}_{\mathrm{t}_\mathrm{u}}} =   \left(  \left(\mathbf{A}_{\mathrm{t}_{\mathrm{u}}}^{\mathrm{H}}  \mathbf{F}   \mathbf{F}^{\mathrm{H}} \mathbf{K}_{\mathrm{t}_{\mathrm{u}}}^{}\right)^{\mathrm{T}}  \right) \odot \left(\mathbf{k}_{q,\mathrm{l}} \mathbf{k}_{q,\mathrm{l}}^{\mathrm{H}}\right)
 \odot\left(\mathbf{D}_{\mathrm{\gamma}}^{\mathrm{H}} \mathbf{D}_{\mathrm{\gamma}}\right)
  \odot \mathbf{R}_{0}}, \\
   \medmath{\mathbf{V}_{q,\mathrm{p}_{\mathrm{t}_\mathrm{u}}} =   \left(  \left(\mathbf{A}_{\mathrm{t}_{\mathrm{u}}}^{\mathrm{H}}  \mathbf{F}   \mathbf{F}^{\mathrm{H}} \mathbf{P}_{\mathrm{t}_{\mathrm{u}}}^{}\right)^{\mathrm{T}}  \right) \odot \left(\mathbf{k}_{q,\mathrm{l}} \mathbf{k}_{q,\mathrm{l}}^{\mathrm{H}}\right)
 \odot\left(\mathbf{D}_{\mathrm{\gamma}}^{\mathrm{H}} \mathbf{D}_{\mathrm{\gamma}}\right)
  \odot \mathbf{R}_{0}.}
  \end{aligned}
$$
The term related to the ToA is written as
\begin{equation}
\label{equ:appendix_FIM_nuis_equivalent_submat_toa}
\begin{aligned}
\medmath{\mathbf{J}_{ \mathbf{\mathrm{\tau}} \mathbf{\mathrm{\tau}}}^{\mathrm{nu}}
 =  \frac{4}{\sigma^4}N_{Q} \Tilde{\mathbf{J}}_{ \bm{\beta}_{\mathrm{R}}  \bm{\beta}_{\mathrm{R}}}^{-1} \odot \left[ \mathbf{J}_{q,\bm{\tau} \bm{\beta}_{\mathrm{I}}} +j\mathbf{J}_{q,\bm{\tau} \bm{\beta}_{\mathrm{R}}}\right].}    \\ 
\end{aligned}
\end{equation}

\subsection{Proof of Corollary \ref{corollary:parallel_nu}}
\label{appendix_nusiance_structure}
In this proof, we focus on the information loss due to the nuisance parameters which concerns the cross correlation between the receive elevation angle and the elevation angle of reflection
\begin{equation}
\label{equ:FIM_nuis_equivalent_submat_theta_r_u_theta_t_l}
\begin{aligned}
\medmath{\mathbf{J}_{ \bm{\theta}_{\mathrm{r}_{\mathrm{u}}}  \bm{\theta}_{\mathrm{t}_{\mathrm{l}}}}^{\mathrm{nu}}}
 &= \medmath{\mathbf{J}_{ \bm{\theta}_{\mathrm{r}_{\mathrm{u}}}  \bm{\beta}_{\mathrm{R}}} \Tilde{\mathbf{J}}_{ \bm{\beta}_{\mathrm{R}}  \bm{\beta}_{\mathrm{R}}}^{-1}  \mathbf{J}_{ \bm{\theta}_{\mathrm{t}_{\mathrm{l}}}  \bm{\beta}_{\mathrm{R}}}^{\mathrm{T}} + \mathbf{J}_{ \bm{\theta}_{\mathrm{r}_{\mathrm{u}}}  \bm{\beta}_{\mathrm{I}}}  \Tilde{\mathbf{J}}_{ \bm{\beta}_{\mathrm{I}}  \bm{\beta}_{\mathrm{I}}}^{-1}  \mathbf{J}_{ \bm{\theta}_{\mathrm{t}_{\mathrm{l}}}  \bm{\beta}_{\mathrm{I}}}^{\mathrm{T}}} \\ &= \medmath{\Tilde{\mathbf{J}}_{ \bm{\beta}_{\mathrm{R}}  \bm{\beta}_{\mathrm{R}}}^{-1}\left[ \mathbf{J}_{ \bm{\theta}_{\mathrm{r}_{\mathrm{u}}}  \bm{\beta}_{\mathrm{R}}}   \mathbf{J}_{ \bm{\theta}_{\mathrm{t}_{\mathrm{l}}}  \bm{\beta}_{\mathrm{R}}}^{\mathrm{T}} + \mathbf{J}_{ \bm{\theta}_{\mathrm{r}_{\mathrm{u}}}  \bm{\beta}_{\mathrm{I}}}   \mathbf{J}_{ \bm{\theta}_{\mathrm{t}_{\mathrm{l}}}  \bm{\beta}_{\mathrm{I}}}^{\mathrm{T}} \right],} \\
\end{aligned}
\end{equation}
which is a consequence  of Remark \ref{remark:nusiance_params_diagonal}. Now, applying basic complex analysis, $\Im{(\nu_1)}\Im{(\nu_2)} + \Re{(\nu_1)}\Re{(\nu_2)} = \Re{(\nu_1 \nu_2^{\mathrm{H}})}= \Re{(\nu_1^{\mathrm{H}} \nu_2^{})}$, we have
\begin{equation}
\label{equ:FIM_nuis_equivalent_submat_theta_r_u_theta_t_l_1}
\begin{aligned}
&\medmath{\mathbf{J}_{ \bm{\theta}_{\mathrm{r}_{\mathrm{u}}}  \bm{\theta}_{\mathrm{t}_{\mathrm{l}}}}^{\mathrm{nu}} = } \\ &
  \medmath{\Tilde{\mathbf{J}}_{ \bm{\beta}_{\mathrm{R}}  \bm{\beta}_{\mathrm{R}}}^{-1} \frac{4}{\sigma^4}N_{Q}\Re\left\{\left[ ( \mathbf{J}_{q, \bm{\theta}_{\mathrm{r}_{\mathrm{u}}}  \bm{\beta}_{\mathrm{I}}}+ j\mathbf{J}_{q, \bm{\theta}_{\mathrm{r}_{\mathrm{u}}}  \bm{\beta}_{\mathrm{R}}}) (     \mathbf{J}_{q, \bm{\theta}_{\mathrm{t}_{\mathrm{l}}}  \bm{\beta}_{\mathrm{I}}}^{\mathrm{}} +j\mathbf{J}_{q, \bm{\theta}_{\mathrm{t}_{\mathrm{l}}}  \bm{\beta}_{\mathrm{R}}}^{\mathrm{}}  )^{\mathrm{H}} \right]\right\}.} \\
\end{aligned}
\end{equation}
Now, substituting (\ref{equ:FIM_exact_submat_theta_r_u_beta}) and (\ref{equ:FIM_exact_submat_theta_t_l_beta}) in the above equation gives
\begin{equation}
\label{equ:FIM_nuis_equivalent_submat_theta_r_u_theta_t_l_3}
\begin{aligned}
&\medmath{\mathbf{J}_{ \bm{\theta}_{\mathrm{r}_{\mathrm{u}}}  \bm{\theta}_{\mathrm{t}_{\mathrm{l}}}}^{\mathrm{nu}} =  
\Tilde{\mathbf{J}}_{ \bm{\beta}_{\mathrm{R}}  \bm{\beta}_{\mathrm{R}}}^{-1}\left[ -\frac{4N_{Q}}{\sigma^4}\Re\left\{\left(\mathbf{B}^{\mathrm{H}} \mathbf{K}_{\mathrm{r}_{\mathrm{u}}}^{\mathrm{H}} \mathbf{A}_{\mathrm{r}_{\mathrm{u}}} \right) \odot\left(\mathbf{k}_{q,\mathrm{l}} \mathbf{k}_{q,\mathrm{l}}^{\mathrm{H}}\right)  \right. \right.}\\
 &\medmath{\left.\left. \odot
\left(\mathbf{D}_{\mathrm{\gamma}}^{\mathrm{H}} \mathbf{D}_{\mathrm{\gamma}}\right)  \odot\left(\mathbf{A}_{\mathrm{t}_{\mathrm{u}}}^{\mathrm{H}}  \mathbf{F}   \mathbf{F}^{\mathrm{H}} \mathbf{A}_{\mathrm{t}_{\mathrm{u}}}^{}\right)^{\mathrm{T}} \odot \mathbf{R}_{0} \odot \left( \mathbf{A}_{\mathrm{r}_{\mathrm{u}}}^{\mathrm{H}} \mathbf{A}_{\mathrm{r}_{\mathrm{u}}}\mathbf{B}^{\mathrm{}} \right)  
  \odot\left(\mathbf{k}_{q,\mathrm{l}}^{\mathrm{}}\mathbf{k}_{q,\mathrm{t}_{\mathrm{l}}}^{\mathrm{H}} \right)  \right. \right.}\\
 &\medmath{\left.\left. 
 \odot\left(\mathbf{D}_{\mathrm{\gamma}}^{\mathrm{H}} \mathbf{D}_{\mathrm{\gamma}}\right)^{\mathrm{}} 
\left(\mathbf{A}_{\mathrm{t}_{\mathrm{u}}}^{\mathrm{H}}  \mathbf{F}   \mathbf{F}^{\mathrm{H}} \mathbf{A}_{\mathrm{t}_{\mathrm{u}}}^{}\right)^{\mathrm{T}} \odot \mathbf{R}_{0}
 \right\} \right] }.
\end{aligned}
\end{equation}
Rearranging the terms and applying the properties relating Hadamard products with diagonal matrices \cite{horn2012matrix} produces
\begin{equation}
\label{equ:FIM_nuis_equivalent_submat_theta_r_u_theta_t_l_4}
\begin{aligned}
&\medmath{\mathbf{J}_{ \bm{\theta}_{\mathrm{r}_{\mathrm{u}}}  \bm{\theta}_{\mathrm{t}_{\mathrm{l}}}}^{\mathrm{nu}} =  
\Tilde{\mathbf{J}}_{ \bm{\beta}_{\mathrm{R}}  \bm{\beta}_{\mathrm{R}}}^{-1}\left[ -\frac{4N_{Q}}{\sigma^4}\Re\left\{\left(\mathbf{B}^{\mathrm{H}} \mathbf{K}_{\mathrm{r}_{\mathrm{u}}}^{\mathrm{H}} \mathbf{A}_{\mathrm{r}_{\mathrm{u}}}\mathbf{B}^{\mathrm{}} \right) \odot\left(\mathbf{k}_{q,\mathrm{l}}^{\mathrm{}}\mathbf{k}_{q,\mathrm{t}_{\mathrm{l}}}^{\mathrm{H}} \right)  \right. \right.}\\
 &\medmath{\left.\left. \odot
\left(\mathbf{D}_{\mathrm{\gamma}}^{\mathrm{H}} \mathbf{D}_{\mathrm{\gamma}}\right)  \odot\left(\mathbf{A}_{\mathrm{t}_{\mathrm{u}}}^{\mathrm{H}}  \mathbf{F}   \mathbf{F}^{\mathrm{H}} \mathbf{A}_{\mathrm{t}_{\mathrm{u}}}^{}\right)^{\mathrm{T}} \odot \mathbf{R}_{0}  \odot \left( \mathbf{A}_{\mathrm{r}_{\mathrm{u}}}^{\mathrm{H}} \mathbf{A}_{\mathrm{r}_{\mathrm{u}}} \right)  
  \odot \left(\mathbf{k}_{q,\mathrm{l}} \mathbf{k}_{q,\mathrm{l}}^{\mathrm{H}}\right)  \right. \right. }\\
 &\medmath{ \left.\left. 
 \odot\left(\mathbf{D}_{\mathrm{\gamma}}^{\mathrm{H}} \mathbf{D}_{\mathrm{\gamma}}\right)^{\mathrm{}} \odot
\left(\mathbf{A}_{\mathrm{t}_{\mathrm{u}}}^{\mathrm{H}}  \mathbf{F}   \mathbf{F}^{\mathrm{H}} \mathbf{A}_{\mathrm{t}_{\mathrm{u}}}^{}\right)^{\mathrm{T}} \odot \mathbf{R}_{0}
 \right\} \right]}.
\end{aligned}
\end{equation}
Using Remark \ref{remark:nusiance_params_diagonal} yields
\begin{equation}
\label{equ:FIM_nuis_equivalent_submat_theta_r_u_theta_t_l_5}
\begin{aligned}
&\medmath{\mathbf{J}_{ \bm{\theta}_{\mathrm{r}_{\mathrm{u}}}  \bm{\theta}_{\mathrm{t}_{\mathrm{l}}}}^{\mathrm{nu}} =  
\Tilde{\mathbf{J}}_{ \bm{\beta}_{\mathrm{R}}  \bm{\beta}_{\mathrm{R}}}^{-1} {\mathbf{J}}_{q, \bm{\beta}_{\mathrm{R}}  \bm{\beta}_{\mathrm{R}}}^{}\left[ -\frac{4N_{Q}}{\sigma^4}\Re\left\{\left(\mathbf{B}^{\mathrm{H}} \mathbf{K}_{\mathrm{r}_{\mathrm{u}}}^{\mathrm{H}} \mathbf{A}_{\mathrm{r}_{\mathrm{u}}}\mathbf{B}^{\mathrm{}} \right) \odot\left(\mathbf{k}_{q,\mathrm{l}}^{\mathrm{}}\mathbf{k}_{q,\mathrm{t}_{\mathrm{l}}}^{\mathrm{H}} \right)  \right. \right.}\\
 &\medmath{\left.\left. \odot
\left(\mathbf{D}_{\mathrm{\gamma}}^{\mathrm{H}} \mathbf{D}_{\mathrm{\gamma}}\right)  \odot\left(\mathbf{A}_{\mathrm{t}_{\mathrm{u}}}^{\mathrm{H}}  \mathbf{F}   \mathbf{F}^{\mathrm{H}} \mathbf{A}_{\mathrm{t}_{\mathrm{u}}}^{}\right)^{\mathrm{T}} \odot \mathbf{R}_{0}
 \right\} \right]}.
\end{aligned}
\end{equation}
Now, recognizing  that the term in the square brackets is $\mathbf{J}_{q, \bm{\theta}_{\mathrm{r}_{\mathrm{u}}}  \bm{\theta}_{\mathrm{t}_{\mathrm{l}}}}^{\mathrm{}} $, we have the structure
$
\mathbf{J}_{ \bm{\theta}_{\mathrm{r}_{\mathrm{u}}}  \bm{\theta}_{\mathrm{t}_{\mathrm{l}}}}^{\mathrm{nu}}
  = \frac{4N_{Q}}{\sigma^4}\Tilde{\mathbf{J}}_{ \bm{\beta}_{\mathrm{R}}  \bm{\beta}_{\mathrm{R}}}^{-1} \mathbf{J}_{q, \bm{\beta}_{\mathrm{R}}  \bm{\beta}_{\mathrm{R}}}^{}\mathbf{J}_{q, \bm{\theta}_{\mathrm{r}_{\mathrm{u}}}  \bm{\theta}_{\mathrm{t}_{\mathrm{l}}}}^{}.$
Here, $\Tilde{\mathbf{J}}_{ \bm{\beta}_{\mathrm{R}}  \bm{\beta}_{\mathrm{R}}} = \frac{2}{\sigma^2}N_{Q} { \mathbf{J}}_{q, \bm{\beta}_{\mathrm{R}}  \bm{\beta}_{\mathrm{R}}} + { \mathbf{J}}_{ \bm{\beta}_{\mathrm{R}}  \bm{\beta}_{\mathrm{R}}}^{\mathrm{P}}.$ Other terms that have the same structure as those in Corollary \ref{corollary:parallel_nu} can be derived similarly.
\subsection{Entries of the FIM between the LOS path and the RIS paths}
\label{appendix:FIM_los_RIS}
Defining $\mathbf{R}_{0}^{\mathrm{L}_1}$ as the signal factor considering a LOS path and the RIS paths; the entries in the FIM $\mathbf{J}_{\psi \eta}^{\mathrm{D}} \in \mathbb{R}^{7 \times 11 {M}_1}$ are presented below
	\begin{subequations}[equ:FIM_exact_submat_theta_r_u_0]
\begin{align}
\begin{split}
\medmath{\mathbf{J}_{\bm{\theta}_{\mathrm{r}_{\mathrm{u}}}^{\mathrm{[0]}} \bm{\theta}_{\mathrm{r}_{\mathrm{u}}}}}
 &=   \medmath{\frac{2}{\sigma^2}\sum_{q = 1}^{N_{Q}}\Re}\left\{\left(\beta^{{\mathrm{H}}[\mathrm{0}]} \mathbf{a}_{\mathrm{r}_{\mathrm{u}}}^{{\mathrm{H}}[0] } {\mathbf{K}}_{\mathrm{r}_{\mathrm{u}}}^{{\mathrm{H}}[0]}   \mathbf{K}_{\mathrm{r}_{\mathrm{u}}} \mathbf{B}\right) \odot\left( \mathbf{k}_{q,\mathrm{l}}^{\mathrm{H}}\right) \right. \\ & \left. \medmath{
 \odot\left(\mathbf{1}_{{M}_1}^{\mathrm{H}}\mathbf{D}_{\mathrm{\gamma}}^{} \right)  \odot \left(\mathbf{A}_{\mathrm{t}_{\mathrm{u}}}^{\mathrm{H}}  \mathbf{F}   \mathbf{F}^{\mathrm{H}} \mathbf{a}_{\mathrm{t}_{\mathrm{u}}}^{[\mathrm{0}]}\right)^{\mathrm{T}} \odot \mathbf{R}_{0}^{\mathrm{L}_1}}\right\}, \label{equ:FIM_exact_submat_theta_r_u_0_theta_r_u} \end{split}\\
 \begin{split}
\medmath{\mathbf{J}_{\bm{\theta}_{\mathrm{r}_{\mathrm{u}}}^{\mathrm{[0]}} \bm{\phi}_{\mathrm{r}_{\mathrm{u}}}}}
 &=   \medmath{\frac{2}{\sigma^2}\sum_{q = 1}^{N_{Q}}\Re}\left\{\medmath{\left(\beta^{{\mathrm{H}}[\mathrm{0}]} \mathbf{a}_{\mathrm{r}_{\mathrm{u}}}^{{\mathrm{H}}[0] } {\mathbf{K}}_{\mathrm{r}_{\mathrm{u}}}^{{\mathrm{H}}[0]}   \mathbf{P}_{\mathrm{r}_{\mathrm{u}}} \mathbf{B}\right) \odot\left( \mathbf{k}_{q,\mathrm{l}}^{\mathrm{H}}\right)}  \right. \\ & \left.  \medmath{
 \odot\left(\mathbf{1}_{{M}_1}^{\mathrm{H}}\mathbf{D}_{\mathrm{\gamma}}^{} \right) \odot \left(\mathbf{A}_{\mathrm{t}_{\mathrm{u}}}^{\mathrm{H}}  \mathbf{F}   \mathbf{F}^{\mathrm{H}} \mathbf{a}_{\mathrm{t}_{\mathrm{u}}}^{[\mathrm{0}]}\right)^{\mathrm{T}} \odot \mathbf{R}_{0}^{\mathrm{L}_1}}\right\}, \label{equ:FIM_exact_submat_theta_r_u_0_phi_r_u} 
  \end{split}
 \\
  \begin{split}
\medmath{\mathbf{J}_{\bm{\theta}_{\mathrm{r}_{\mathrm{u}}}^{\mathrm{[0]}} \bm{\theta}_{\mathrm{t}_{\mathrm{l}}}}}
 &= \medmath{ - \frac{2}{\sigma^2}\sum_{q = 1}^{N_{Q}}\Re}\left\{\medmath{\left(\beta^{{\mathrm{H}}[\mathrm{0}]} \mathbf{a}_{\mathrm{r}_{\mathrm{u}}}^{{\mathrm{H}}[0] } {\mathbf{K}}_{\mathrm{r}_{\mathrm{u}}}^{{\mathrm{H}}[0]}   \mathbf{A}_{\mathrm{r}_{\mathrm{u}}} \mathbf{B}\right) \odot\left( \mathbf{k}_{q,\mathrm{t}_{\mathrm{l}}}^{\mathrm{H}}\right)} \right. \\ & \left. \medmath{
 \odot\left(\mathbf{1}_{{M}_1}^{\mathrm{H}}\mathbf{D}_{\mathrm{\gamma}}^{} \right) \odot \left(\mathbf{A}_{\mathrm{t}_{\mathrm{u}}}^{\mathrm{H}}  \mathbf{F}   \mathbf{F}^{\mathrm{H}} \mathbf{a}_{\mathrm{t}_{\mathrm{u}}}^{[\mathrm{0}]}\right)^{\mathrm{T}} \odot \mathbf{R}_{0}^{\mathrm{L}_1}}\right\}, \label{equ:FIM_exact_submat_theta_r_u_0_theta_t_l} 
  \end{split}
 \\
   \begin{split}
\medmath{\mathbf{J}_{\bm{\theta}_{\mathrm{r}_{\mathrm{u}}}^{\mathrm{[0]}} \bm{\phi}_{\mathrm{t}_{\mathrm{l}}}}}
 &=  \medmath{- \frac{2}{\sigma^2}\sum_{q = 1}^{N_{Q}}\Re}\left\{\medmath{\left(\beta^{{\mathrm{H}}[\mathrm{0}]} \mathbf{a}_{\mathrm{r}_{\mathrm{u}}}^{{\mathrm{H}}[0] } {\mathbf{K}}_{\mathrm{r}_{\mathrm{u}}}^{{\mathrm{H}}[0]}   \mathbf{A}_{\mathrm{r}_{\mathrm{u}}} \mathbf{B}\right) \odot\left( \mathbf{p}_{q,\mathrm{t}_{\mathrm{l}}}^{\mathrm{H}}\right)}
 \right. \\ & \left. \medmath{\odot\left(\mathbf{1}_{{M}_1}^{\mathrm{H}}\mathbf{D}_{\mathrm{\gamma}}^{} \right) \odot \left(\mathbf{A}_{\mathrm{t}_{\mathrm{u}}}^{\mathrm{H}}  \mathbf{F}   \mathbf{F}^{\mathrm{H}} \mathbf{a}_{\mathrm{t}_{\mathrm{u}}}^{[\mathrm{0}]}\right)^{\mathrm{T}} \odot \mathbf{R}_{0}^{\mathrm{L}_1}}\right\}, \label{equ:FIM_exact_submat_theta_r_u_0_phi_t_l}
  \end{split}
 \\
 \begin{split}
\medmath{\mathbf{J}_{\bm{\theta}_{\mathrm{r}_{\mathrm{u}}}^{\mathrm{[0]}} \bm{\theta}_{\mathrm{r}_{\mathrm{l}}}}}
 &=   \medmath{\frac{2}{\sigma^2}\sum_{q = 1}^{N_{Q}}\Re}\left\{\medmath{\left(\beta^{{\mathrm{H}}[\mathrm{0}]} \mathbf{a}_{\mathrm{r}_{\mathrm{u}}}^{{\mathrm{H}}[0] } {\mathbf{K}}_{\mathrm{r}_{\mathrm{u}}}^{{\mathrm{H}}[0]}   \mathbf{A}_{\mathrm{r}_{\mathrm{u}}} \mathbf{B}\right) \odot\left( \mathbf{k}_{q,\mathrm{r}_{\mathrm{l}}}^{\mathrm{H}}\right)}
\right. \\ & \left. \medmath{ \odot\left(\mathbf{1}_{{M}_1}^{\mathrm{H}}\mathbf{D}_{\mathrm{\gamma}}^{} \right) \odot \left(\mathbf{A}_{\mathrm{t}_{\mathrm{u}}}^{\mathrm{H}}  \mathbf{F}   \mathbf{F}^{\mathrm{H}} \mathbf{a}_{\mathrm{t}_{\mathrm{u}}}^{[\mathrm{0}]}\right)^{\mathrm{T}} \odot \mathbf{R}_{0}^{\mathrm{L}_1}}\right\}, \label{equ:FIM_exact_submat_theta_r_u_0_theta_r_l}
  \end{split}
 \\
  \begin{split}
 \medmath{\mathbf{J}_{\bm{\theta}_{\mathrm{r}_{\mathrm{u}}}^{\mathrm{[0]}} \bm{\phi}_{\mathrm{r}_{\mathrm{l}}}}}
 &=  \medmath{ \frac{2}{\sigma^2}\sum_{q = 1}^{N_{Q}}\Re}\left\{\medmath{\left(\beta^{{\mathrm{H}}[\mathrm{0}]} \mathbf{a}_{\mathrm{r}_{\mathrm{u}}}^{{\mathrm{H}}[0] } {\mathbf{K}}_{\mathrm{r}_{\mathrm{u}}}^{{\mathrm{H}}[0]}   \mathbf{A}_{\mathrm{r}_{\mathrm{u}}} \mathbf{B}\right) \odot\left( \mathbf{p}_{q,\mathrm{r}_{\mathrm{l}}}^{\mathrm{H}}\right) 
} \right. \\ & \left.  \medmath{ \odot\left(\mathbf{1}_{{M}_1}^{\mathrm{H}}\mathbf{D}_{\mathrm{\gamma}}^{} \right) \odot \left(\mathbf{A}_{\mathrm{t}_{\mathrm{u}}}^{\mathrm{H}}  \mathbf{F}   \mathbf{F}^{\mathrm{H}} \mathbf{a}_{\mathrm{t}_{\mathrm{u}}}^{[\mathrm{0}]}\right)^{\mathrm{T}} \odot \mathbf{R}_{0}^{\mathrm{L}_1}}\right\}, \label{equ:FIM_exact_submat_theta_r_u_0_phi_r_l}   \end{split}
 \\
   \begin{split}
\medmath{\mathbf{J}_{\bm{\theta}_{\mathrm{r}_{\mathrm{u}}}^{\mathrm{[0]}} \bm{\theta}_{\mathrm{t}_{\mathrm{u}}}}}
 &=   \medmath{-\frac{2}{\sigma^2}\sum_{q = 1}^{N_{Q}}\Re}\left\{\medmath{\left(\beta^{{\mathrm{H}}[\mathrm{0}]} \mathbf{a}_{\mathrm{r}_{\mathrm{u}}}^{{\mathrm{H}}[0] } {\mathbf{K}}_{\mathrm{r}_{\mathrm{u}}}^{{\mathrm{H}}[0]}   \mathbf{A}_{\mathrm{r}_{\mathrm{u}}} \mathbf{B}\right) \odot\left( \mathbf{k}_{q,\mathrm{l}}^{\mathrm{H}}\right)} \right. \\ & \left.  \medmath{
 \odot\left(\mathbf{1}_{{M}_1}^{\mathrm{H}}\mathbf{D}_{\mathrm{\gamma}}^{} \right) \odot \left(\mathbf{K}_{\mathrm{t}_{\mathrm{u}}}^{\mathrm{H}}  \mathbf{F}   \mathbf{F}^{\mathrm{H}} \mathbf{a}_{\mathrm{t}_{\mathrm{u}}}^{[\mathrm{0}]}\right)^{\mathrm{T}} \odot \mathbf{R}_{0}^{\mathrm{L}_1}}\right\}, \label{equ:FIM_exact_submat_theta_r_u_0_theta_t_u}  \end{split}
 \\
    \begin{split}
\medmath{\mathbf{J}_{\bm{\theta}_{\mathrm{r}_{\mathrm{u}}}^{\mathrm{[0]}} \bm{\phi}_{\mathrm{t}_{\mathrm{u}}}}}
 &=   \medmath{-\frac{2}{\sigma^2}\sum_{q = 1}^{N_{Q}}\Re}\left\{\medmath{\left(\beta^{{\mathrm{H}}[\mathrm{0}]} \mathbf{a}_{\mathrm{r}_{\mathrm{u}}}^{{\mathrm{H}}[0] } {\mathbf{K}}_{\mathrm{r}_{\mathrm{u}}}^{{\mathrm{H}}[0]}   \mathbf{A}_{\mathrm{r}_{\mathrm{u}}} \mathbf{B}\right) \odot\left( \mathbf{k}_{q,\mathrm{l}}^{\mathrm{H}}\right)
}  \right. \\ & \left. \medmath{ \odot\left(\mathbf{1}_{{M}_1}^{\mathrm{H}}\mathbf{D}_{\mathrm{\gamma}}^{} \right) \odot \left(\mathbf{P}_{\mathrm{t}_{\mathrm{u}}}^{\mathrm{H}}  \mathbf{F}   \mathbf{F}^{\mathrm{H}} \mathbf{a}_{\mathrm{t}_{\mathrm{u}}}^{[\mathrm{0}]}\right)^{\mathrm{T}} \odot \mathbf{R}_{0}^{\mathrm{L}_1}}\right\}, \label{equ:FIM_exact_submat_theta_r_u_0_phi_t_u}
  \end{split}
 \\
 \begin{split}
\medmath{\mathbf{J}_{\bm{\theta}_{\mathrm{r}_{\mathrm{u}}}^{\mathrm{[0]}} \bm{\tau}}}
 &=   \medmath{\frac{2}{\sigma^2}\sum_{q = 1}^{N_{Q}}\Re} \left\{\medmath{\left(\beta^{{\mathrm{H}}[\mathrm{0}]} \mathbf{a}_{\mathrm{r}_{\mathrm{u}}}^{{\mathrm{H}}[0] } {\mathbf{K}}_{\mathrm{r}_{\mathrm{u}}}^{{\mathrm{H}}[0]}   \mathbf{A}_{\mathrm{r}_{\mathrm{u}}} \mathbf{B}\right) \odot\left( \mathbf{k}_{q,\mathrm{l}}^{\mathrm{H}}\right)}  \right. \\ & \left.  \medmath{
 \odot\left(\mathbf{1}_{{M}_1}^{\mathrm{H}}\mathbf{D}_{\mathrm{\gamma}}^{} \right) \odot \left(\mathbf{A}_{\mathrm{t}_{\mathrm{u}}}^{\mathrm{H}}  \mathbf{F}   \mathbf{F}^{\mathrm{H}} \mathbf{a}_{\mathrm{t}_{\mathrm{u}}}^{[\mathrm{0}]}\right)^{\mathrm{T}} \odot \mathbf{R}_{1}^{\mathrm{L}}}\right\}, \label{equ:FIM_exact_submat_theta_r_u_0_tau} 
   \end{split}
\\
  \begin{split}
\medmath{\mathbf{J}_{\bm{\theta}_{\mathrm{r}_{\mathrm{u}}}^{\mathrm{[0]}} \bm{\beta}_{\mathrm{I}}} }&+\medmath{j\mathbf{J}_{\bm{\theta}_{\mathrm{r}_{\mathrm{u}}}^{\mathrm{[0]}} \bm{\beta}_{\mathrm{R}}}}
 =    \medmath{-\frac{2}{\sigma^2}\sum_{q = 1}^{N_{Q}}}\left\{\medmath{\left(\beta^{{\mathrm{H}}[\mathrm{0}]} \mathbf{a}_{\mathrm{r}_{\mathrm{u}}}^{{\mathrm{H}}[0] } {\mathbf{K}}_{\mathrm{r}_{\mathrm{u}}}^{{\mathrm{H}}[0]}   \mathbf{A}_{\mathrm{r}_{\mathrm{u}}}  \right) \odot\left( \mathbf{k}_{q,\mathrm{l}}^{\mathrm{H}}\right)}  \right. \\ & \left. \medmath{
 \odot\left(\mathbf{1}_{{M}_1}^{\mathrm{H}}\mathbf{D}_{\mathrm{\gamma}}^{}\right)
 \odot\left(\mathbf{A}_{\mathrm{t}_{\mathrm{u}}}^{\mathrm{H}}  \mathbf{F}   \mathbf{F}^{\mathrm{H}} \mathbf{a}_{\mathrm{t}_{\mathrm{u}}}^{[\mathrm{0}]}\right)^{\mathrm{T}} \odot \mathbf{R}_{0}^{\mathrm{L}_1}}\right\}.\label{equ:FIM_exact_submat_theta_r_u_0_beta}
   \end{split}
 \end{align}
\end{subequations}
Other entries related to $\bm{\psi}_{}/\{\bm{\theta}_{\mathrm{r}_{\mathrm{u}}}^{\mathrm{[0]}}\}$ can be obtained similarly by making appropriate matrix substitutions.

\subsection{Entries of the EFIM of the LOS}
\label{appendix:EFIM_Los}
 Defining $\mathbf{R}_{0}^{\mathrm{L}_2}$ as the signal factor considering just the LOS path; the entries in the FIM $\mathbf{J}_{\psi}^{\mathrm{D}} \in \mathbb{R}^{7 \times 7}$ are presented below
	\begin{subequations}[equ:FIM_exact_submat_theta_r_u_0_1]
\begin{align}
\begin{split}
\medmath{\mathbf{J}_{\bm{\theta}_{\mathrm{r}_{\mathrm{u}}}^{\mathrm{[0]}} \bm{\theta}_{\mathrm{r}_{\mathrm{u}}}^{\mathrm{[0]}}}}
 &=   \medmath{\frac{2}{\sigma^2}\Re}\left\{\medmath{\left(T\left| \beta^{{\mathrm{}}[\mathrm{0}]} \right|^2  \mathbf{a}_{\mathrm{r}_{\mathrm{u}}}^{{\mathrm{H}}[0] } {\mathbf{K}}_{\mathrm{r}_{\mathrm{u}}}^{{\mathrm{H}}[0]} {\mathbf{K}}_{\mathrm{r}_{\mathrm{u}}}^{{}[0]} \mathbf{a}_{\mathrm{r}_{\mathrm{u}}}^{{}[0] } \right)} \right. \\ & \left.   \medmath{
 \odot \left(\mathbf{a}_{\mathrm{t}_{\mathrm{u}}}^{\mathrm{H} [\mathrm{0}]}  \mathbf{F}   \mathbf{F}^{\mathrm{H}}  \mathbf{a}_{\mathrm{t}_{\mathrm{u}}}^{[\mathrm{0}]}\right)^{\mathrm{T}} \odot \mathbf{R}_{0}^{\mathrm{L}_2}}\right\}, \label{equ:FIM_exact_submat_theta_r_u_0_theta_r_u_0} \end{split} \\
\begin{split}
\medmath{\mathbf{J}_{\bm{\theta}_{\mathrm{r}_{\mathrm{u}}}^{\mathrm{[0]}} \bm{\phi}_{\mathrm{r}_{\mathrm{u}}}^{\mathrm{[0]}}}}
 &=  \medmath{ \frac{2}{\sigma^2}\Re}\left\{\medmath{\left(T\left| \beta^{{\mathrm{}}[\mathrm{0}]} \right|^2  \mathbf{a}_{\mathrm{r}_{\mathrm{u}}}^{{\mathrm{H}}[0] } {\mathbf{K}}_{\mathrm{r}_{\mathrm{u}}}^{{\mathrm{H}}[0]} {\mathbf{P}}_{\mathrm{r}_{\mathrm{u}}}^{{}[0]} \mathbf{a}_{\mathrm{r}_{\mathrm{u}}}^{{}[0] } \right) } \right. \\ & \left.   \medmath{ 
 \odot \left(\mathbf{a}_{\mathrm{t}_{\mathrm{u}}}^{\mathrm{H} [\mathrm{0}]}  \mathbf{F}   \mathbf{F}^{\mathrm{H}}  \mathbf{a}_{\mathrm{t}_{\mathrm{u}}}^{[\mathrm{0}]}\right)^{\mathrm{T}} \odot \mathbf{R}_{0}^{\mathrm{L}_2}}\right\}, \label{equ:FIM_exact_submat_theta_r_u_0_phi_r_u_0} \end{split} \\
\begin{split}
\medmath{\mathbf{J}_{\bm{\theta}_{\mathrm{r}_{\mathrm{u}}}^{\mathrm{[0]}} \bm{\theta}_{\mathrm{t}_{\mathrm{u}}}^{\mathrm{[0]}}}}
 &=  \medmath{ -\frac{2}{\sigma^2}\Re}\left\{\medmath{\left(T\left| \beta^{{\mathrm{}}[\mathrm{0}]} \right|^2  \mathbf{a}_{\mathrm{r}_{\mathrm{u}}}^{{\mathrm{H}}[0] } {\mathbf{K}}_{\mathrm{r}_{\mathrm{u}}}^{{\mathrm{H}}[0]}  \mathbf{a}_{\mathrm{r}_{\mathrm{u}}}^{{}[0] } \right) }  \right. \\ & \left.   \medmath{
 \odot \left(\mathbf{a}_{\mathrm{t}_{\mathrm{u}}}^{\mathrm{H} [\mathrm{0}]} {\mathbf{K}}_{\mathrm{t}_{\mathrm{u}}}^{\mathrm{H}[0]}  \mathbf{F}   \mathbf{F}^{\mathrm{H}}  \mathbf{a}_{\mathrm{t}_{\mathrm{u}}}^{[\mathrm{0}]}\right)^{\mathrm{T}} \odot \mathbf{R}_{0}^{\mathrm{L}_2}}\right\}, \label{equ:FIM_exact_submat_theta_r_u_0_theta_t_u_0} \end{split} \\
  \begin{split}
\medmath{\mathbf{J}_{\bm{\theta}_{\mathrm{r}_{\mathrm{u}}}^{\mathrm{[0]}} \bm{\phi}_{\mathrm{t}_{\mathrm{u}}}^{\mathrm{[0]}}}}
 &=   \medmath{-\frac{2}{\sigma^2}\Re}\left\{\medmath{\left(T\left| \beta^{{\mathrm{}}[\mathrm{0}]} \right|^2  \mathbf{a}_{\mathrm{r}_{\mathrm{u}}}^{{\mathrm{H}}[0] } {\mathbf{K}}_{\mathrm{r}_{\mathrm{u}}}^{{\mathrm{H}}[0]}  \mathbf{a}_{\mathrm{r}_{\mathrm{u}}}^{{}[0] } \right) }  \right. \\ & \left.   \medmath{
 \odot \left(\mathbf{a}_{\mathrm{t}_{\mathrm{u}}}^{\mathrm{H} [\mathrm{0}]} {\mathbf{P}}_{\mathrm{t}_{\mathrm{u}}}^{\mathrm{H}[0]}  \mathbf{F}   \mathbf{F}^{\mathrm{H}}  \mathbf{a}_{\mathrm{t}_{\mathrm{u}}}^{[\mathrm{0}]}\right)^{\mathrm{T}} \odot \mathbf{R}_{0}^{\mathrm{L}_2}}\right\}, \label{equ:FIM_exact_submat_theta_r_u_0_phi_t_u_0}  \end{split} \\
  \begin{split}
\medmath{\mathbf{J}_{\bm{\theta}_{\mathrm{r}_{\mathrm{u}}}^{\mathrm{[0]}} \bm{\tau}^{\mathrm{[0]}}}}
 &=  \medmath{ \frac{2}{\sigma^2}\Re}\left\{\medmath{\left(T\left| \beta^{{\mathrm{}}[\mathrm{0}]} \right|^2  \mathbf{a}_{\mathrm{r}_{\mathrm{u}}}^{{\mathrm{H}}[0] } {\mathbf{K}}_{\mathrm{r}_{\mathrm{u}}}^{{\mathrm{H}}[0]}  \mathbf{a}_{\mathrm{r}_{\mathrm{u}}}^{{}[0] } \right) }  \right. \\ & \left.   \medmath{
 \odot \left(\mathbf{a}_{\mathrm{t}_{\mathrm{u}}}^{\mathrm{H} [\mathrm{0}]}   \mathbf{F}   \mathbf{F}^{\mathrm{H}}  \mathbf{a}_{\mathrm{t}_{\mathrm{u}}}^{[\mathrm{0}]}\right)^{\mathrm{T}} \odot \mathbf{R}_{1}^{\mathrm{L}_2}}\right\}, \label{equ:FIM_exact_submat_theta_r_u_0_tau_0} \end{split}\\
  \begin{split}
\medmath{\mathbf{J}_{\bm{\theta}_{\mathrm{r}_{\mathrm{u}}}^{\mathrm{[0]}}\bm{\beta}_{\mathrm{I}}^{\mathrm{[0]}} }} &\medmath{+j\mathbf{J}_{\bm{\theta}_{\mathrm{r}_{\mathrm{u}}}^{\mathrm{[0]}}\bm{\beta}_{\mathrm{R}}^{\mathrm{[0]}} }}
 =   \medmath{- \frac{2}{\sigma^2}}\left\{\medmath{\left(T \beta^{{\mathrm{H}}[\mathrm{0}]}   \mathbf{a}_{\mathrm{r}_{\mathrm{u}}}^{{\mathrm{H}}[0] } {\mathbf{K}}_{\mathrm{r}_{\mathrm{u}}}^{{\mathrm{H}}[0]}  \mathbf{a}_{\mathrm{r}_{\mathrm{u}}}^{{}[0] } \right) } \right. \\ & \left.   \medmath{
 \odot \left(\mathbf{a}_{\mathrm{t}_{\mathrm{u}}}^{\mathrm{H} [\mathrm{0}]}   \mathbf{F}   \mathbf{F}^{\mathrm{H}}  \mathbf{a}_{\mathrm{t}_{\mathrm{u}}}^{[\mathrm{0}]}\right)^{\mathrm{T}} \odot \mathbf{R}_{0}^{\mathrm{L}_2}}\right\}. \label{equ:FIM_exact_submat_theta_r_u_0_beta_0}
 \end{split}
 \end{align}
\end{subequations}
The other entries related to $\bm{\psi}_{}/\{\bm{\theta}_{\mathrm{r}_{\mathrm{u}}}^{\mathrm{[0]}}  \}$ can be obtained similarly by making appropriate matrix substitutions. Subsequently, the EFIM $\Bar{\mathbf{J}}_{ {\bm{\psi}}_{1}}^{\mathrm{e}}$ can be obtained as $\Bar{\mathbf{J}}_{ {\bm{\psi}}_{1}}^{\mathrm{e}}  = {\mathbf{J}}_{ {{{\bm{\psi}}^{}_{1}
}} {\bm{\psi}}^{}_{1}} - {\mathbf{J}}_{ {{{\bm{\psi}}^{}_{1}
}} {\bm{\psi}}^{}_{2}} {\mathbf{J}}_{ {{{\bm{\psi}}^{}_{2}
}} {\bm{\psi}}^{}_{2}}^{\mathrm{-1}} {\mathbf{J}}_{ {{{\bm{\psi}}^{}_{1}
}} {\bm{\psi}}^{}_{2}}^{\mathrm{T}} $. 
\subsection{RIS Related Entries in the Transformation Matrix}
\label{appendix:entries_transformation}
 The non-zero terms related to the angle of departure at the BS are derived. 
The non-zero derivatives of $\theta_{\mathrm{t}_{\mathrm{u}}}^{[{m}]}$ are $\frac{\partial {\theta}_{\mathrm{t}_{\mathrm{u}}}^{[{m}]}}{\partial \mathbf{p}^{[{m}]}} = \frac{ \left[p_{x}^{[{m}]} g_{z}^{[{m}]}, p_{y}^{[{m}]} g_{z}^{[{m}]},\|\mathbf{g}^{[{m}]}_{{}}\|^2 -p_{z}^{[{m}]} g_{z}^{[{m}]} \right]^{\mathrm{T}} }{\|\mathbf{g}^{[{m}]}_{{}}\|^2\sqrt{({g}^{[{m}]}_{x_{}})^2+({g}^{[{m}]}_{y_{}})^2}} 
$
and the non-zero derivatives  of $\phi_{\mathrm{t}_{\mathrm{u}}}^{[{m}]}$ are $\frac{\partial {\phi}_{\mathrm{t}_{\mathrm{u}}}^{[{m}]}}{\partial \mathbf{p}^{[{m}]}} = \frac{ \left[ -g_{y}^{[{m}]} , g_{x}^{[{m}]}, 0 \right]^{\mathrm{T}} }{{({g}^{[{m}]}_{x_{}})^2+({g}^{[{m}]}_{y_{}})^2}} 
$. The non-zero terms related to the angle of arrival at the RIS, $( \mathbf{\theta}_{\mathrm{r}_{\mathrm{l}}}^{[{m}]}, \mathbf{\phi}_{\mathrm{r}_{\mathrm{l}}} ^{[{m}]})$ are derived next. The non-zero derivatives of $\theta_{\mathrm{r}_{\mathrm{l}}}^{[{m}]}$ are $\frac{\partial {\theta}_{\mathrm{r}_{\mathrm{l}}}^{[{m}]}}{\partial {\theta}^{[{m}]}_{0}} =-\frac{{c}^{[{m}]}_{\Tilde{y}}}{\sqrt{({c}^{[{m}]}_{\Tilde{x}})^2+({c}^{[{m}]}_{\Tilde{y}})^2}}$, $\frac{\partial {\theta}_{\mathrm{r}_{\mathrm{l}}}^{[{m}]}}{\partial {\phi}^{[{m}]}_{0}} = \frac{ {c}^{[{m}]}_{\Tilde{x}} \sin \theta_{0}^{[{m}]}}{\sqrt{({c}^{[{m}]}_{\Tilde{x}})^2+({c}^{[{m}]}_{\Tilde{y}})^2}}$, and


$$
\frac{\partial {\theta}_{\mathrm{r}_{\mathrm{l}}}^{[{m}]}}{\partial \mathbf{p}^{[{m}]}} = \frac{ 1}{\sqrt{({c}^{[{m}]}_{\Tilde{x}})^2+({c}^{[{m}]}_{\Tilde{y}})^2}} \left( \mathbf{q}_{3}^{[{m}]} - \frac{{c}^{[{m}]}_{\Tilde{z}}  \mathbf{c}^{[{m}]}}{\|\Tilde{\mathbf{c}}^{[{m}]}_{}\|^2} \right).
$$

The non-zero derivatives of ${\phi}_{\mathrm{r}_{\mathrm{l}}}^{[{m}]}$ are $\frac{\partial {\phi}_{\mathrm{r}_{\mathrm{l}}}^{[{m}]}}{\partial {\theta}^{[{m}]}_{0}} =-\frac{{c}^{[{m}]}_{\Tilde{x}} {c}^{[{m}]}_{\Tilde{z}}}{{({c}^{[{m}]}_{\Tilde{x}})^2+({c}^{[{m}]}_{\Tilde{y}})^2}}$, $\frac{\partial {\phi}_{\mathrm{r}_{\mathrm{l}}}^{[{m}]}}{\partial \mathbf{p}^{[{m}]}} =-\frac{(\mathbf{q}_{2}^{[{m}]}{\mathbf{q}_{1}^{[{m}]}}^{\mathrm{T}} - \mathbf{q}_{1}^{[{m}]}{\mathbf{q}_{2}^{[{m}]}}^{\mathrm{T}})\mathbf{c}^{[{m}]}}{{({c}^{[{m}]}_{\Tilde{x}})^2+({c}^{[{m}]}_{\Tilde{y}})^2}}$,

and
$$
\frac{\partial {\phi}_{\mathrm{r}_{\mathrm{l}}}^{[{m}]}}{\partial {\phi}^{[{m}]}_{0}} =\frac{-({c}^{[{m}]}_{\Tilde{x}})^2 \cos \theta_{0}^{[{m}]}+\left(c_{x}^{[{m}]} \sin \phi_{0}^{[{m}]}-c_{y}^{[{m}]} \cos \phi_{0}^{[{m}]}\right) {({c}^{[{m}]}_{\Tilde{y}})}}{{({c}^{[{m}]}_{\Tilde{x}})^2+({c}^{[{m}]}_{\Tilde{y}})^2}}.
$$

Except, the non-zero derivatives of $\theta_{\mathrm{t}_{\mathrm{l}}}^{[{m}]}$ and ${\phi}_{\mathrm{t}_{\mathrm{l}}}^{[{m}]}$ related to the UE position which are given by
$
\frac{\partial {\theta}_{\mathrm{t}_{\mathrm{l}}}^{[{m}]}}{\partial \mathbf{p}^{}} = -\frac{ 1}{\sqrt{({v}^{[{m}]}_{\Tilde{x}})^2+({v}^{[{m}]}_{\Tilde{y}})^2}} \left( \mathbf{q}_{3}^{[{m}]} - \frac{{v}^{[{m}]}_{\Tilde{z}}  \mathbf{v}^{[{m}]}}{\|\Tilde{\mathbf{v}}^{[{m}]}_{}\|^2} \right), 
$
$
\frac{\partial {\phi}_{\mathrm{t}_{\mathrm{l}}}^{[{m}]}}{\partial \mathbf{p}^{}} =\frac{(\mathbf{q}_{2}^{[{m}]}{\mathbf{q}_{1}^{[{m}]}}^{\mathrm{T}} - \mathbf{q}_{1}^{[{m}]}{\mathbf{q}_{2}^{[{m}]}}^{\mathrm{T}})\mathbf{v}^{[{m}]}}{{({v}^{[{m}]}_{\Tilde{x}})^2+({v}^{[{m}]}_{\Tilde{y}})^2}},
$
all other non-zero derivatives can be obtained by replacing $c$ in the derivatives for $\theta_{\mathrm{r}_{\mathrm{l}}}^{[{m}]}$ and ${\phi}_{\mathrm{r}_{\mathrm{l}}}^{[{m}]}$ with $v$. The non-zero derivatives of $\theta_{\mathrm{r}_{\mathrm{u}}}^{[{m}]}$ are $\frac{\partial {\theta}_{\mathrm{r}_{\mathrm{u}}}^{[{m}]}}{\partial {\theta}^{}_{0}} =-\frac{{e}^{[{m}]}_{\Tilde{y}}}{\sqrt{({e}^{[{m}]}_{\Tilde{x}})^2+({e}^{[{m}]}_{\Tilde{y}})^2}}$, $\frac{\partial {\theta}_{\mathrm{r}_{\mathrm{u}}}^{[{m}]}}{\partial {\phi}^{}_{0}} = \frac{ {e}^{[{m}]}_{\Tilde{x}} \sin \theta_{0}^{}}{\sqrt{({e}^{[{m}]}_{\Tilde{x}})^2+({e}^{[{m}]}_{\Tilde{y}})^2}}$, $\frac{\partial {\theta}_{\mathrm{r}_{\mathrm{u}}}^{[{m}]}}{\partial \mathbf{p}^{[{m}]}} = -\frac{ 1}{\sqrt{({e}^{[{m}]}_{\Tilde{x}})^2+({e}^{[{m}]}_{\Tilde{y}})^2}} \left( \mathbf{q}_{3}^{[{m}]} - \frac{{e}^{[{m}]}_{\Tilde{z}}  \mathbf{e}^{[{m}]}}{\|\Tilde{\mathbf{e}}^{[{m}]}_{}\|^2} \right) $, and $\frac{\partial {\theta}_{\mathrm{r}_{\mathrm{u}}}^{[{m}]}}{\partial \mathbf{p}^{}} = \frac{ 1}{\sqrt{({e}^{[{m}]}_{\Tilde{x}})^2+({e}^{[{m}]}_{\Tilde{y}})^2}} \left( \mathbf{q}_{3}^{[{m}]} - \frac{{e}^{[{m}]}_{\Tilde{z}}  \mathbf{e}^{[{m}]}}{\|\Tilde{\mathbf{e}}^{[{m}]}_{}\|^2} \right). \\
$



The non-zero derivatives of ${\phi}_{\mathrm{r}_{\mathrm{u}}}^{[{m}]}$ are $\frac{\partial {\phi}_{\mathrm{r}_{\mathrm{u}}}^{[{m}]}}{\partial {\theta}^{}_{0}} =-\frac{{e}^{[{m}]}_{\Tilde{x}} {e}^{[{m}]}_{\Tilde{z}}}{{({e}^{[{m}]}_{\Tilde{x}})^2+({e}^{[{m}]}_{\Tilde{y}})^2}}$,  $\frac{\partial {\phi}_{\mathrm{r}_{\mathrm{u}}}^{[{m}]}}{\partial \mathbf{p}^{[{m}]}} =\frac{(\mathbf{q}_{2}^{[{m}]}{\mathbf{q}_{1}^{[{m}]}}^{\mathrm{T}} - \mathbf{q}_{1}^{[{m}]}{\mathbf{q}_{2}^{[{m}]}}^{\mathrm{T}})\mathbf{e}^{[{m}]}}{{({e}^{[{m}]}_{\Tilde{x}})^2+({e}^{[{m}]}_{\Tilde{y}})^2}}
$, $\frac{\partial {\phi}_{\mathrm{r}_{\mathrm{u}}}^{[{m}]}}{\partial \mathbf{p}^{}} =-\frac{(\mathbf{q}_{2}^{[{m}]}{\mathbf{q}_{1}^{[{m}]}}^{\mathrm{T}} - \mathbf{q}_{1}^{[{m}]}{\mathbf{q}_{2}^{[{m}]}}^{\mathrm{T}})\mathbf{e}^{[{m}]}}{{({e}^{[{m}]}_{\Tilde{x}})^2+({e}^{[{m}]}_{\Tilde{y}})^2}}
$, and $\medmath{\frac{\partial {\phi}_{\mathrm{r}_{\mathrm{u}}}^{[{m}]}}{\partial {\phi}^{}_{0}} =\frac{-({e}^{[{m}]}_{\Tilde{x}})^2 \cos \theta_{0}^{}+\left(e_{x}^{[{m}]} \sin \phi_{0}^{[{m}]} - e_{y}^{[{m}]} \cos \phi_{0}^{[{m}]}\right) {({e}^{[{m}]}_{\Tilde{y}})}}{{({e}^{[{m}]}_{\Tilde{x}})^2+({e}^{[{m}]}_{\Tilde{y}})^2}}}.$




The delay related derivatives are $\medmath{\frac{\partial {\tau}_{\mathrm{}}^{[{m}]}}{\partial \mathbf{p}^{[{m}]}} =\frac{\mathbf{p}^{[{m}]} -\mathbf{p}_{\mathrm{BS}}}{c \|\mathbf{p}^{[{m}]} -\mathbf{p}_{\mathrm{BS}}  \|} 
-\frac{\mathbf{p} - \mathbf{p}^{[{m}]}}{c \|\mathbf{p} - \mathbf{p}^{[{m}]} \|} $ and $\frac{\partial {\tau}_{\mathrm{}}^{}}{\partial \mathbf{p}^{}} =\frac{\mathbf{p} - \mathbf{p}^{[{m}]}}{c \|\mathbf{p} - \mathbf{p}^{[{m}]} \|}.}$

\subsection{LOS related Entries in the Transformation Matrix}
\label{appendix:los_entries_transformation}
The non-zero terms related to the LOS angles of departure and LOS angles of arrival can be obtained by making appropriate substitutions in the derivatives of $(\theta_{\mathrm{t}_{\mathrm{u}}}^{[{m}]}, \phi_{\mathrm{t}_{\mathrm{u}}}^{[{m}]})$ and derivatives of $(\theta_{\mathrm{r}_{\mathrm{u}}}^{[{m}]}, \phi_{\mathrm{r}_{\mathrm{u}}}^{[{m}]})$ respectively. The delay related derivatives are $\frac{\partial {\tau}_{\mathrm{}}^{[0]}}{\partial \mathbf{p}^{}} =\frac{\mathbf{p} - \mathbf{p}_{\mathrm{BS}}}{c \|\mathbf{p} - \mathbf{p}_{\mathrm{BS}} \|}.$
\bibliographystyle{IEEEtran}
\bibliography{refs}
\end{document}